\title{Searching in trees with monotonic query times\footnote{Some results of this work have been presented at the 47th International Symposium on Mathematical Foundations of Computer Science (MFCS 2022) \cite{mfcs-DereniowskiW22}.} \footnote{Partially supported by National Science Centre (Poland) grant number 2018/31/B/ST6/00820.}}
\author[1]{\Large Dariusz Dereniowski}
\author[1,2]{\Large Izajasz Wrosz}
\affil[1]{\small Faculty of Electronics, Telecommunications and Informatics, Gda\'{n}sk~University~of~Technology,~Poland}
\affil[2]{\small Intel,~Poland}
\date{}
\theoremstyle{definition}
\newtheorem{definition}{Definition}[section]
\theoremstyle{plain}
\newtheorem{lemma}[definition]{Lemma}
\newtheorem{theorem}[definition]{Theorem}
\newtheorem{observation}[definition]{Observation}
\newtheorem{corollary}[definition]{Corollary}
\newcommand{\visibilityN}[1]{\textup{scr}(#1)}
\newcommand{\visibilityT}[1]{T_{\textup{scr}}(#1)}
\newcommand{\tempU}{u^*}
\newcommand{\tempV}{v^*}
\newcommand{\tempX}{z}
\newcommand{\targetv}{t}
\newcommand{\NP}{\textup{NP}}
\newcommand{\cI}{\mathcal{I}}
\newcommand{\bigo}{\mathcal{O}}
\newcommand{\nat}{\mathbb{N}}
\providecommand{\costFunc}{\omega}
\providecommand{\strategy}{\mathcal{A}}
\providecommand{\costOfStrategyAt}[1]{COST(\strategy,#1)}
\providecommand{\costOfStrategy}{COST(\strategy)}
\providecommand{\OPT}[1]{OPT(#1)}
\providecommand{\strategySequence}[1] {\mathcal{Q}_{\strategy}(T,#1)}
\providecommand{\strategySequenceAt}[2]{\mathcal{Q}_{\strategy,#2}(T,#1)}
\providecommand{\COST}[1]{COST(#1)}
\providecommand{\COSTweights}[2]{COST(#1,#2)}
\providecommand{\inputTree}[2]{#1 = (V, E, #2) }
\providecommand{\rootNode}{r}
\providecommand{\neighbors}[1]{N(#1)}
\providecommand{\decTree}[1]{\mathcal{#1}}
\providecommand{\decTreeFull}[1]{\mathcal{#1} = (\mathcal{V}, \mathcal{E})}
\newcommand{\cV}{\mathcal{V}}
\providecommand{\corrNodes}[2]{\mathcal{#1}(#2)}
\begin{document}

\maketitle

\begin{abstract}
We consider the following generalization of binary search in sorted arrays to tree domains. In each step of the search, an algorithm is querying a vertex $q$, and as a reply, it receives an answer, which either states that $q$ is the desired target, or it gives the neighbor of $q$ that is closer to the target than $q$. A further generalization assumes that a vertex-weight function $\costFunc$ gives the query costs, i.e., the cost of querying $q$ is $\costFunc(q)$. The goal is to find an adaptive search strategy requiring the minimum cost in the worst case. This problem is NP-complete for general weight functions and one of the challenging open questions is whether there exists a polynomial-time constant factor approximation algorithm for an arbitrary tree? In this work, we prove that there exist a constant-factor approximation algorithm for trees with a monotonic cost function, i.e., when the tree has a vertex $v$ such that the weights of the subsequent vertices on the path from $v$ to any leaf give a monotonic (non-increasing or non-decreasing) sequence $S$. This gives a constant factor approximation algorithm for trees with cost functions such that each such sequence $S$ has a fixed number of monotonic segments. Finally, we combine several earlier results to show that the problem is NP-complete when the number of monotonic segments in $S$ is at least $4$.

\medskip
\noindent\textbf{Keywords:} binary search, graph search, approximation algorithm, query complexity 
\end{abstract}

\section{Introduction}

Searching plays a central role in computer science due to its ubiquitous real-world applications as well as its inherent relation to other problems. Many fundamental computational problems, e.g. sorting, can be formulated as searching for an element in an appropriately defined set or apply searching as a subroutine. See for example the classic book \cite{Knuth} or a recent work on relation between sorting and binary search trees \cite{BlellochD23}. The search problem, both in the real world as well as in abstract terms, typically consists of a search space (e.g., a set, a house, a graph), a search target (vertex, or car keys) and a searcher that follows some search strategy. In the engineering of data management systems, searching information efficiently is essential for achieving high performance. Problem-solving by searching is a well established approach in artificial intelligence \cite{AIModernApproach}. In operations research, algorithmic techniques applied to searching significantly improved efficiency of searching for submarines \cite{Hohzaki2016}, to give one example. On the other hand, the works on the naval applications contributed to formulation of numerous theoretical models of search games, e.g., the binary search game.

An important aspect of any search models is the amount of information available to the search agent about the potential location of the search target \cite{AIModernApproach}. The \emph{uninformed search} model is such that no additional information is available beyond the definition of the search space. The depth-first- or breadth-first-search and related models are examples of uninformed search algorithms. On the other hand, the \emph{informed search} models assume that some additional information is available to the search agent. The additional information can be exploited to construct significantly faster search strategies. In practical setting, the additional information comes from exploiting the context of the search problem, e.g., an assumption that alike elements according to some similarity measure are \emph{close} to each other, another practical example is an assumption that elements of the search space are sorted. In a theoretical setting, the additional information can be modeled as an heuristic function or an oracle.

In the category of the informed search models we can further characterize the search models in terms of \emph{when} the additional information is exploited. I.e., we can first calculate a search strategy, e.g., in a form of a decision tree, based on the definition of the search space, which later would be used to locate arbitrary elements of the search space. Using the terms of supervised machine learning, we first learn the search model, which is then being inferred when searching for particular search targets. Consequently, the optimization criteria may be the time to learn the strategy, the accuracy and the time of locating particular search targets. In a contrary online approach, the search strategy is being refined at the same time as locating particular search targets.

\subsection{Problem statement} \label{sec:problem-statement}

We consider the following search problem defined for a node-weighted tree $\inputTree{T}{\costFunc}$, with a query cost function $\costFunc\colon V(T)\rightarrow \mathbb{R_+}$. The goal is to design an adaptive algorithm that we call a \emph{search strategy} or just \emph{strategy} for short.
The strategy is divided into steps and in each step it performs a \emph{query} by selecting some vertex $q$ of $T$.
As an \emph{answer} it receives information that either $q$ is the \emph{target} or otherwise it learns which neighbor of $q$ is closer to the target.\footnote{Alternatively, one may think of the latter as learning which component of $T-q$ contains the target.}
In the former case, the search is completed.
The \emph{cost} of the query is $\costFunc(q)$.

To formally define the search strategy, consider a node weighted tree $\inputTree{T}{\costFunc}$ and a search target $\targetv\in V(T)$. If $\vert V(T)\vert = 1$, then search strategy $\strategy(T, \targetv)$ is the element $\targetv$ itself. Otherwise, let $q\in V(T)$ be the vertex that the strategy queries first when searching in $T$, let $C_{i}$ be the $i$-th connected component of $T\setminus\{q\}$. Then, for $\vert V(T)\vert > 1$, $\strategy(T, \targetv)$ is a tuple $(q, \strategy(C_{1}, t), ..., \strategy(C_{k}, \targetv) )$, where $k=\vert \neighbors{q}\vert$ is the number of neighbors of $q$. We note that this kind of search definition can be understood as a form of a decision tree.

We denote by $\costOfStrategyAt{t}$ the sum of costs of all queries performed by a strategy $\strategy$, when finding $\targetv$.
The \emph{cost} of $\strategy$ is defined as $\costOfStrategy = \max_{\targetv\in V}\costOfStrategyAt{\targetv}$.
The optimization criterion asks for a strategy of minimum cost.

One might think about this search process that with each query the search is narrowed to a subtree of $T$ that can still contain the target.
The process is \emph{adaptive} in the sense that the choice of the subsequent elements to be queried depends on the locations and answers of the previous queries.
Also, we consider only \emph{deterministic} algorithms, where given a fixed sequence of queries performed so far (and the information obtained), the algorithm selects always the same element as the next query for a given tree. E.g., the optimal algorithm for the classic binary search problem on paths that always queries the median element is adaptive and deterministic. It can be equivalently stated that the algorithm calculates a \emph{strategy} for the given input tree, which encodes the queries to be performed for each potential target.

Our problem is a generalization of the classical binary search, where one considers trees instead of paths. While also the generalized problem can be found in the literature under the name of binary search,
it might be illustrative to note that answers to queries over a tree domain are not necessarily binary (\textit{left} or \textit{right}), as the set of possible answers depends on the degree of the queried vertex.

\begin{figure}
	\centering
	\includegraphics[width=1.0\textwidth]{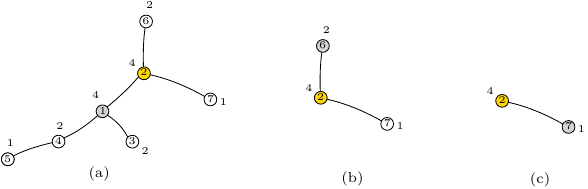}
	\caption{Binary search on weighted trees. The input tree (a) contains a target vertex (2), whose position is unknown to the algorithm. Three queries are performed to the vertices 1, 6, and 7, which incurs the costs of 4, 2 and 1, respectively.
    The subtrees that contain all vertices that are still candidates for the target after the first two queries are shown in (b) and (c).}

	\label{fig:graph_search}
\end{figure}

\subsection{Paper Organization}
The remaining parts of the paper are organized as follows. Section~\ref{sec:motivation} explains the motivation for our work. We describe practical applications of the studied search problem and several use cases for the monotonic structure of the cost function. Section~\ref{sec:earlier-techniques} provides a summary of the techniques and approaches used previously to tackle the problem of binary search in trees. Section~\ref{sec:our-results} describes the main results of this work, which is followed by Section~\ref{sec:related-work} that surveys the related work. Subsequently, in Section~\ref{sec:preliminaries} we introduce the notation used in the paper as well as several essential concepts that are commonly used in the related body of research. In Sections~\ref{sec:structured_trees}~-~\ref{sec:aligned-strategies}, we develop several concepts and techniques (i.e., structured decision trees, aligned strategies and bottom-up processing). These techniques are later used in Sections~\ref{sec:algorithm-up-monotonic}~-~\ref{sec:algorithm-general-monotonic} where we formulate and analyze algorithms for searching in trees with different types of monotonic cost functions. Section~\ref{sec:conclusions} concludes the paper, where we propose final remarks and suggest potential directions for future research.

\subsection{Motivation and applications} \label{sec:motivation}

We study a generalization of one of the most fundamental and well known problems in computer science, i.e., binary search. Since graphs form a natural abstraction for processes in many domains, our generalization into the domain of graphs allows us to seek for potential applications of our algorithms and related problems for example in the domain of large networks analysis \cite{future-big-graphs}. Compared to unordered data structures, it is known that maintaining at least a partial order over the elements improves the performance of fundamental operations like search, update or insert, in terms of the number of comparisons needed \cite{Knuth}. In the context of data management systems, it is known that efficient indexing is crucial to obtain good performance. Some of the indexing schemes have the form of a tree, for example the GIndex. At the same time, calculating optimal search strategies for structures with a partial order is hard in general \cite{trees-node-weighted-hard-DereniowskiN06}. Practical approximation algorithms for searching in large graphs are important for the advancement of systems focused on large graph analysis.

Binary search on trees seen as a graph ranking problem (cf. \cite{Dereniowski08}) can be used to model parallel Cholesky factorization of matrices \cite{application-cholesky-DereniowskiK03}, scheduling of parallel database queries \cite{MakinoUI01}, and VLSI layouts \cite{application-VLSI-SenDG92}. A more general search model, i.e., binary search on graphs has been used as a framework for interactive learning of classifiers, clusterings or rankings \cite{Emamjomeh-ZadehK17,application-learning-Emamjomeh-Zadeh20}.
More applications can be found in software testing \cite{Ben-Asher} or mobile agent computing \cite{BoczkowskiFKR21}.

There are several practical applications of searching where the cost of a query is naturally \emph{higher} in some central vertex and decreases while moving towards the leaf nodes of the tree.
This occurs naturally when considering data access times in computer systems, e.g., due to memory hierarchies of modern processors, characteristics of the storage devices, or distributed nature of data management systems. In the literature on the binary search problem, a Hierarchical Memory Model of computation has been studied \cite{app-monotonic-HMM}, in which the memory access time is monotonic with respect to the location of a data element in the array. Another work was devoted to modeling the problem of text retrieval from magnetic or optical disks, where a cost model was such that the cost of a query was monotonic from the location of the previous query performed in the search process \cite{paths-NavarroBBZC00}.
We also mention an application of tree domains in automated bug search in computer code \cite{Ben-Asher}.
In such case naturally occurs a possibility that the tree to be searched has the monotonicity property.
In particular, each query represents performing an automatic test that determines whether the part of the code that corresponds to the subtree under the tested vertex has an error.
Thus, the vertices that are closer to the root represent larger parts of the code and thus may require more or longer lasting tests. The scale of exa-scale software applications that may consist of several tens of separate software packages motivates the use of advanced software validation strategies, in order to quickly identify software regressions, e.g., when updating versions of several different packages.

An open question has been posed several times about whether a constant-factor approximation algorithm exists for arbitrary weight functions, see e.g. \cite{tree-open-question-Angelidakis,tree-bicriteria-BorowieckiDO21,tree-binary_indentification-CicaleseJLV11}.
The main motivation of this work is to address this question by finding natural input instances for which a constant-factor approximation is achievable.
In particular, we give an algorithm for down-monotonic cost functions and motivated by the theory of parameterized complexity, study a parameter $k$ of the input instance: the maximum number of monotonic segments on any path from a central node to any leaf. The case of $k$ equal to the tree height captures the class of general cost functions.
We see this as a step towards solving the general problem. Our techniques lead to a constant approximation for any fixed $k$. We believe that this method can be improved to allow combining segments of different monotonicity, while preserving the constant factor approximation.

\subsection{Earlier techniques} \label{sec:earlier-techniques}
Historically, the transition from paths to trees had an intermediate step where one wants to search partial orders instead of linear ones \cite{edge-rank-linear-LamY98,trees-MozesOnakW08O}.
Performing comparisons in partial orders leads to a similarly defined graph problem except that one queries edges instead of vertices.
However, each tree instance for querying edges can be solved via finding an optimal search strategy for a tree instance with querying vertices \cite{DereniowskiKUZ17}.
For these reasons, we consider the more general vertex-query version on trees.

It turns out that our problem reappeared several times in theoretical computer science under different names.
It has been called \emph{LIFO-search} \cite{GiannopoulouHT12} in the area of graph pursuit-evasion games, \emph{ordered coloring} \cite{KatchalskiMS95}, minimum height elimination trees \cite{BodlaenderGKH91} in the context of parallel matrix factorization, then vertex ranking \cite{node-ranking-IyerRV88}, and more recently as \emph{tree-depth} \cite{NesetrilM06}.

While listing the main approaches to construct search strategies, we do not differentiate between the edge and vertex versions of the problem as it turns out that the algorithmic techniques we mention are applicable to both.
The majority of works, especially for the unweighted version of the problem utilize the bottom-up tree processing.
In this approach one usually assigns greedily the lowest possible interval and argues that this leads to an optimal solution, see e.g. \cite{Ben-Asher,edge-ranking-IyerRV91,edge-rank-linear-LamY98,trees-MozesOnakW08O,Onak}.
We note here that by `interval' we mean the following interpretation of the cost function: if costs are seen as query durations, then the query interval is the corresponding time period in which the query is processed.
The particular contributions of these works mainly lie in various optimizations that bring down the computational complexity of finding optimal strategies.
In case of non-uniform query times the problem becomes more challenging and keeping in mind that it is NP-complete \cite{tree-edge-weighted-hard-Dereniowski06} a series of works focused on finding approximate solutions.
(We note that the unweighted case has been long known to be linear-time solvable \cite{node-rank-linear-Schaffer89}).
A natural approach, for the case of non-uniform costs, is to query a `central' element of the search space and then computing the strategy recursively for the remaining subtrees \cite{tree-edge-weighted-hard-Dereniowski06,graphs-Emamjomeh-Zadeh16}, which for trees gives $\bigo(\log n)$-factor approximations.
More involved recursive approaches allowed to break the barrier of $\bigo(\log n)$ in \cite{tree-binary_indentification-CicaleseJLV11,tree-weighted-CicaleseKLPV16} giving a $\bigo(\log n/\log\log n)$-approximation.
A rather involved bottom-up approach gave the best approximation ratio of $\bigo(\sqrt{\log n})$ so far \cite{DereniowskiKUZ17}.
The main open question in this line of research is whether either the edge or vertex query model admits a polynomial-time constant factor approximation for weighted trees.
A possible angle at tackling this question is to consider either special subclasses of trees \cite{tree-binary_indentification-CicaleseJLV11} or restricted weight functions \cite{mfcs-DereniowskiW22}.

\subsection{Our results} \label{sec:our-results}

Given a tree $\inputTree{T}{\costFunc}$, we say that a cost function $\costFunc$ is \emph{up-monotonic} (respectively \emph{down-monotonic}) if there exists a vertex $r\in V$ such that for any $u,v \in V$, if $v$ lies on the path between $r$ and $u$ in $T$, then $\costFunc(u)\leq \costFunc(v)$ ($\costFunc(v)\leq \costFunc(u)$, respectively). We say that $\costFunc$ is \emph{monotonic} if it is either up-monotonic or down-monotonic. For this kind of $\omega$ we get the following algorithms.

\begin{theorem} \label{thm:up-monotonic}
There exists a linear-time 8-approximation adaptive search algorithm for an arbitrary tree with up-monotonic cost function.
\end{theorem}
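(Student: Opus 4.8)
The plan is to first reduce to power-of-two weights, then prove a tree-depth lower bound on the optimum and match it with a recursive strategy. I would replace each $\costFunc(v)$ by the largest power of two not exceeding it; this scales every query cost by a factor in $[1,2)$ and preserves up-monotonicity, so any $4$-approximation for power-of-two weights yields an $8$-approximation in general, and henceforth I assume $\costFunc(v)\in\{2^0,2^1,\dots\}$. The structural fact I would rely on is that for every threshold the super-level set $U_{\ge k}=\{v:\costFunc(v)\ge 2^k\}$ is a connected subtree containing the distinguished center $r$: if $\costFunc(u)\ge 2^k$ and $v$ lies on the path between $r$ and $u$, up-monotonicity gives $\costFunc(v)\ge\costFunc(u)\ge 2^k$. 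Thus $U_{\ge K}\subseteq\cdots\subseteq U_{\ge 0}=V$ are nested connected cores and the components of $T\setminus U_{\ge k}$ are pendant subtrees of weight at most $2^{k-1}$. I would also use that in the unweighted model the optimal worst-case number of queries equals the tree-depth $\textup{td}(\cdot)$, computable in linear time (this is the vertex-ranking / tree-depth equivalence recalled above).

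Next I would establish the lower bound $\OPT{T}\ge L$, where $L:=\max_k 2^k\,\textup{td}(U_{\ge k})$. Fix $k$ and consider only targets inside the connected set $U_{\ge k}$. If a strategy queries some $q\notin U_{\ge k}$ then, since $U_{\ge k}$ is connected and avoids $q$, all of $U_{\ge k}$ lies in a single component of $T\setminus q$, so the answer cannot distinguish two vertices of $U_{\ge k}$; hence every query that separates two such candidates is itself a vertex of $U_{\ge k}$, of cost at least $2^k$. Projecting the decision tree onto these targets therefore yields a valid unweighted search of $U_{\ge k}$, whose worst-case depth is at least $\textup{td}(U_{\ge k})$, and the witnessing target pays at least $2^k\,\textup{td}(U_{\ge k})$. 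Taking the maximum over $k$ gives $\OPT{T}\ge L$; a short computation summing over levels further shows that the recursive quantity used in the upper bound is itself $\bigo(L)$, so $L$ is the right target up to a constant.

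For the upper bound I would build the strategy recursively on the current subtree $S$ with heaviest weight $2^k$ and connected heavy core $H=\{v\in S:\costFunc(v)=2^k\}$, off which hang pendants $S_1,\dots,S_m$ of weight at most $2^{k-1}$. The subtle point is the choice of an elimination tree of $H$ that routes into the pendants. At one extreme, a long core carrying a single heavy pendant, one should query that pendant's attachment vertex near the top, so that a target living below the core is charged only $\bigo(1)$ core queries of weight $2^k$ on its way down (``lazy descent''); the per-level costs then form a geometric tail $\sum_{j>j^*}\bigo(2^{j})=\bigo(2^{K})=\bigo(L)$ dominated by the target's own home level $j^*$, which contributes $2^{j^*}\textup{td}(U_{\ge j^*})\le L$. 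At the other extreme, where every core vertex carries a heavy pendant, such lazy descent is impossible, but there the optimum is itself forced to spend $\Theta(2^k\,\textup{td}(H))$ at level $k$, so a balanced resolution of $H$ is within a constant of optimal. I would therefore pick, at each core, separators that are simultaneously balanced for $H$ and that peel off the heaviest pendant mass, interpolating between the two regimes, and charge the resulting weighted cost against $L$; carrying the constant through the recursion is what yields the factor $4$ on power-of-two instances, hence $8$ overall, while a single pass of the linear-time tree-depth routine per core keeps the running time linear.

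The hard part will be exactly this per-core construction. A fixed balanced elimination tree of each level over-charges the single-pendant ``chain'' instances by a logarithmic factor, since every deep target then pays a full tree-depth term at every level it passes through, whereas the naive ``query all attachments first'' rule collapses on the dense ``caterpillar'' instances; the separator choice must therefore interpolate smoothly and provably, so that along every root-to-leaf path only the target's home level contributes a full tree-depth term and all heavier levels contribute a geometrically decaying remainder. Proving that this interpolation loses only a constant uniformly over the whole spectrum of branching patterns, and that the constants multiply out to at most $8$, is where the real work lies; the lower bound and the reduction are comparatively routine.
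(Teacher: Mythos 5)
Your proposal has a genuine gap: the entire upper bound is a plan rather than a proof. The reduction to power-of-two weights and the lower bound $\OPT{T}\ge L:=\max_k 2^k\,\textup{td}(U_{\ge k})$ are sound (the projection argument is essentially the paper's Lemma~\ref{lem:dec-tree-for-subtree}, and the connectivity of the super-level sets is correct). But the matching strategy is never constructed. You yourself identify that the two natural per-core rules each fail on an extreme family --- a balanced elimination tree of each level loses a logarithmic factor on ``chain'' instances, and peeling attachments first fails on ``caterpillar'' instances --- and you then posit a separator choice that ``interpolates smoothly and provably'' between them without defining it, and conclude by stating that proving this interpolation loses only a constant ``is where the real work lies.'' That is precisely the content of the theorem, so nothing has been proved. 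Worse, the approach silently assumes the min-max relation $\OPT{T}=\Theta(L)$, i.e., that the tree-depth lower bound over super-level sets is tight up to a constant for up-monotonic instances; this is a nontrivial structural claim that would itself need proof, and the claimed ``short computation summing over levels'' showing the recursive cost is $\bigo(L)$ is never carried out. The asserted constants ($4$ on power-of-two instances, hence $8$ overall) are likewise unsupported by any computation.

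For contrast, the paper deliberately avoids having to certify any explicit lower bound of the form $L$. It rounds weights up (factor $2$), restricts attention to \emph{structured} decision trees in which each layer-component root dominates its subtree (Lemma~\ref{lem:structuring}, another factor $2$), and then runs a single bottom-up pass applying the vertex extension operator of \cite{Onak} within each layer component, gluing components with structuring and cost-scaling operators. The analysis (Lemma~\ref{lem:structured_tree_approximation}) is an inductive exchange argument comparing the algorithm's interval at each component root directly to that of an optimal structured strategy, with the per-level alignment overheads summing geometrically to one extra query (the final factor $2$). If you want to pursue your route, the missing piece is a concrete, analyzable separator rule for each core together with a charging scheme against $L$; absent that, the argument does not establish the theorem.
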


\begin{theorem} \label{thm:down-monotonic}
There exists a polynomial-time $2$-approximation adaptive search algorithm for an arbitrary tree with down-monotonic cost function.
\end{theorem}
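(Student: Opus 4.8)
The plan is to work through the Bellman recursion $\OPT{T'} = \min_{q}\big(\costFunc(q) + \max_i \OPT{C_i}\big)$, where $C_1,\dots,C_k$ are the components of $T'-q$, and to build a strategy whose cost provably stays within a factor $2$ of a lower bound on $\OPT{T}$. The structural engine is the observation that down-monotonicity makes every sublevel set $\{v : \costFunc(v)\le\theta\}$ a connected subtree containing $\rootNode$; hence the cheapest vertex of any subtree $T'$ arising in the recursion is exactly its vertex $\rho$ closest to $\rootNode$, and the family of subtrees that can occur is nested and well behaved. I would first invoke the structured-decision-tree and aligned-strategy machinery of the earlier sections to argue that, up to the constant we are targeting, one may restrict attention to strategies in a canonical aligned form, in which the queries separating a given heavy part of the tree are performed before the cheaper queries that refine it; this is what lets the per-target cost be dominated by a few leading terms rather than a long sum.

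The algorithm itself I would obtain by bottom-up processing. Sweeping from the heavy leaves toward the light root $\rootNode$, I maintain for each subtree the cheapest aligned strategy searching it, and at each vertex I attach the cheapest admissible separator. Down-monotonicity is used twice here: it guarantees that a cheap separator is available where it is needed, since the vertex closest to $\rootNode$ in any component is simultaneously its cheapest vertex and its natural root; and it guarantees that merging the partial strategies of sibling components across a vertex costs only one extra query (querying that vertex resolves all of them at once), so branching never multiplies cost. The sweep runs in polynomial time because each subtree is treated once and the admissible separators at a vertex are assembled from those of its children.

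For the lower bound I would combine two ingredients. The first is a path bound: for every path $P$ in $T$, the restriction of any strategy to targets on $P$ becomes, after deleting the queries that do not separate two vertices of $P$, a valid strategy for the weighted path $P$, so $\OPT{T}\ge \max_P \OPT{P}$; this captures the cost forced by the increasing weights along a single root-to-leaf chain. The second is a branching bound: any strategy is a decision tree with exactly one leaf per target and branching factor at each node at most one more than the degree of the queried vertex, so some root-to-leaf branch has at least $\log_{1+\Delta} n$ queries, each costing at least the minimum weight of the component it acts on; applied recursively to heavy components this captures the cost forced by balanced splitting, exactly as in the uniform-weight (tree-depth) case. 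The crux is to fold these two quantities into a single recursively defined potential $\Phi(T)$ with $\OPT{T}\ge\Phi(T)$, and then to show by induction over the bottom-up recursion, using alignment to make the surviving per-branch costs a geometric series, that the constructed strategy has cost at most $2\,\Phi(T)$.

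The hard part is precisely this last fusion: the weight term and the branching term are each individually indispensable—the complete binary tree with uniform heavy weights shows that the path bound alone is off by a $\log n / \log\log n$ factor, while a star shows that a naive $\log n$ branching bound is outright false—so $\Phi$ must charge both contributions without double-counting, and the alignment reduction must be tight enough to yield the constant $2$ rather than something larger. I expect most of the technical effort to lie in proving that the greedy cheapest-admissible-separator choice of the bottom-up algorithm is matched level by level by $\Phi$, so that the induction closes cleanly; the down-monotonic hypothesis, through the nestedness of the sublevel subtrees, is what I would rely on to keep this charging local and to rule out the pathological interactions that make the general, non-monotonic case resist a constant-factor bound.
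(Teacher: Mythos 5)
Your high-level instincts (round the weights, process bottom-up, exploit that down-monotonicity puts the cheap vertices near the root) overlap with the paper, but the proposal has two genuine gaps, and the second one is the whole proof. First, the algorithm is underspecified in a way that matters: you say the separator used in each component is the vertex closest to $\rootNode$ (``its cheapest vertex and its natural root'') and that merging sibling strategies ``costs only one extra query.'' Read literally, that is the strategy which always queries the root of the current component first, and on a path with uniform weights (which is down-monotonic) it pays $\Theta(n)$ queries against an optimum of $\Theta(\log n)$. A correct bottom-up construction cannot put the local root at the top of the local decision tree; it must decide \emph{where in time} the query to $v$ is scheduled relative to the queries inside the children's subtrees, which is exactly what the paper's greedy-interval assignment (the interval analogue of the Sch\"affer--Onak ranking operator) is for.

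Second, your entire factor-$2$ argument rests on a potential $\Phi$ with $\OPT{T}\ge\Phi(T)$ and $\mathrm{ALG}\le 2\,\Phi(T)$, which you never define and yourself flag as ``the hard part.'' No such fused path-plus-branching lower bound is known to be constant-factor tight for weighted trees, and the paper does not need one. Its route is different and more elementary: after rounding to powers of two (this is where the entire factor $2$ is spent, via $\OPT{T_{\mathrm{rounded}}}\le 2\,\OPT{T}$), it proves the greedy bottom-up strategy is \emph{exactly optimal} on the rounded instance. The two ingredients are an exchange argument (Lemma~\ref{lem:aligned_strategy}) showing that some optimal extended strategy function is \emph{aligned} --- every vertex's interval starts and ends at multiples of its own weight --- and a lexicographic-minimality argument (Lemma~\ref{lem:optimal-sequence}) reducing the weighted greedy to the known minimizing and monotone vertex-extension operator via a change of granularity; down-monotonicity together with rounding is what makes the granularities divide one another so that this reduction goes through. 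Without something playing the role of the alignment lemma your induction has no invariant to close on, and without a concrete $\Phi$ the claimed ratio of $2$ is unsupported.
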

We say that $\costFunc$ is \emph{$k$-monotonic}, $k\geq 1$, if $T$ can be partitioned into vertex disjoint subtrees $T_1,\ldots,T_l$ such that each of them is either up-monotonic or down-monotonic, and for some $r\in V$, any path from $r$ to a leaf in $T$ intersects at most $k$ trees $T_i$, $i\in\{1,\ldots,l\}$.
By combining Theorem~\ref{thm:up-monotonic} and Theorem~\ref{thm:down-monotonic}, we obtain:
\begin{theorem} \label{thm:k-monotonic}
There exists a polynomial-time $8k$-approximation adaptive search algorithm for an arbitrary tree with a $k$-monotonic cost function.
\end{theorem}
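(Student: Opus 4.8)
The plan is to assemble a single strategy for $T$ out of strategies computed separately on the monotonic subtrees, losing a factor of $k$ because any target is localized after passing through at most $k$ of them. First I would fix the vertex $r$ and a partition $T_1,\dots,T_l$ witnessing $k$-monotonicity, and form the quotient tree $\hat T$ whose nodes are the subtrees $T_i$ and whose edges join two of them that are connected by an edge of $T$; since the $T_i$ are connected and partition $V(T)$, $\hat T$ is a tree, which I root at the piece $T_{i_0}$ containing $r$. For each $T_i$ I apply Theorem~\ref{thm:up-monotonic} or Theorem~\ref{thm:down-monotonic}, according to whether $T_i$ is up- or down-monotonic, to obtain in polynomial time a strategy $\strategy_i$ for the stand-alone tree $T_i$ with $\COST{\strategy_i}\le 8\,\OPT{T_i}$, using the common ratio $8=\max\{8,2\}$ so that one bound covers both cases.

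Two structural facts drive the analysis. The first is monotonicity of the optimum over connected subtrees, $\OPT{T_i}\le\OPT{T}$: an optimal strategy for $T$ induces a strategy for the connected set $T_i$ of no larger cost, because once the target is known to lie in $T_i$ every query outside $T_i$ has a forced answer (the neighbor on the path toward $T_i$) and may be skipped, while every query inside $T_i$ receives an answer pointing to a vertex that is again in $T_i$. The second is the observation that while the true target lies beyond a boundary vertex $b$ of a piece $T_i$, every answer to a query inside $T_i$ points in the same direction it would if the target were $b$ itself; hence a search of $T_i$ trying to localize the exit toward the child piece behaves exactly like $\strategy_i$ searching for the target $b$, so the queries it makes \emph{inside} $T_i$ cost at most $\COSTweights{\strategy_i}{b}\le\COST{\strategy_i}\le 8\,\OPT{T}$.

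Composing top-down on $\hat T$, the strategy $\strategy$ runs $\strategy_{i_0}$ until it either finds the target in $T_{i_0}$ or an answer points to a neighbor outside $T_{i_0}$, whereupon it descends into the unique child piece containing that neighbor and repeats. For a fixed target $t$ lying in a piece $T_{i_m}$, the pieces visited by $\strategy$ are exactly those on the path of $\hat T$ from $T_{i_0}$ to $T_{i_m}$, and these coincide with the pieces met by the path from $r$ to $t$ in $T$; extending this path to a leaf and invoking the defining property of $k$-monotonicity bounds their number by $k$. In the final piece the target is found for cost at most $\COST{\strategy_{i_m}}\le 8\,\OPT{T}$, and in each intermediate piece the cost of the queries localizing the exit toward the next piece is at most $8\,\OPT{T}$ by the second fact; summing over at most $k$ pieces would then give $\COSTweights{\strategy}{t}\le 8k\,\OPT{T}$, and hence $\COST{\strategy}\le 8k\,\OPT{T}$, with $\strategy$ computed in polynomial time.

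The hard part will be controlling the transitions between pieces, and this is where the apparent simplicity of the accounting above hides the real work. The difficulty is that a single boundary vertex $b$ may have cost far exceeding $\OPT{T}$ (for instance an expensive cut vertex that an optimal strategy never queries, localizing the target from the cheap vertices around it instead, so that neither $\OPT{T}$ nor even the optimum of the subtree below $b$ bounds $\costFunc(b)$). Consequently the composed strategy cannot afford to query $b$ merely to confirm that the target has passed from $T_i$ into its child, which is exactly the one step not covered by the ``inside $T_i$'' bound of the second fact. The decision ``target equal to $b$ versus target beyond $b$'' must therefore be deferred into the downstream search rather than paid at $b$, and one must argue that the deferred cost is absorbed within the child piece's own budget of $8\,\OPT{T}$. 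I expect to realize this through the structured decision trees and aligned strategies developed in the earlier sections, which are precisely what arrange the constructed strategies so that no such unnecessary expensive query is ever issued.
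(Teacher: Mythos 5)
Your proposal is essentially the paper's own proof: Algorithm~\ref{alg:main} runs the piecewise approximation (Theorem~\ref{thm:up-monotonic} or~\ref{thm:down-monotonic}) on the root piece $T'$, uses exactly your observation that while the target lies beyond a boundary vertex every answer inside $T'$ is consistent with the target being that boundary vertex, recurses into the child piece identified by the answer at the boundary, and charges each of the at most $k$ visited pieces with $8\,\OPT{T'}\leq 8\,\OPT{T}$ via Lemma~\ref{lem:dec-tree-for-subtree}. The difficulty you flag in your last paragraph does not actually arise in the paper's model: a decision tree for $T_i$ contains \emph{every} vertex of $T_i$ as a node and its cost is the maximum weighted root-to-leaf path including the leaf, so the target must be confirmed by a query and consequently $\costFunc(b)\leq\OPT{T_i}\leq\OPT{T}$ for every boundary vertex $b$ --- the scenario of an expensive cut vertex that an optimal strategy never queries is excluded by the cost model itself. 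Hence the query to $b$, whose reply is precisely what identifies the child piece containing the target, is already paid for within the root-to-$b$ branch of the strategy for $T_i$, whose cost is at most $\COST{\decTree{T_i}}\leq 8\,\OPT{T}$; no deferral into the downstream piece is needed, and the structured/aligned machinery of Sections~\ref{sec:structured_trees}--\ref{sec:aligned-strategies} plays no role in the composition step (it is internal to the proofs of Theorems~\ref{thm:up-monotonic} and~\ref{thm:down-monotonic}).
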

This implies a constant factor approximation algorithm for instances with fixed $k$ and raises a question whether the problem is FPT with respect to $k$?
We give a negative answer to this question by proving:
\begin{theorem} \label{thm:npc}
The problem of finding a vertex search strategy for bounded diameter spiders with $k$-monotonic cost functions is strongly $\NP$-complete for any fixed $k\geq 4$.
\end{theorem}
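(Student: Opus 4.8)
The plan is to prove the two inclusions of strong $\NP$-completeness separately, with essentially all of the effort going into the hardness reduction. Membership is the easy part: a search strategy for a tree can be encoded by an $\bigo(n)$-size search tree whose cost is its weighted depth and is verifiable by a single bottom-up pass, and for a bounded-diameter spider this encoding is certainly polynomial; together with the fact that in our instances the query costs are polynomially bounded, this places the decision version in $\NP$ in the strong sense. Since a $4$-monotonic instance is, by definition, $k$-monotonic for every $k\ge 4$, it suffices to establish hardness for $k=4$, and the theorem for all fixed $k\ge 4$ follows immediately. So the crux is strong $\NP$-hardness for the $4$-monotonic, bounded-diameter spider case.

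For hardness I would reduce from $3$-Partition, which is strongly $\NP$-complete and is the natural source when the numbers (here the query costs) must matter but stay polynomially bounded. Given $3m$ integers $a_1,\dots,a_{3m}$ with $\sum_i a_i = mB$ and $B/4<a_i<B/2$ (so every valid group has exactly three elements), I would build a spider rooted at a central vertex $c$ with $3m$ \emph{item-legs}, one per integer, together with a fixed number of auxiliary \emph{slot-legs} acting as $m$ separators. Each leg has length bounded by an absolute constant, so the spider has bounded diameter; crucially, the value $a_i$ is encoded \emph{entirely in the query cost} $\costFunc(\cdot)$ of one designated vertex of the $i$-th item-leg rather than in any length, which confines all large magnitudes to the weights. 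The target is a threshold $\Theta$ such that the spider admits a strategy of cost at most $\Theta$ if and only if the $a_i$ can be partitioned into $m$ triples of equal sum $B$.

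The cost function will be arranged to be $4$-monotonic by rooting the monotonicity partition at $c$ and cutting each item-leg into at most four maximal monotone sub-paths (a short descending prefix near $c$, the weight-encoding part, and a tail), so that every $c$-to-leaf path meets at most four monotone subtrees; four segments is the minimum our gadget appears to need in order to simultaneously encode item sizes and force the grouping, which is why the statement begins at $k=4$. An alternative, matching the phrase that the result ``combines several earlier results,'' is to import an existing strongly $\NP$-hard construction for the general weighted problem, pass through the edge-to-vertex equivalence of \cite{DereniowskiKUZ17}, and then verify after the fact that the produced instance is a bounded-diameter spider whose cost profile has at most four monotone runs. I would keep the direct $3$-Partition reduction as the primary route and use the import only where it shortens the bookkeeping.

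The forward direction should be routine: a valid $3$-partition dictates, slot by slot, the order in which the three legs of each triple are queried, and summing the induced query costs bounds the worst case by $\Theta$. The main obstacle is the backward direction — showing that \emph{any} strategy of cost at most $\Theta$ must respect a valid grouping. Here I expect to rely on an exchange/adversary argument exploiting that $c$ is effectively forced to be queried early and that the $B/4<a_i<B/2$ gap rules out groups of any size other than three, so that a strategy cheaper than $\Theta$ can neither overload nor underload a slot; any deviation from a balanced triple assignment must provably push the worst-case cost above $\Theta$. Making this quantitatively tight, so that $\Theta$ cleanly separates yes- from no-instances while the weights remain polynomially bounded and the $4$-monotone, bounded-diameter structure is simultaneously preserved, is the delicate part of the construction and where I would expect the reduction to demand the most care.
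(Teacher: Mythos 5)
Your membership argument and the observation that hardness for $k=4$ suffices are fine, but your primary hardness route has a genuine gap: the entire soundness direction of your $3$-Partition reduction is deferred rather than proved. You describe the gadget only at the level of ``item-legs,'' ``slot-legs,'' and an unspecified threshold $\Theta$, and for the crucial claim that any strategy of cost at most $\Theta$ induces a valid $3$-partition you only say that you ``expect to rely on an exchange/adversary argument.'' That argument is the whole content of the hardness proof, and it is far from routine here: adaptive search strategies on spiders do not decompose into independent ``slots,'' the order in which legs are eliminated depends on the answers received, and controlling this is precisely what makes the weighted problem hard. Nothing in your sketch pins down why a too-cheap strategy must overload a slot, nor how the $4$-monotonicity constraint is to coexist with the weight-encoding vertex on each leg. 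As written, this is a plan for a proof, not a proof.

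The paper sidesteps all of this by taking exactly the route you mention as an ``alternative'' and then set aside. It cites the strong NP-completeness of \emph{edge} search on spiders of diameter at most $6$ from \cite{tree-weighted-CicaleseKLPV16}, applies the edge-to-vertex subdivision of \cite{DereniowskiKUZ17} (each edge $e$ becomes a vertex $v(e)$ of weight $c(e)$, every original vertex gets a weight $M$ exceeding the total edge weight, and the vertex search cost of the subdivision equals the edge search cost of the original tree), and then makes the two structural observations your alternative would require: the subdivision of a diameter-$6$ spider is a spider of diameter $12$, and the alternating large/small weight pattern along each subdivided leg is $4$-monotonic. If you promote your backup route to the main one and actually carry out those two checks, you recover the paper's proof; if you insist on the direct $3$-Partition reduction, you must supply the adversary argument in full, which essentially amounts to redoing the hardness proof of \cite{tree-weighted-CicaleseKLPV16} from scratch.
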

A \emph{spider} is a tree with at most one vertex of degree greater than two.

\subsection{Related work} \label{sec:related-work}

One can distinguish three different research directions related to our problem.
We list main results separately for each of those as follows.

\medskip
\noindent
\textbf{Vertex search in trees.} It is known that an optimal vertex search strategy can be computed in linear time for an unweighted tree and this result has been obtained independently in \cite{node-rank-linear-Schaffer89} and in \cite{Onak}.
On the other hand, the weighted version becomes (strongly) NP-complete and FPT with respect to the maximum cost\footnote{With an assumption of integer-valued cost functions.} \cite{trees-node-weighted-hard-DereniowskiN06}.
For weighted paths there exists a dynamic programming $\bigo(n^2)$-time optimal algorithm \cite{tree-binary_indentification-CicaleseJLV11} (see also \cite{LaberMP02}), which although covers graphs of special structure, is interesting as it may be viewed as the classical binary search with non-uniform comparison costs.
For general trees there is a polynomial-time $\bigo(\sqrt{\log n})$-approximation algorithm~\cite{DereniowskiKUZ17}.
The latter is derived from a QPTAS that for any $0<\varepsilon<1$ gives a $(1+\varepsilon)$-approximation in time $n^{\bigo(\log n/\varepsilon^2)}$ \cite{DereniowskiKUZ17}.
We also remark on a closely related problem of finding, for an arbitrary input graph, its spanning tree that has the minimum search cost \cite{MiyataMNZ06}.
Another generalization that appeared in the context of parallel scheduling problems, expressed in our terminology, is to find a search strategy in which one is allowed to perform several queries simultaneously \cite{ZhouNN95}.
This line of research can be placed in a wider context of analyzing adaptive algorithms versus some coding schemes and we refer interested reader e.g. to a much broader survey in~\cite{Pelc02}.
Another related model of searching in trees is where the weight of each vertex indicates the likelihood of the vertex being a search target. I.e., this search model can be considered as a generalization of finding a Binary Search Tree for a path to the domain of trees \cite{BerendsohnK22,LaberM11}.

\medskip
\noindent
\textbf{Edge search in trees.}
The edge search problem on unweighted trees is essentially different than the vertex search.
There has been a series of papers providing polynomial-time algorithms which eventually lead, again independently under the names of edge ranking and searching in partial orders, to linear-time optimal algorithms \cite{edge-rank-linear-LamY98,trees-MozesOnakW08O}.
Moreover, although this gives the same complexity as for vertex search, the algorithms themselves were much more involved and thus the edge search version is more challenging to analyze for unweighted trees.
Despite the fact that the complexity is resolved, upper bounds on the number of queries have been also developed \cite{application-db-queries-DereniowskiK06,graphs-Emamjomeh-Zadeh16,MakinoUI01}.
We briefly refer to similar generalizations including spanning tree computations \cite{MakinoUI01} or the `batch' querying \cite{ZhouKN96}.
Interestingly, there is a simple edge-subdividing reduction making the weighted vertex search more general than the weighted edge search~\cite{DereniowskiKUZ17}.
However, there is no such transition from non-weighted edge search to non-weighted vertex search or vice versa.
The weighted edge search is NP-complete for several already restricted classes of trees \cite{tree-weighted-CicaleseKLPV16,tree-edge-weighted-hard-Dereniowski06}.

\medskip
\noindent
\textbf{Vertex search in general graphs.} We note that our vertex search problem has a natural reformulation for graphs: a query to a vertex $q$, whenever $q$ is not the target gives \emph{any} neighbor of $q$ that is closer to the target than $q$.
This version has been introduced in~\cite{graphs-Emamjomeh-Zadeh16}.
It turns out that $\log_2 n$ queries are enough to perform a search in any graph which gives a quasi-polynomial algorithm \cite{graphs-Emamjomeh-Zadeh16}.
The same work proves that the weighted version is PSPACE-complete.
See also for further references e.g. \cite{DeligkasMS19,graph-median-DereniowskiLU21,graphs-DereniowskiTUW19,Emamjomeh-ZadehK17}.

\section{Definitions and Preliminaries} \label{sec:preliminaries}

The set of vertices and edges of a tree $T$ is denoted by $V(T)$ and $E(T)$, respectively. For a connected subset of vertices $V'\subseteq V(T)$ of $T$, we denote by $T[V']$ the induced subtree of $T$ consisting of all vertices from $V'$. For a rooted $T$, we denote its root by $\rootNode(T)$, while for any $v\in V(T)$, $T_v$ is the subtree of $T$ rooted at $v$ which consists of $v$ and all its descendants. $\neighbors{v}$ denotes the set of neighbors of a vertex $v$.

For any two intervals $I=[a,b)$ and $I'=[a',b')$ we write $I>I'$ ($I<I'$) when $a\geq b'$ ($b\leq a'$). Thus, whenever $I>I'$ or $I<I'$, we have $I\cap I'=\emptyset$. For an interval $I=[a,b)$ and a number $c$, we write $I<c$ ($I>c$), when $b\leq c$ ($a\geq c$, respectively).
Furthermore, we define the sum $I+c$ as the interval $[a+c,b+c)$.

We say that an interval $I=[a,b)$ is \emph{minimal} (\emph{maximal}) in a given set of intervals, if and only if for any other interval $I'=[a',b')$ in the set, we have that $a\leq a'$ (respectively $b\geq b'$). Note that every finite set of intervals contains both a minimal and a maximal interval. In this work, all intervals we use are left-closed, right-opened, and with integer endpoints.

We extend the classic lexicographic order over sequences of integers, denoted by binary operator $<_{l}$ to sequences of intervals.
For two interval sequences $S$ and $S'$ (each with pairwise disjoint intervals), we write $S<_l S'$ if and only if there exists an integer $i$ such that $i+1\notin\bigcup S$, $i+1\in\bigcup S'$ and
\[[0,i)\cap\bigcup S = [0,i)\cap\bigcup S'.\]
We then write $S\leq_l S'$ if and only if $S<_l S'$ or $S=S'$.

We use the term \emph{sequence of queries} to refer to the sequence of vertices queried by a search strategy $\strategy$ for a target $t$ and we denote it by $\strategySequence{t}$. The vertex queried by $\strategy$ in the $i$-th step is denoted by $\strategySequenceAt{t}{i}$. Given a search strategy for $T$, one can easily generate a search strategy for any subtree $T'$ by simply discarding the queries to the vertices outside of $T'$.
A useful way of encoding search strategies is by using decision trees.
\begin{definition} \label{def:decision_tree}
	We define a \emph{decision tree} for $\inputTree{T}{w}$ generated by a search strategy $\strategy$ as a rooted tree $\decTreeFull{T}$, where $\cV = V(T)$. The root of $\decTree{T}$ is the first vertex queried by $\strategy$. For any $v\in \cV$, $v$ has a child $v'$ if and only if $v'$ is queried by $\strategy$ right after $v$ for some choice of the target. Each $v\in \cV$ corresponds to a subset $\corrNodes{T}{v}\subset V(T)$ which contains all vertices that can still contain the target when $\strategy$ queries $v$.
\end{definition}

It can be seen that the root $r$ of any decision tree corresponds to $\corrNodes{T}{r}=V(T)$ and the leaves correspond to single vertices in $V(T)$.

Lemma~\ref{lem:component-in-dec-tree} shows a property of the decision trees with respect to the positions of vertices that belong to some connected component of the input tree.

\begin{lemma} \label{lem:component-in-dec-tree}
	Let $\decTree{T}$ be a decision tree for $T$. If $T'$ is a subtree of $T$, then there exists a vertex $u\in V(T')$, such that for every $v\in V(T')$ it holds $v\in V(\decTree{T}_{u})$.
\end{lemma}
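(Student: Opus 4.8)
The plan is to single out, among all vertices of $T'$, the one that sits highest in the decision tree $\decTree{T}$, and to show that at the moment this vertex is queried the set of surviving candidates already contains all of $V(T')$; since this candidate set is exactly the vertex set of the decision-tree subtree rooted at that vertex, the claim follows immediately.

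First I would record the basic structural behaviour of candidate sets along $\decTree{T}$. From Definition~\ref{def:decision_tree} together with the recursive definition of a strategy, for every node $v$ the set $\corrNodes{T}{v}$ is a connected subtree of $T$ with $v\in\corrNodes{T}{v}$; the candidate sets of the children of $v$ are exactly the connected components of $T[\corrNodes{T}{v}]\setminus\{v\}$, each child being the vertex of its component that $\strategy$ queries next; and consequently the candidate sets are nested and nonincreasing along every root-to-node path. Iterating this componentwise decomposition downward then yields the identity $V(\decTree{T}_v)=\corrNodes{T}{v}$, which is precisely what links the two objects in the statement (recall that the root has $\corrNodes{T}{r}=V(T)$ and leaves correspond to single vertices).

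Next, let $u$ be the vertex of $V(T')$ of minimum depth in $\decTree{T}$ (equivalently, the first vertex of $T'$ that the strategy encounters); such a vertex exists since $\cV=V(T)\supseteq V(T')$. Write $r=a_1,a_2,\ldots,a_d=u$ for the path from the root to $u$ in $\decTree{T}$. I would prove by induction on $i$ that $V(T')\subseteq\corrNodes{T}{a_i}$. The base case is immediate because $\corrNodes{T}{a_1}=V(T)$. For the inductive step, note that for $i<d$ we have $a_i\notin V(T')$, as $a_i$ has depth smaller than the minimum depth $d$ attained on $V(T')$; since $T'$ is connected and avoids $a_i$, the set $V(T')$ lies entirely inside a single connected component of $T[\corrNodes{T}{a_i}]\setminus\{a_i\}$. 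That component must be the one associated with the child $a_{i+1}$: indeed $u\in V(T')$ and, being a descendant of $a_{i+1}$, satisfies $u\in\corrNodes{T}{a_{i+1}}$ by the nesting property, so $u$ witnesses that $V(T')$ and $\corrNodes{T}{a_{i+1}}$ meet in the same component. Hence $V(T')\subseteq\corrNodes{T}{a_{i+1}}$, completing the induction.

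Applying the induction at $i=d$ gives $V(T')\subseteq\corrNodes{T}{u}=V(\decTree{T}_u)$, which together with $u\in V(T')$ is exactly the assertion of the lemma. The step I expect to be the main obstacle is not the induction itself but the careful justification of the structural facts about candidate sets, in particular the identity $V(\decTree{T}_v)=\corrNodes{T}{v}$ and the claim that each child's candidate set is a whole connected component of $T[\corrNodes{T}{v}]\setminus\{v\}$; these rest on the search semantics that a query to $v$ narrows the candidates to precisely one component of $T-v$ intersected with the current candidate set, and they should be stated cleanly before the inductive argument begins. A minor remaining point is to confirm that $u$ is well-defined, and its uniqueness (if wanted) then follows a posteriori, since any two minimum-depth vertices of $T'$ would be forced to be comparable in $\decTree{T}$ by the same containment.
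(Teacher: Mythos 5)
Your proposal is correct and follows essentially the same route as the paper's proof: both descend from the root of $\decTree{T}$, use the connectivity of $T'$ to argue that all of $V(T')$ falls into a single child's subtree whenever the current node is not in $T'$, and take $u$ to be the first vertex of $T'$ encountered. Your version merely makes explicit (via the identity $V(\decTree{T}_v)=\corrNodes{T}{v}$ and a formal induction) what the paper's traversal argument leaves implicit.
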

\begin{proof}
	We traverse the decision tree $\decTree{T}$ starting at the root of $\decTree{T}$ and following a path to some leaf. If $\rootNode(\decTree{T})\in V(T')$, then the lemma follows. Otherwise, consider children $v_{i}$ of $\rootNode(\decTree{T})$. Because in $T$ there is only one edge incident to $\rootNode(\decTree{T})$ that lies on a path to all vertices of $T'$, all vertices of $T'$ belong to exactly one subtree $\decTree{T}_{v_{i}}$. We continue the traversal in $\decTree{T}_{v_{i}}$ and set $u$ as the first visited vertex in this process that belongs to $T'$.
\end{proof}

We now formally define the cost of a search process, which is the parameter of $T$ the algorithm aims to minimize. In terms of the search strategy $\strategy$ for an input tree $\inputTree{T}{\costFunc}$, the sum of costs of all queries performed when finding a target $t\in V(T)$ is denoted by $\costOfStrategyAt{t}$, while the \emph{cost} of $\strategy$ is defined as $\costOfStrategy = \max_{t\in V}\costOfStrategyAt{t}$.

Sometimes in our analysis, we will take one decision tree and measure its cost with different query times, i.e., for different weights.
This approach will allow us to artificially increase the cost of some queries in the decision tree, which we formalize as follows.
Let $\decTree{T}$ be a decision tree for $T=(V,E,\costFunc)$ and $\costFunc'$ an arbitrary cost function defined on $V(T)$, potentially a different one than $\costFunc$. We denote by $\COSTweights{\decTree{T}}{\costFunc'}$ the \emph{cost of the decision tree according to the cost function} $\costFunc'$:
	\[\COSTweights{\decTree{T}}{\costFunc'}=\max\big\{\sum_{v\in V(P)} \costFunc'(v) \;|\; P\textup{ is a path from the root to a leaf in }\decTree{T}\big\}.\]
The \emph{cost} of $\decTree{T}$ is then $\COSTweights{\decTree{T}}{\costFunc}$ and we shorten it to $\COST{\decTree{T}}$. (Note that one may equivalently define the cost of $\decTree{T}$ by taking $\COST{\decTree{T}}=\costOfStrategy$ where $\decTree{T}$ is generated by $\strategy$.)
The minimum cost that is achievable for $T$ is denoted by
	\[\OPT{T}=\min\{\COST{\decTree{T}} \;|\; \decTree{T}\textup{ is a decision tree for }T\}.\]
We say that $\decTree{T}$ \emph{is optimal} if and only if $\COST{\decTree{T}} = \OPT{T}$.

We will use the following simple folklore lemma whose proof is given for completeness.
Lemma~\ref{lem:dec-tree-for-subtree} has been previously formulated in \cite{tree-binary_indentification-CicaleseJLV11} in the context of a slightly different problem of edge search in weighted trees.
\begin{lemma} \label{lem:dec-tree-for-subtree}
	If $\decTree{T}$ is a decision tree for $T$, then for any subtree $T'$ of $T$ there exists a decision tree $\decTree{T'}$, such that $\COST{\decTree{T'}} \leq \COST{\decTree{T}}$.
\end{lemma}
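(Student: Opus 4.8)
The plan is to obtain $\decTree{T'}$ as a projection of $\decTree{T}$ onto $V(T')$: intuitively, I would run the search strategy $\strategy$ that generates $\decTree{T}$ but only record (and pay for) the queries that fall inside $T'$, discarding every query to a vertex of $V(T)\setminus V(T')$. Since the queries kept in $\decTree{T'}$ form, along any search, a sub-sequence of those made by $\strategy$, the cost of each root-to-leaf path can only drop, which will give $\COST{\decTree{T'}}\leq\COST{\decTree{T}}$ directly from the definition of cost as a maximum over such paths.

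To make this precise I would define $\decTree{T'}$ by a top-down traversal of $\decTree{T}$ maintaining the invariant that the currently visited node $v$ of $\decTree{T}$ has $V(T')\subseteq V(\decTree{T}_v)$, i.e. every vertex of $T'$ is still a candidate. Lemma~\ref{lem:component-in-dec-tree} is exactly the tool that keeps the invariant alive: if the visited $v$ lies outside $T'$, then because $T'$ is connected all of its vertices lie in a single component of $T-v$, hence in a single child subtree $\decTree{T}_{v_i}$, and I continue the traversal there, emitting no node into $\decTree{T'}$. If instead $v\in V(T')$, I emit $v$ as the next node of $\decTree{T'}$; here I would argue that querying $v$ is a legitimate step for $T'$ by observing that each component of $T'-v$ is contained in a distinct component of $T-v$ (every neighbor of $v$ in $T'$ is also a neighbor in $T$), so the answer returned in $T$ unambiguously selects the component of $T'-v$ holding the target, and I recurse into the corresponding children to build the children of $v$.

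Two things then need checking. First, that the object produced is a decision tree for $T'$ in the sense of Definition~\ref{def:decision_tree}: its node set is exactly $V(T')$ with each vertex appearing once, and the corresponding set $\corrNodes{T'}{v}$ shrinks to single vertices at the leaves; this follows by induction on the traversal, the base case being $|V(T')|=1$. Second, the cost inequality: fixing a target $t\in V(T')$, the vertices queried by $\decTree{T'}$ on the way to $t$ are precisely those vertices of $V(T')$ queried by $\decTree{T}$ on its own path to $t$, in the same order, so the sum of their weights is at most the weight sum along the full $\decTree{T}$-path; taking the maximum over $t$ yields the claim. I expect the main obstacle to be the bookkeeping of the second branch of the construction — verifying rigorously that the components of $T'-v$ inject into those of $T-v$ so that the single $T$-answer always determines a valid $T'$-answer, and hence that no two distinct targets of $T'$ get merged into the same branch; once this injection is established the rest is routine induction.
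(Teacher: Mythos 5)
Your proposal is correct and is essentially the paper's own argument: both prune from $\decTree{T}$ every node outside $V(T')$ (using Lemma~\ref{lem:component-in-dec-tree} to locate the subtree containing all of $T'$, splicing the unique surviving child branch past each deleted node), and both conclude by noting that every root-to-leaf path of the resulting tree is a subsequence of a path of $\decTree{T}$, so the cost cannot increase. The paper phrases the pruning as sequential node removal while you phrase it as a top-down projection, but this is only a presentational difference.
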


\begin{proof}
	We first apply Lemma~\ref{lem:component-in-dec-tree} and reduce $\decTree{T}$ to its subtree rooted at $u\in V(T')$.
	Then, we sequentially remove from $\decTree{T}_{u}$ all nodes that do not belong to $T'$. For every removed node $v$, consider all subtrees rooted at the children of $v$ in the current decision tree. When removing $v$ we also remove all subtrees that do not contain any vertex from $T'$ but we connect the roots of all other subtrees (note that there will be only one such subtree) directly under the parent of $v$. Hence, the reduced tree is still a decision tree. What is more, all paths from the root to a leaf in the obtained decision tree result from corresponding paths from $\decTree{T}$ by removing zero or more nodes, which upperbounds the cost of $\decTree{T'}$ by the cost of $\decTree{T}$.
\end{proof}

We say that a cost function $\costFunc$ is \emph{rounded} if the values taken by $\costFunc$ are powers of two (for any $v\in V(T)$, $\costFunc(v)=2^k$ for some $k\in \mathbb{N}$). We also say that $T'$ is the \emph{rounding} of $T$ if it is obtained from $T$ by rounding the cost function to the closest, greater power of two. We obtain that $\OPT{T'} \leq 2\,\OPT{T}$ because $\costFunc' \leq 2\costFunc$. On the other hand, Lemma~\ref{lem:2rounded_dec+tree_cost} shows how large is the overhead from applying an optimal decision tree for the input with a rounded cost function to search through the input tree with the original weights.

\begin{lemma} \label{lem:2rounded_dec+tree_cost}
	Let $T'=(V,E,\costFunc')$ be the rounding of $T=(V,E,\costFunc)$.
	Let $\decTree{T}$ and $\decTree{T'}$ be optimal decision trees respectively for $T$ and $T'$.
	We have:
	\[\COSTweights{\decTree{T'}}{\costFunc} \leq 2\,\COST{\decTree{T}} = 2\OPT{T}. \]
\end{lemma}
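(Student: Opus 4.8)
The plan is to prove the inequality by a short chain that combines the pointwise monotonicity of the decision-tree cost in the weight function with the two facts already recorded just before the statement, namely $\OPT{T'}\le 2\OPT{T}$ and the optimality of $\decTree{T}$ and $\decTree{T'}$.

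First I would record the elementary relation between the two weight functions. Since $T'$ is obtained by rounding each weight up to the closest power of two, for every $v\in V$ we have $\costFunc(v)\le\costFunc'(v)\le 2\costFunc(v)$; it is the left inequality $\costFunc\le\costFunc'$ that this lemma exploits, whereas the earlier observation $\OPT{T'}\le 2\OPT{T}$ used the right one. Next I would observe that the cost of a \emph{fixed} decision tree is monotone in its weight function: for any decision tree $\decTree{D}$ for $T$ and any two cost functions $g\le h$ (pointwise) on $V(T)$, we have $\COSTweights{\decTree{D}}{g}\le\COSTweights{\decTree{D}}{h}$. This is immediate from the defining formula $\COSTweights{\decTree{D}}{g}=\max_P\sum_{v\in V(P)}g(v)$, where $P$ ranges over root-to-leaf paths of $\decTree{D}$: replacing $g$ by a pointwise-larger $h$ can only increase each path sum and hence their maximum. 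Applying this with $\decTree{D}=\decTree{T'}$, $g=\costFunc$ and $h=\costFunc'$ yields $\COSTweights{\decTree{T'}}{\costFunc}\le\COSTweights{\decTree{T'}}{\costFunc'}$.

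Finally I would unfold the definitions and stitch the known bounds together. By the definition of the cost of a decision tree and the optimality of $\decTree{T'}$ for $T'$, we have $\COSTweights{\decTree{T'}}{\costFunc'}=\COST{\decTree{T'}}=\OPT{T'}$. Combining this with the already-established bound $\OPT{T'}\le 2\OPT{T}$ and with $\OPT{T}=\COST{\decTree{T}}$ (optimality of $\decTree{T}$) gives
\[\COSTweights{\decTree{T'}}{\costFunc}\le\COSTweights{\decTree{T'}}{\costFunc'}=\OPT{T'}\le 2\OPT{T}=2\COST{\decTree{T}},\]
which is exactly the claim.

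I do not expect any genuine obstacle here. The only point that even calls for verification is the weight-monotonicity of the decision-tree cost, and that falls straight out of the max-of-path-sums definition; everything else is bookkeeping that combines the two facts recorded in the paragraph preceding the lemma. The one thing I would be careful to state explicitly is \emph{which} of the two inequalities $\costFunc\le\costFunc'\le 2\costFunc$ each step relies on, so that the reader sees that measuring the rounded-instance strategy $\decTree{T'}$ against the original weights $\costFunc$ incurs no loss at that step (it uses $\costFunc\le\costFunc'$), while the factor $2$ enters solely through $\OPT{T'}\le 2\OPT{T}$.
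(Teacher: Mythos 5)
Your proposal is correct and follows essentially the same argument as the paper: the paper's proof is the chain $\COSTweights{\decTree{T'}}{\costFunc}\le\COSTweights{\decTree{T'}}{\costFunc'}\le\COSTweights{\decTree{T}}{\costFunc'}\le 2\COSTweights{\decTree{T}}{\costFunc}$, and your version merely packages the last two steps as the previously recorded fact $\OPT{T'}\le 2\OPT{T}$, which rests on the same observations. No gaps.
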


\begin{proof}
	We have $\COSTweights{\decTree{T'}}{\costFunc} \leq \COSTweights{\decTree{T'}}{\costFunc'} \leq \COSTweights{\decTree{T}}{\costFunc'} \leq 2\COSTweights{\decTree{T}}{\costFunc}$.	
	The first inequality holds because $\costFunc \leq \costFunc'$. The second inequality holds because $\decTree{T'}$ is optimal for $T'$. The last inequality holds because $\costFunc' \leq 2\costFunc$.
\end{proof}

\subsection{Strategy functions}

In case of uniform query times, an optimal search strategy can be obtained by calculating a \emph{strategy function} $f\colon V(T) \to \mathbb{N}$, where for each pair of distinct vertices $v_1, v_2 \in V(T)$, if $f(v_1) = f(v_2)$, then each path connecting $v_1$ and $v_2$ has a vertex $v_3$ such that $f(v_3) > f(v_1)$. As outlined earlier, methods using functions analogous to $f$ can be found in the literature on problems such as elimination trees, vertex ranking \cite{edge-rank-linear-LamY98,node-rank-linear-Schaffer89}, tree-depth~\cite{NesetrilM06}, and others. One property of the strategy function is that it encodes the (reversed) order of the queries. E.g., in any sequence of queries, a vertex with a higher value of $f$ is always queried before those with lower values of the strategy function. Thus, the maximal value of $f$ over the vertices of a tree is the worst-case number of queries necessary to search the tree. In order to express the variable costs of queries we extend the strategy function into an \emph{extended strategy function} (see also~\cite{tree-edge-weighted-hard-Dereniowski06}).
\begin{definition} \label{def:extended_strategy_function}
	A function $f\colon V(T) \to \{[a,b)\colon0\leq a<b\}$, where $\rvert f(v)\rvert \geq\costFunc(v)$ for each $v\in V(T)$, is an \emph{extended strategy function} if for each pair of distinct vertices $v_1, v_2 \in V(T)$, if $f(v_1)\cap f(v_2) \neq \emptyset$, then the path connecting $v_1$ and $v_2$ has a $v_3$ such that $f(v_3)> f(v_1)\cup f(v_2)$.
\end{definition}
We say that an $f$ is \emph{optimal} for $T=(V,E,\costFunc)$ if and only if $\sup\bigcup_{v\in V}f(v)$ is minimal among all extended strategy functions for $T$.\footnote{We note that $\bigcup_{v\in V}f(v)$ is an interval with the left endpoint equal to $0$ for an optimal $f$.}
Based on the time interpretation of strategy functions, optimal functions correspond to optimal strategies.

\begin{observation} \label{obs:optimal-strategy-function}
    For any $T=(V,E,\costFunc)$, there exists an optimal extended strategy function $f$, such that $\vert f(v)\vert=\costFunc(v)$, for every $v\in V(T)$.
    \qed
\end{observation}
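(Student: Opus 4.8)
The plan is to begin from an arbitrary optimal extended strategy function $f$ (optimal ones exist by the definition of $\OPT{T}$) and to shrink every interval that is strictly longer than required, producing a function $f'$ with $\vert f'(v)\vert=\costFunc(v)$ for all $v$ that remains optimal. The observation driving the whole construction is that the domination condition of Definition~\ref{def:extended_strategy_function}, namely $f(v_3)>f(v_1)\cup f(v_2)$, only compares the \emph{left} endpoint of $f(v_3)$ with the \emph{right} endpoint of the union $f(v_1)\cup f(v_2)$. Consequently the condition is completely insensitive to shrinking an interval \emph{from the right} while keeping its left endpoint fixed.

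Concretely, for each $v\in V(T)$ with $f(v)=[a_v,b_v)$ I would define $f'(v)=[a_v,a_v+\costFunc(v))$. Then $\vert f'(v)\vert=\costFunc(v)$, so the length requirement holds with equality, and since $a_v+\costFunc(v)\leq b_v$ we have $f'(v)\subseteq f(v)$ for every $v$. It remains to check that $f'$ is still an extended strategy function and that it is still optimal.

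For the first part, take any distinct $v_1,v_2$ with $f'(v_1)\cap f'(v_2)\neq\emptyset$. Because $f'(v_i)\subseteq f(v_i)$, the original intervals also intersect, so the property of $f$ yields a vertex $v_3$ on the path between $v_1$ and $v_2$ with $f(v_3)>f(v_1)\cup f(v_2)$, i.e. $a_{v_3}\geq\max(b_{v_1},b_{v_2})$. Since shrinking only lowers right endpoints, the right endpoint of $f'(v_1)\cup f'(v_2)$ is at most $\max(b_{v_1},b_{v_2})\leq a_{v_3}$, which is still the left endpoint of $f'(v_3)$; hence $f'(v_3)>f'(v_1)\cup f'(v_2)$ and the same $v_3$ witnesses the condition for $f'$. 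For optimality, $f'(v)\subseteq f(v)$ for all $v$ gives $\bigcup_{v\in V}f'(v)\subseteq\bigcup_{v\in V}f(v)$, so $\sup\bigcup_{v\in V}f'(v)\leq\sup\bigcup_{v\in V}f(v)=\OPT{T}$; by optimality of $\OPT{T}$ this supremum cannot be strictly smaller, so $f'$ is optimal as well.

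The argument is essentially routine, and the single point I would treat as the main obstacle is exactly the case in the verification above where the witness vertex $v_3$ is itself one of the shrunk vertices: one must be sure that shrinking $f(v_3)$ does not invalidate $f(v_3)>f(v_1)\cup f(v_2)$. This is resolved precisely because that inequality depends on $v_3$ only through its left endpoint $a_{v_3}$, which the construction deliberately leaves untouched.
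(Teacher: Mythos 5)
Your proof is correct and takes essentially the same approach as the paper: both shrink each over-long interval to length exactly $\costFunc(v)$ and observe that the domination condition of Definition~\ref{def:extended_strategy_function} and optimality survive. The only (immaterial) difference is the direction of shrinking---the paper keeps the right endpoint fixed and raises the left, which leaves $\sup\bigcup_{v\in V}f(v)$ literally unchanged, whereas you keep the left endpoint and need the one extra line that the supremum cannot drop below the optimum.
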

\begin{proof}
    Let $f'$ be an optimal extended strategy function for $T=(V,E,\costFunc)$, while $v\in V(T)$, an arbitrary vertex such that $\vert f'(v)\vert >\costFunc(v)$. Taking $f'(v)=[a_v, b_v)$ and $\delta_v=\vert f'(v)\vert - \costFunc(v)$, we modify $f'$ into $f$ by setting $f(v)=[a_v+ \delta_v, b_v)$. Because $\sup\bigcup_{v\in V}f(v)=\sup\bigcup_{v\in V}f'(v)$, $f$ is an optimal extended strategy function for $T$.
\end{proof}

Given an extended strategy function $f$ for a node weighted tree $\inputTree{T}{\costFunc}$, one easily obtains a search strategy for the tree by first locating the vertex $q$ with the maximal interval assigned by $f$. In a trivial case, where $\vert V(T)\vert =1$, this completes the construction of $\strategy$. Otherwise, let us first note that there is only one such vertex $q$ in $V(T)$ with a maximal interval. Assume that there is a $q'\in V(T)$, $q\neq q'$, such that $b'=b$, where $f(q)=[a,b)$, $f(q')=[a',b')$. Then, because $f(q)\cap f(q')\neq \emptyset$, by the definition of $f$, on the simple path between $q$ and $q'$, a vertex $p$ would have to exist such that $f(p)>f(q)\cup f(q')$. Consequently, $q$ could not be maximal.
By the same argument, this maximal interval has an empty intersection with intervals assigned to any other vertex.
Let $\vert \neighbors{q}\vert =k$ and let $T_{i}$ be the $i$-th component of $T\setminus\{q\}$, $i\leq k$. Then, for arbitrary search target $\targetv$, $\strategy(T, \targetv)$ is a tuple $(q, \strategy(T_{1}, t), ..., \strategy(T_{k}, \targetv) )$.
Then, the construction of $\strategy$ continues recursively by using $f$ restricted to each component $T_{i}$.

On the other hand, given a search strategy $\strategy$, we can obtain an extended strategy function $f$ that encodes the same sequences of queries as those made by $\strategy$. Let $q$ be the vertex queried as first by $\strategy$. We set $f(q)=[0,\costFunc(q))$ in the trivial case, i.e. when $\vert V(T)\vert =1$. If $\vert V(T)\vert >1$, let $\strategy =(q, \strategy(T_{1}, t), ..., \strategy(T_{k}, \targetv))$, $\neighbors{q}=k$. We fix the values of $f$ for all vertices different than $q$ by recursively repeating the process for the components $T_{i}$. Note that each $v\in V(T)$ is referenced exactly once in $\strategy$. This is the case, because the components $T_{i}$ have disjoint vertex sets. After fixing $f$ in the components, we set $f(q)=[\sup\bigcup_{v\neq q}f(v),\sup\bigcup_{v\neq q}f(v)+\costFunc(q))$. We will now show that $f$ is an extended strategy function. Consider vertices $x\neq y$ in $V(T)$, such that $f(x)\cap f(y)\neq\emptyset$. Consider the sequences of queries $S_{x}$ and $S_{y}$, leading respectively to querying $x$ and $y$. Since both $S_{x}$ and $S_{y}$ are generated by the same strategy $\strategy$, they have a common prefix of length at least one of identical vertices. Let $v$ be the last vertex in the prefix. Since $x\neq y$, at least one vertex out of $x$ and $y$ was set after $v$ by our procedure before both $x$ and $y$. In other words, because $x\neq y$ only one of them might be $v$. Consequently we have that $f(x)\cup f(y)<f(v)$.

This also shows that the interval assigned to a vertex $v$ by $f$ might encode the time interval for querying $v$, where the `time' is understood as dictated by the precedence relation we use for intervals.

Given an extended strategy function $f$ for $T$, we say that a vertex $u$ is \emph{visible} from a vertex $v$ if and only if on the path that connects $u$ and $v$ there is no vertex $x$ such that $f(u)<f(x)$. The idea of visibility has been used e.g. in \cite{ZhouNN95} and also earlier under the name of `critical lists' in \cite{edge-rank-linear-LamY98}. The sequence of values (intervals) visible from $v$ in descending order is called a \emph{visibility sequence} (\emph{extended visibility sequence}) for $v$ and is denoted by $S(v)$.
Whenever we need to point out that $S(v)$ is obtained from a given $f$ in the above way, we say that $S(v)$ is \emph{derived} from $f$.

The concept of visibility is also used when $f$ is defined only on a subset of $V(T)$, i.e., on all descendants of some vertex $v$. In this context, which typically occurs during bottom-up processing of $T$, only those vertices are considered for which $f$ has been already defined.

We will decide on a strategy function $f(v)$ for each vertex $v$ by assigning an integer $j_v$ to $v$ in a bottom-up processing of the input tree. Each $j_v$ will determine $f(v)$ differently for different vertices. However, the $j_v$ will be computed via an appropriate use of the following operator.

\begin{definition}[\cite{Onak}] \label{def:vertex_extension}
    Let $f$ be a strategy function defined on all descendants of a $v\in V(T)$. Let $v_i$, $1\leq i\leq k$, be the children of $v$. The \emph{vertex extension} operator attributes $v$ with $f(v)$, based on the visibility sequences $S(v_i)$ in the following way. Let $M$ be the set of integers that belong to at least two distinct visibility sequences $S(v_i)$, while $m=\max(M\cup\{-1\})$. We set $f(v)$ as the lowest integer greater than $m$ that does not belong to any $S(v_i)$.
\end{definition}

The following lemma characterizes the vertex extension operator.

\begin{lemma}[\cite{Onak}] \label{lem:vertex-extension-is-minimizing}
	Given the visibility sequences assigned to the children of $v\in V(T)$, the \emph{vertex extension operator} assigns to $v$ a visibility sequence that is lexicographically minimal (the vertex extension operator is minimizing). Moreover, (lexicographically) increasing the visibility sequence of a child of $v$ does not decrease the visibility sequence calculated for $v$ (the vertex extension operator is monotone).
\end{lemma}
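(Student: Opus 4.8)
The plan is to reduce both assertions to a single explicit description of the visibility sequence produced at $v$ as a function of the children's sequences. Writing $v_1,\dots,v_k$ for the children, $U=\bigcup_{i} S(v_i)$, and $m$ as in Definition~\ref{def:vertex_extension}, the first step is to prove the structural identity
\[ S(v)=\{f(v)\}\cup\{c\in U : c>f(v)\}, \]
valid whenever $f(v)$ is a label that keeps $f$ a legal strategy function on $T_v$. This follows from the definition of visibility: a vertex $v$ with label $f(v)=j$ blocks from $v$ every descendant whose visible value is smaller than $j$, since the path to it runs through $v$, while it lets through exactly the descendants whose visible value exceeds $j$, and the value $j$ itself is contributed by $v$. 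In the same step I would record the two feasibility constraints on the label: $f(v)\notin U$, since otherwise $v$ and an equal-valued visible descendant violate the strategy-function condition, and $f(v)>m$, since for every value $c$ lying in two distinct $S(v_i)$ the two witnesses are joined by a path through $v$, which must carry a strictly larger label and thus force $f(v)>c$. Conversely, every label with these two properties is feasible. Hence the feasible labels are exactly the integers $x>m$ with $x\notin U$, and the operator returns the smallest of them.

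For the minimizing claim I would compare the sequences obtained from two feasible labels $j<j'$. By the structural identity $S_j$ and $S_{j'}$ share every value exceeding $j'$, so they first differ inside the window $(m,j']$; there $S_{j'}$ contains its maximal entry $j'$, while $S_j$ contains only values strictly below $j'$, namely $j$ together with the elements of $U$ strictly between $j$ and $j'$ (note $j'\notin U$ because $j'$ is feasible). Thus $S_j$ lacks the higher entry $j'$ at the first point of disagreement, so $S_j<_l S_{j'}$. The visibility sequence is therefore monotone in the label, and the smallest feasible label---the one chosen by the operator---produces the $<_l$-minimal visibility sequence among all legal extensions of $f$ to $v$, which is the first assertion.

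For monotonicity it suffices to treat one child at a time: replace $S(v_1)$ by a $<_l$-larger sequence $S'(v_1)$, keep the others fixed, and compose the single-child statements using transitivity of $\le_l$. Using primes for the new quantities, I would compare $S(v)$ with $S'(v)$ directly through the structural identity, organised around the top-most value $d$ at which $S(v_1)$ and $S'(v_1)$ differ, so that $d\in S'(v_1)$ and $U,U'$ agree above $d$. The comparison then hinges on the positions of the chosen labels $f(v),f'(v)$ relative to $d$ and on whether the change to the first child creates or destroys a coincidence with another child (this is what controls $m$ versus $m'$). In each case one checks, from $S(v)=\{f(v)\}\cup\{c\in U:c>f(v)\}$, that the visible values above the first point of disagreement are preserved or raised while any values newly exposed appear only below it, which yields $S(v)\le_l S'(v)$.

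The step I expect to be the main obstacle is this last one, for a structural reason: the label $f(v)$ is \emph{not} a monotone function of the children's sequences. Increasing a child can destroy a low coincidence, lowering $m$ and hence lowering $f(v)$; conversely it can create a high coincidence and raise $f(v)$. One therefore cannot deduce $S(v)\le_l S'(v)$ from the behaviour of $m$ or of the label alone, and the crux is to show that a drop in the label is always compensated: the values it newly exposes lie below the part of the sequence that has already increased, so the net effect in $<_l$ is non-decreasing. Making this compensation precise---rather than tracking $m$ in isolation---is the technical heart of the argument, and is exactly where the case analysis around the top-most changed value $d$ is needed.
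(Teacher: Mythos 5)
The paper gives no proof of this lemma at all---it is imported as a black box from \cite{Onak}---so there is no in-paper argument to compare yours against; I can only assess the attempt on its own terms. The first two parts are essentially right. The structural identity $S(v)=\{f(v)\}\cup\{c\in U : c>f(v)\}$ is correct, and the potential multiplicity worry is harmless because any value occurring in two distinct child sequences lies in $M$ and hence below $m<f(v)$. The characterization of the feasible labels as exactly the integers $x>m$ with $x\notin U$ is correct, although you only argue the forward direction; the converse (that every such $x$ really yields a legal strategy function on $T_v$) is asserted, and it is needed both to know the operator's output is valid and to know it is the \emph{minimum} over all feasible choices. It is a one-line check (two equal-valued descendants in different subtrees are either already separated inside their subtrees or witness a value of $M$, which lies below $f(v)$), but it should be written. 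Given that, the comparison of $S_j$ and $S_{j'}$ for feasible $j<j'$ at the highest point of disagreement $j'$ does establish the minimizing claim.

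The monotonicity claim, however, is not proved. You diagnose the difficulty exactly correctly---the chosen label is not a monotone function of the children's sequences, since raising a child's sequence can destroy a coincidence, lower $m$, and hence lower $f(v)$, exposing new low values in $S(v)$---and you pick the right organizing device, the top-most value $d$ at which $S(v_1)$ and $S'(v_1)$ differ, above which $U,U',M,M'$ all agree. But at precisely the step you yourself identify as the technical heart, the argument is replaced by ``one checks.'' What has to be verified is a genuine case analysis over the relative positions of $j$, $j'$ and $d$ and over whether $m'$ is larger or smaller than $m$: e.g., when the new label drops below $d$ one must show that every value newly present in $S'(v)$ but absent from $S(v)$ sits \emph{below} the highest value at which the two output sequences disagree, and that at that highest disagreement it is $S'(v)$ that owns the value; and symmetrically when a new coincidence at $d$ pushes the label up past values of $U$ that were previously exposed. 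None of this is carried out, so the second assertion of the lemma is currently asserted rather than proved. The plan is sound and consistent with the standard arguments for critical lists, but as written the proposal proves only the first half of the lemma.
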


\section{Structured decision trees} \label{sec:structured_trees}

In Sections~\ref{sec:structured_trees}, \ref{sec:bottom-up-processing} we consider rooted trees $\inputTree{T}{\costFunc}$, in which while traversing any path from the root to a leaf, the weights of the visited vertices are non-increasing.
In particular, the root of $T$ is selected properly according to the definition of an up-monotonic cost function.
This section gives a foundation for the algorithm in Section~\ref{sec:algorithm-up-monotonic}.

We define a \emph{layer} of a tree $T$ as a subgraph $L$ of $T$ such that all vertices in $V(L)$ have the same cost, $\costFunc(u)=\costFunc(v)$ for any $u,v\in V(L)$. A connected component of $L$ is called a \emph{layer component}. We also denote by $\costFunc(L)$ the cost of querying a vertex that belongs to $L$, i.e., $\costFunc(L)=\costFunc(v)$ for any $v\in V(L)$.

The \emph{upper border} of a layer $L$ is the set of roots of its layer components. The subset of $V(L)$ consisting of vertices that have at least one child that belongs to a different layer is called the \emph{lower border} of $L$. We will also say that a layer component $L'$ is \emph{directly below} a layer component $L$ if and only if $\rootNode(L')$ has a parent in $L$. The \emph{top} layer component is the one that contains the root of $T$ and a \emph{bottom} layer component is any $L$ such that there is no component directly below $L$.
We note that we will apply the above terms to a rounded $\costFunc$.

\begin{definition} \label{def:structured-decision-tree}
	Let $T$ be a tree with a up-monotonic cost function and $\decTree{T}$ a decision tree for $T$. Consider a layer component $L$ of $T$. Let $v=\rootNode(L)$. We say that $\decTree{T}$ is \emph{structured with respect to} $L$ if for every vertex $u\in V(T_{v})$, it holds that $u\in V(\decTree{T}_{v})$.	
	We say that a decision tree $\decTree{T}$ for $T$ is \emph{structured} if and only if $\decTree{T}$ is structured with respect to all layer components of $T$.
\end{definition}
Informally speaking, in a structured $\decTree{T}$ we require that each vertex in the entire subtree $T_v$ is below $v$ in $\decTree{T}$, for each $v$ that is the root of a layer component.
This is one of the central ideas used in the algorithm for up-monotonic cost functions for the following reason.
While performing bottom-up processing of (an unweighted) $T$ in~\cite{Onak} (and also other works, e.g.~\cite{edge-rank-linear-LamY98}) each vertex $v$ encodes which steps are already used for queries of the vertices from $T_v$.
In our case, that is in a weighted $T$, we need to encode intervals and the technical problem with that is that they have different durations while moving upwards from one layer component to the next.
Thus, some intervals that are free to perform queries while moving to a parent of $v$ may be too short due to the weights in the next layer component.
To deal with that, we make $v=\rootNode(L)$ also the root of $\decTree{T}_v$.
In this way only one interval, i.e., the one assigned to $v$ needs to be taken into account while moving upwards.

Our method requires the decision tree to be structured only when processing the internal components of $T$. By not enforcing structuring with respect to the top layer component, a potential additive cost of a single query can be avoided. (This will be reflected in line~\ref{alg:line:optimization} of Algorithm \ref{alg:process-component}). However, this optimization does not change the worst-case cost of the strategy. In result, the decision tree generated by our algorithm is not structured in general. Although $\rootNode(\decTree{T})$, the first vertex queried by the strategy, always belongs to the top layer component of $T$, we may have that $\rootNode(\decTree{T}) \neq \rootNode(T)$.
In contrast, in a structured decision tree we always have a stronger property, namely $\rootNode(\decTree{T}) = \rootNode(T)$.

It turns out in our analysis that considering only structured decision trees introduces an overall multiplicative cost of $2$ to the performance of the algorithm:

\begin{lemma} \label{lem:structuring}
	Let $T=(V,E,\costFunc)$ be a tree with a up-monotonic and rounded cost function. There exists a structured decision tree $\decTree{T'}$ for $T$, such that $\COST{\decTree{T'}}\leq 2\:\OPT{T}$.
\end{lemma}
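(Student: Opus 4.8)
The plan is to start from an optimal decision tree $\decTree{T}$ for $T$, so that $\COST{\decTree{T}}=\OPT{T}$, and reshape it into a structured decision tree $\decTree{T'}$ while losing only a factor of two. The reshaping is dictated by the definition of structuredness: for every layer component $L$ I want $\rootNode(L)$ to sit at the top of the part of the decision tree handling $T_{\rootNode(L)}$. Concretely, I would process the layer components from the top one downwards and, for each $L$ with $v=\rootNode(L)$, promote $v$ to be the first vertex queried among all of $T_v$; inside a single layer component (whose vertices all share one weight) I would search by an optimal \emph{unweighted-style} routing that only has to steer the target into the correct subtree hanging below $L$, deferring the recursively structured trees for the components directly below $L$ to their own promoted roots. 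Lemma~\ref{lem:component-in-dec-tree} and Lemma~\ref{lem:dec-tree-for-subtree} make this surgery legitimate: they locate inside $\decTree{T}$ the unique highest vertex through which all of $T_v$ passes and guarantee that restricting and reconnecting around it still yields a valid decision tree, so that $\decTree{T'}$ is a genuine decision tree, structured with respect to every layer component by construction.

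The cost bound I would prove path by path. Fix a target $t$ and let $v_0=\rootNode(T),v_1,\dots,v_m$ be the layer-component roots on the $T$-path from $\rootNode(T)$ to $t$, with weights $w_0\ge w_1\ge\cdots\ge w_m$ (non-increasing because $\costFunc$ is up-monotonic, and powers of two because it is rounded). Each $v_i$ is an ancestor of $t$, so in any structured decision tree all of $v_0,\dots,v_m$ are queried on the way to $t$; grouping the queries on the $\decTree{T'}$-path to $t$ by the layer component they belong to, this path spends exactly $1+b_i$ queries of weight $w_i$ in $L_i$, where the $1$ is the promoted root $v_i$ and $b_i$ counts the additional within-$L_i$ queries of the routing. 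Writing $c_i$ for the number of weight-$w_i$ queries that $\decTree{T}$ spends in $L_i$ on its own path to $t$, the whole estimate reduces to the per-layer inequality $1+b_i\le 2c_i$, since summing it over $i$ bounds the cost of the $\decTree{T'}$-path to $t$ by $2\sum_i c_i w_i\le 2\,\OPT{T}$.

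To establish $1+b_i\le 2c_i$ I would introduce $d_i$, the minimum number of queries needed to route $t$ to the correct child within the $L_i$-piece when the root is not forced, and verify two facts: forcing $v_i$ first inflates the within-layer depth by at most one over the optimal routing of the subcomponent still containing $t$, so $b_i\le d_i$; and the optimal strategy must spend in $L_i$ at least $\max(d_i,1)$ queries on its path to $t$, so $c_i\ge\max(d_i,1)$. Together these give $1+b_i\le 1+d_i\le 2\max(d_i,1)\le 2c_i$, which also cleanly covers the degenerate case where the $L_i$-piece is a single vertex.

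The step I expect to be the main obstacle is exactly this chain, and in particular the inequality $c_i\ge\max(d_i,1)$: it requires arguing, via a separator/cut argument in $T$, that the queries the optimal strategy makes inside $L_i$ already constitute a legitimate solution of the within-layer routing problem, so their number cannot drop below the routing optimum $d_i$, and that steering $t$ into a deeper layer forces at least one query in $L_i$. Making the within-layer routing problem precise (its instance is $L_i\cap T_{v_i}$ with the subtrees below attached as indistinguishable ports), checking that the restriction of a global strategy to one layer yields a feasible routing, and confirming that promoting the layer root inflates the within-layer depth by at most one, are the technical points that carry the whole factor-two guarantee; the remaining bookkeeping for the validity and structuredness of $\decTree{T'}$ is routine given Lemmas~\ref{lem:component-in-dec-tree} and~\ref{lem:dec-tree-for-subtree}.
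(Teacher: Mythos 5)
Your construction is plausible, but the cost analysis breaks at exactly the step you flagged, and it cannot be repaired: the inequality $c_i\ge\max(d_i,1)$ is false, and more fundamentally the per-target bound you are aiming for --- that the structured tree's cost on each fixed target $t$ is at most twice the optimal tree's cost on that \emph{same} $t$ --- is itself false. Take $T$ to be the path $v_0-v_1-t$ rooted at $v_0$ with $\costFunc(v_0)=4$, $\costFunc(v_1)=2$, $\costFunc(t)=1$ (rounded, up-monotonic, three singleton layer components). The optimal decision tree queries $v_1$ first, so its path to the target $t$ is $(v_1,t)$ with cost $3$ and it spends $c_0=0$ queries in the top layer component: an optimal strategy may never visit a layer component on its way to a given target, because the answer to a query strictly below the $L_i$-piece can eliminate that entire piece at once. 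Any structured decision tree, however, must have $\rootNode(T)=v_0$ as its root and $v_1$ above $t$, so its path to $t$ costs $4+2+1=7>2\cdot 3$. The lemma still holds only because $\OPT{T}$ is a worst-case maximum (here $\OPT{T}=6$, attained at the target $v_0$): the forced weight-$w_i$ query at $\rootNode(L_i)$ must be charged against a weight-$w_i$ query that the optimal tree performs on a \emph{different} root-to-leaf path, so no summation of per-layer, per-target inequalities can close the argument.

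The paper sidesteps this by never comparing target-by-target against the optimum. It takes an arbitrary decision tree $\decTree{T}$ and performs local surgery: traversing $\decTree{T}$ from its root, at the first node $u'$ of a layer component $L$ that violates structuredness it re-roots the subtree $\decTree{T}_{u'}$ at $v=\rootNode(L)$, rebuilding the children via Lemmas~\ref{lem:component-in-dec-tree} and~\ref{lem:dec-tree-for-subtree}. The single inserted node has weight $\costFunc(L)$ and is inserted only into decision-tree paths that already contain $u'\in L$, a node of the same weight; since each node of $\decTree{T}$ belongs to one layer component and hence triggers at most one insertion, every root-to-leaf path of the decision tree at most doubles in weighted length. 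The essential difference from your argument is that the inserted copy of $\rootNode(L)$ is charged to the first $L$-query on the \emph{same decision-tree path}, which exists by construction, rather than to an $L$-query on the optimal path to the same target, which need not exist. To keep your top-down rebuilding construction you would need to replace the per-target accounting with a charging argument of this kind, or bound the structured cost directly against $\max_{t'}\costOfStrategyAt{t'}$ rather than against $\costOfStrategyAt{t}$.
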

\begin{proof}
	Let $\decTree{T}$ be any decision tree for $T$. Since $\decTree{T}$ is selected arbitrarily, it is enough to prove that $\COST{\decTree{T'}}\leq 2\:\COST{\decTree{T}}$.
	
	In order to construct the structured decision tree, we traverse $\decTree{T}$ in a breadth-first fashion starting at the root of $\decTree{T}$. The $\decTree{T'}$ is constructed in the process.
	
	By Lemma~\ref{lem:component-in-dec-tree}, for an arbitrary layer component $L$ of $T$, there exists a vertex $u'\in V(L)$ such that all vertices from $L$ belong to $\decTree{T}_{u'}$. Among all vertices of $L$, $u'$ is visited first during the traversal.
	
	Consider an arbitrary node $u$ being accessed during the traversal. Let $L$ be the layer component such that $u\in L$. If $\decTree{T}$ is structured with respect to $L$ we continue with the next node. Otherwise (which for a specific $L$ will happen only once, when $u=u'$, as argued in the last paragraph of the proof), we modify $\decTree{T}$ by performing the following steps. 
	
	Let $v=\rootNode(L)$. We want to replace the subtree $\decTree{T}_{u}$ with a subtree consisting of the same nodes but rooted at $v$. Consider a subtree $T'$ of the input $T$, which can still contain the target when the search process is about to query $u$ according to $\decTree{T}$. E.g., $V(T')=V(\decTree{T}_{u})$. Let $v$ has $k$ neighbors in $T'$. We attach under $v$ the decision trees created with Lemma~\ref{lem:dec-tree-for-subtree} for the $k$ components obtained by removing $v$ from $T'$.
	Lemma~\ref{lem:dec-tree-for-subtree} shows that the cost of any of the $k$ decision trees is not greater than $\COST{\decTree{T}_{u}}$. Accounting for the cost of $v$, structuring $L$ increases $\COST{\decTree{T}}$ no more than $\costFunc(L)$. Consider an arbitrary path $P$ from root to a leaf in $\decTree{T}$. For every node $u\in P$ that triggers structuring (informally, $u$ is $u'$ for some $L$) an additional node from $L$ is inserted in $P$. Thus, in the worst case, the weighted length of arbitrary $P$ increases 2 times when transforming $\decTree{T}$ into structured $\decTree{T'}$.
	
	By starting the breadth-first traversal at the root of $\decTree{T}$, structuring with regards to one layer component does not make $\decTree{T}$ not structured with regards to any of the previously structured layer components.
\end{proof}

By combining Lemmas~\ref{lem:2rounded_dec+tree_cost} and~\ref{lem:structuring} one obtains Corollary~\ref{cor:exists}, which relates the cost of an optimal decision tree for an input $T$ with the cost of a structured decision tree obtained through the analysis of the rounding of $T$.

\begin{corollary} \label{cor:exists}
	For any tree $T$ with a up-monotonic cost function there exists a structured decision tree with rounded weights, whose cost is at most $4\cdot\OPT{T}$.
\end{corollary}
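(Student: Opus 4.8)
The plan is to chain the two independent multiplicative factors of $2$: one coming from structuring a decision tree, and one from rounding the cost function. I would first record a small preliminary observation, namely that rounding preserves up-monotonicity. Since $\costFunc'$ is obtained from $\costFunc$ by raising each weight to the closest greater power of two, this operation is order preserving, so whenever $v$ lies on the path between $r$ and $u$ we still have $\costFunc'(u)\leq\costFunc'(v)$. Consequently the rounding $T'=(V,E,\costFunc')$ is a tree with an up-monotonic \emph{and} rounded cost function, which is exactly the hypothesis demanded by Lemma~\ref{lem:structuring}. This check is what allows the two lemmas to be applied to the same instance $T'$.

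Next I would apply Lemma~\ref{lem:structuring} to $T'$, obtaining a structured decision tree $\decTree{D}$ for $T'$ (a decision tree built with the rounded weights, as the statement of the corollary requires) with $\COST{\decTree{D}}=\COSTweights{\decTree{D}}{\costFunc'}\leq 2\,\OPT{T'}$; this accounts for the first factor of $2$, the overhead of enforcing the structured property. It then remains to replace $\OPT{T'}$ by $\OPT{T}$, and here I would invoke Lemma~\ref{lem:2rounded_dec+tree_cost}. Its proof in fact establishes $\OPT{T'}=\COSTweights{\decTree{T'}}{\costFunc'}\leq\COSTweights{\decTree{T}}{\costFunc'}\leq 2\,\COSTweights{\decTree{T}}{\costFunc}=2\,\OPT{T}$, where $\decTree{T}$ and $\decTree{T'}$ are the optimal decision trees for $T$ and $T'$, the middle inequality using optimality of $\decTree{T'}$ and the last one $\costFunc'\leq 2\costFunc$. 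This yields the second factor of $2$, the overhead of rounding. Combining the two bounds gives $\COST{\decTree{D}}\leq 2\,\OPT{T'}\leq 4\,\OPT{T}$, which is the claim.

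Because both ingredients are already established, I do not anticipate a genuine mathematical obstacle; the statement is a straightforward composition. The only points requiring care are bookkeeping ones. First, I must keep explicit track of which cost function every occurrence of $\COST{\cdot}$ is measured against, since the argument silently moves between $\costFunc$ and $\costFunc'$, and the corollary's notion of ``cost'' is the rounded one $\COSTweights{\decTree{D}}{\costFunc'}=\COST{\decTree{D}}$ for the decision tree of $T'$. Second, I would make it visible that the two factors of $2$ are genuinely independent and therefore multiply rather than interact: the structuring of Lemma~\ref{lem:structuring} is carried out entirely \emph{within} the rounded instance $T'$, whereas Lemma~\ref{lem:2rounded_dec+tree_cost} only compares the two optima $\OPT{T'}$ and $\OPT{T}$ and is not applied to $\decTree{D}$ itself.
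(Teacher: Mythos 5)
Your proposal is correct and matches the paper's argument: the paper obtains the corollary exactly by combining Lemma~\ref{lem:structuring} (factor $2$ for structuring within the rounded instance) with the bound $\OPT{T'}\leq 2\,\OPT{T}$ for the rounding (stated alongside Lemma~\ref{lem:2rounded_dec+tree_cost}). Your extra checks --- that rounding preserves up-monotonicity and that the two factors of $2$ apply to disjoint steps --- are exactly the implicit bookkeeping the paper leaves to the reader.
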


\section{Bottom-up tree processing} \label{sec:bottom-up-processing}

We extend the state-of-the-art ranking-based method for searching in trees with uniform costs of queries \cite{Onak,node-rank-linear-Schaffer89}, where the algorithm calculates a strategy function $f\colon V(T) \to \mathbb{N}$, given the input tree $T=(V,E)$. In our approach, to express the variable costs of queries we will use an interval-based extended strategy function.

Keeping in mind that the cost function is the time needed to perform a query, the values of an extended strategy function indicate the time periods in which the respective vertices will be queried.
Note that, e.g., due to the rounding, the time periods will be longer than the query durations and this is easily resolved either by introducing idle times or by following the actual query durations while performing a search based on a strategy produced by our algorithm.

In our approach, informally speaking, we will partition the rounding of the input tree into several subtrees (the layer components) and for each of them, we will use the algorithm from~\cite{Onak}.

A useful class of extended strategy functions is one that generates decision trees that are structured (see Definition~\ref{def:structured-decision-tree}).
\begin{definition} \label{def:structured-strategy-function}
	We say that an extended strategy function $f$, defined on the vertices of a tree $T$ with up-monotonic cost function, is \emph{structured} if for any layer component $L$, $f(u) > f(v)$ for each $v\in V(T_{u})$, where $u=\rootNode(L)$.
\end{definition}
It follows that a structured extended strategy function represents a structured decision tree. 

As mentioned earlier, we will process each layer component $L$ with the vertex extension operator but in order to do it correctly, we need to initialize appropriately the roots of the layer components below $L$.
For that, we define the following operators.
Let $f$ be an extended strategy function defined on the vertices of $T_{v}$, where $v$ is the root of a layer component $L'$. The \emph{structuring operator} assigns to $v$ the minimal interval so that $f$ is a structured extended strategy function on $T_{v}$. (We note that in our algorithm $v$ will have some interval assigned when the structuring operator is about to be applied to $v$. Hence, the application of this operator will assign the new required interval.) The \emph{cost scaling operator} aligns the interval attributed to $v$: if $f(v)=[a,b)$ and $L$ is the layer component directly above $L'$, the cost scaling operator assigns $v$ the interval  $[\costFunc(L)\lceil \frac{b}{\costFunc(L)} \rceil - \costFunc(L), \costFunc(L)\lceil \frac{b}{\costFunc(L)} \rceil)$.
We will say that vertex $v$ is \emph{aligned to} $\costFunc(L)$ after this modification (see the beginning of Section~\ref{sec:aligned-strategies} for a formal definition).
Informally speaking, this is done so that the children of the leaves in $L$ have intervals whose endpoints are multiples of $\costFunc(L)$ so that they become `compatible' with the allowed placements for intervals of the vertices from $L$.
Yet in other words, this allows us to translate the intervals of the children of the leaves in $L$ into integers that can be treated during bottom-up processing of $L$ in a uniform way.

\begin{observation} \label{obs:cost-of-cost-scaling}
	Let a layer component $L'$ be directly below a layer component $L$ in $T=(V,E,\costFunc)$. If $\rootNode(L')$ is assigned an interval $[a, b)$, whose endpoints are consecutive multiples of $\costFunc(L')$ (i.e. $a=k\costFunc(L')$ and $b=(k+1)\costFunc(L')$ for some integer $k$), then the cost scaling operator assigns to $\rootNode(L')$ an interval $[a', b')$ such that $b' \leq b + \costFunc(L) - \costFunc(L')$.
\end{observation}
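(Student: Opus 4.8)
The plan is to unwind the definition of the cost scaling operator and then exploit the fact that, in this setting, both layer weights are powers of two. Recall that the operator sets the new right endpoint of $\rootNode(L')$ to $b'=\costFunc(L)\lceil b/\costFunc(L)\rceil$, i.e., $b'$ is the smallest multiple of $\costFunc(L)$ that is at least $b$. The elementary estimate $\lceil x\rceil < x+1$ immediately yields $b' < b+\costFunc(L)$. This trivial bound is weaker than the claimed one by exactly $\costFunc(L')$, so the whole point of the observation is to recover this additive $\costFunc(L')$ slack.

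To do so, I would use that $\costFunc$ is rounded here, so every layer weight is a power of two. Since $L'$ is directly below $L$ and $\costFunc$ is up-monotonic, $\costFunc(L)\geq\costFunc(L')$; as both are powers of two, $\costFunc(L)$ is an integer multiple of $\costFunc(L')$, say $\costFunc(L)=m\,\costFunc(L')$ with $m\geq 1$. Now $b'$ is a multiple of $\costFunc(L)$, hence of $\costFunc(L')$, while $b$ is a multiple of $\costFunc(L')$ by hypothesis; therefore the gap $b'-b$ is a nonnegative multiple of $\costFunc(L')$. Combining this with $b'-b<\costFunc(L)=m\,\costFunc(L')$ forces $b'-b$ into the set $\{0,\costFunc(L'),\dots,(m-1)\costFunc(L')\}$, whose largest element is $(m-1)\costFunc(L')=\costFunc(L)-\costFunc(L')$. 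Hence $b'\leq b+\costFunc(L)-\costFunc(L')$, as required.

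The argument is short, and the only real subtlety---the step to get right---is the power-of-two multiplicity claim. Without $\costFunc(L)$ being a multiple of $\costFunc(L')$, the difference $b'-b$ would not be confined to a grid of spacing $\costFunc(L')$, and the improvement of $\costFunc(L')$ over the trivial rounding bound would not go through; everything else is a routine ceiling estimate.
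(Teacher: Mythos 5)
Your proof is correct and follows essentially the same route as the paper's: both hinge on the ceiling bound $b'<b+\costFunc(L)$ together with the fact that $\costFunc(L)$ is a multiple of $\costFunc(L')$ (which the paper invokes without spelling out the power-of-two justification you give). The only cosmetic difference is that you argue on the right endpoints, showing $b'-b$ lies on the $\costFunc(L')$-grid, whereas the paper argues via the left endpoints ($a'\leq a$); the two are interchangeable.
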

\begin{proof}
	Consider a layer component $L$ of $T=(V,E,\costFunc)$ and a structured extended strategy function $f$ defined on the vertices below $L$. The cost scaling operator selects $b'$ as the smallest multiple of $\costFunc(L)$ that is greater or equal than $b$. Since $a'=b'-\costFunc(L)$, we have that $a'< b$. On the other hand, because $b-a=\costFunc(L')$, $a$ is the highest multiple of $\costFunc(L')$ that is less than $b$. Subsequently, because $w(L)$ is divisible by $w(L')$, we obtain that $a'\leq a$. What follows, $a'+\costFunc(L')\leq b$, and finally $b'\leq b + \costFunc(L)-\costFunc(L')$.
\end{proof}

\section{Aligned strategies} \label{sec:aligned-strategies}

In this section we give foundations for analyzing trees with down-monotonic cost functions, i.e. where the weights of the visited vertices are non-decreasing while traversing any path from the root to a leaf. In other words, we consider rooted trees $\inputTree{T}{\costFunc}$, where the root of $T$ is selected properly according to the definition of a down-monotonic cost function. Concepts provided in this section are used in Section~\ref{sec:algorithm-down-monotonic}, for the analysis of Algorithm~\ref{alg:down-monotonic}.

The first thing the Algorithm~\ref{alg:down-monotonic} does is to switch from the input tree to its rounding. Although the terms below apply to general weight functions, we should remember that they will be applied to the rounded ones in our analysis.

Given an extended strategy function $f$, we say that the vertex $v$ is \emph{aligned} if the endpoints of the interval $f(v)$ are multiples of $\costFunc(v)$. We also say that $f$ is \emph{aligned} if all vertices are aligned.

For an extended strategy function $f$, a vertex $u$ is \emph{screening} a vertex $v$ if and only if, on the path that connects $v$ and $u$, $u$ is the only vertex such that $f(v)<f(u)$.
In other words, for any path $P$ that starts with $v$, $u\in V(P)$ is screening $v$ if it is the closest vertex to $v$, such that $f(v)<f(u)$.
\begin{observation} \label{obs:screening-vertices}
    If $s,s'$ are distinct vertices screening some vertex, then $f(s)\cap f(s')=\emptyset$.
    \qed
\end{observation}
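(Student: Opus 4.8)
The plan is to argue by contradiction. Suppose $s$ and $s'$ are distinct vertices both screening a common vertex, which I will call $v$, and suppose toward a contradiction that $f(s)\cap f(s')\neq\emptyset$. Since $f$ is an extended strategy function (Definition~\ref{def:extended_strategy_function}) and $s\neq s'$, this overlap forces the existence of a vertex $v_3$ on the path connecting $s$ and $s'$ with $f(v_3)>f(s)\cup f(s')$; in particular $f(v_3)>f(s)$ and $f(v_3)>f(s')$.

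The key structural step I would use is the elementary tree fact that the path from $s$ to $s'$ is contained in the union of the path $P_s$ connecting $v$ and $s$ with the path $P_{s'}$ connecting $v$ and $s'$. Concretely, letting $w$ be the last vertex common to $P_s$ and $P_{s'}$ when both are traversed starting from $v$, the $s$--$s'$ path splits into a subpath of $P_s$ (from $s$ up to $w$) followed by a subpath of $P_{s'}$ (from $w$ down to $s'$). Hence $v_3$ lies on $P_s$ or on $P_{s'}$, and by the symmetry between $s$ and $s'$ I may assume $v_3\in V(P_s)$.

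I would then derive the contradiction from transitivity of the interval order. Because $s$ screens $v$ we have $f(v)<f(s)$, and since $v_3\in V(P_s)$ satisfies $f(v_3)>f(s)$ it must be that $v_3\neq s$. As $f(v)$ lies entirely to the left of $f(s)$ and $f(s)$ lies entirely to the left of $f(v_3)$, transitivity of $<$ on intervals gives $f(v)<f(v_3)$. However, the screening hypothesis states that $s$ is the \emph{only} vertex on $P_s$ whose interval lies entirely to the right of $f(v)$; since $v_3\in V(P_s)\setminus\{s\}$, this yields $\neg\bigl(f(v)<f(v_3)\bigr)$, contradicting the previous line. Therefore $f(s)\cap f(s')=\emptyset$.

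I expect the only delicate point to be the careful bookkeeping of the interval order under the left-closed, right-open convention: one must confirm that $f(v)<f(s)$ together with $f(s)<f(v_3)$ indeed implies $f(v)<f(v_3)$ (using that every interval is nonempty), and one must verify that $v_3$ is genuinely distinct from both $s$ and $s'$ so that the screening property applies to it. The tree-topology path-containment step and the final contradiction are routine once these conventions are pinned down.
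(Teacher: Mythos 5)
Your proof is correct, and since the paper states this observation with the proof omitted (it is marked as immediate), your argument supplies exactly the intended reasoning: a separating vertex $v_3$ with $f(v_3)>f(s)\cup f(s')$ would have to lie on the path from $v$ to $s$ or from $v$ to $s'$, and by transitivity of the interval order it would satisfy $f(v)<f(v_3)$ while being distinct from $s$ and $s'$, contradicting the uniqueness in the definition of a screening vertex. The delicate points you flag (nonemptiness of intervals for transitivity, and $v_3\notin\{s,s'\}$ because $f(v_3)>f(s)\cup f(s')$ rules both out) are handled correctly.
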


Furthermore, for an arbitrary vertex $v$, the \emph{screening neighborhood} of $v$ is the set $\visibilityN{v}\subseteq V(T)$ defined as follows.
We have that $u\in\visibilityN{v}$ if and only if either $u$ is screening $v$ or the path that connects $u$ and $v$ has no vertex screening $v$.
The subtree $T[\visibilityN{v}]$ is denoted by $\visibilityT{v}$.
For an example see Figure~\ref{fig:visibility}.
By definition, we include $v$ to be in $\visibilityN{v}$. Some of the properties of the screening neighborhood are as follows.

\begin{figure}
    \centering
    \includegraphics[scale=1.3]{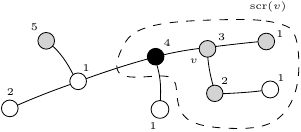}
    \caption{Example of a screening neighborhood. Labels at the vertices show the corresponding values of some strategy function. The dashed line encloses the screening neighborhood of the vertex $v$. The (single) vertex screening  $v$ is filled in black. All remaining vertices visible from $v$ are filled with light gray. Note that the vertex with label $5>f(v)$ is visible from $v$, although it does not belong to the screening neighborhood of $v$. All vertices that are not visible from $v$ are white.
    }
    \label{fig:visibility}
\end{figure}

\begin{observation} \label{obs:leafs}
    If $u\in \visibilityN{v}$ is a leaf of $\visibilityT{v}$, then we have that $u$ is screening $v$ or is a leaf in $T$ (or both).
    \qed
\end{observation}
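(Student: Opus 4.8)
The plan is to prove a stronger local statement that makes the observation immediate: for every $u\in\visibilityN{v}$ that is \emph{not} screening $v$, all neighbors of $u$ in $T$ already lie in $\visibilityN{v}$. Granting this claim, the observation follows at once. Let $u$ be a leaf of $\visibilityT{v}=T[\visibilityN{v}]$. If $u$ screens $v$ we are done, so assume it does not; then by the claim every neighbor of $u$ in $T$ belongs to $\visibilityN{v}$, so each such edge survives in the induced subtree and $\deg_{\visibilityT{v}}(u)=\deg_T(u)$. Since $u$ is a leaf of $\visibilityT{v}$ we have $\deg_{\visibilityT{v}}(u)=1$, hence $\deg_T(u)=1$ and $u$ is a leaf of $T$, as required.

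To set up the claim I would first record the (intrinsic) fact that whether a vertex $x$ screens $v$ depends only on the unique path connecting $v$ and $x$ in $T$, not on any larger path into which it is embedded. Next I would unpack the two hypotheses on $u$: since $u\in\visibilityN{v}$ but $u$ does not screen $v$, the definition of the screening neighborhood forces the path connecting $u$ and $v$ to contain \emph{no} vertex screening $v$.

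The core of the argument is then a case split on an arbitrary neighbor $w$ of $u$ in $T$, according to the position of $w$ relative to the $v$-to-$u$ path. If $w$ is the neighbor of $u$ lying towards $v$, then the path from $w$ to $v$ is a subpath of the screening-free path from $u$ to $v$, so it is itself screening-free and $w\in\visibilityN{v}$ by the second clause of the definition. Otherwise $w$ lies strictly farther from $v$ than $u$, so the path from $w$ to $v$ is $w$ prepended to the path from $u$ to $v$; here, if $w$ screens $v$ then $w\in\visibilityN{v}$ by the first clause, and if $w$ does not screen $v$ then—combining this with the fact that the $u$-to-$v$ portion is screening-free—the whole $w$-to-$v$ path is screening-free, so $w\in\visibilityN{v}$ by the second clause. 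In every case $w\in\visibilityN{v}$, proving the claim.

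I expect the only delicate point to be the careful reading of the two-clause definition of $\visibilityN{v}$ together with the path-intrinsic nature of screening; these together let the "$w$ beyond $u$" case go through purely by bookkeeping over which path vertices can screen $v$, with no reference to the actual $f$-values. The degenerate case $u=v$ requires no separate treatment: $v$ never screens itself, so $v$ always satisfies the hypothesis of the claim, and the case split specializes correctly with every neighbor of $v$ falling into the "beyond" branch over the trivial $v$-to-$v$ path.
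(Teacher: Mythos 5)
Your proof is correct. The paper states this observation without any written proof (it is asserted with a \qed directly after the statement), so there is nothing to compare against; your argument supplies exactly the missing justification. The key step --- that any $u\in\visibilityN{v}$ which is not screening $v$ has all of its $T$-neighbors in $\visibilityN{v}$, because screening is determined by the unique $v$-to-$x$ path and the two clauses of the definition of $\visibilityN{v}$ then absorb every neighbor --- is the natural way to make the claim rigorous, and your handling of the degenerate case $u=v$ is also sound.
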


As stated in Observation~\ref{obs:visN-and-visibility}, it may happen that a vertex is visible from $v$, but does not belong to $\visibilityN{v}$. Consider a vertex $u$ visible from $v$ with $f(v)<f(u)$, and a vertex $u'$ such that $u$ lies on the path between $v$ and $u'$ and $f(u)<f(u')$. Even if $u'$ was visible from $v$ (which then implies $f(v)<f(u')$), it would not belong to $\visibilityN{v}$.
In general, not every vertex visible from $v$ belongs to $\visibilityN{v}$. At the same time not every vertex in $\visibilityN{v}$ must be visible from $v$. 

\begin{observation} \label{obs:visN-and-visibility}
If a vertex $u$ is visible from $v$ and $f(u)<f(v)$, then $u\in\visibilityN{v}$.
\qed
\end{observation}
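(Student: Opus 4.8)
The plan is to argue by contradiction and reduce everything to the transitivity of the interval order $<$. First I would dispose of the degenerate case: since no interval satisfies $f(u)<f(u)$, the hypothesis $f(u)<f(v)$ already forces $u\neq v$ (and in any event $v\in\visibilityN{v}$ holds by the stated convention), so I may assume $u\neq v$ throughout.

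Assuming $u\neq v$, I would suppose toward a contradiction that $u\notin\visibilityN{v}$. Negating the definition of the screening neighborhood, this means both that $u$ does not screen $v$ and that the path $P$ connecting $v$ and $u$ carries some vertex $s$ that \emph{does} screen $v$. The first task is to locate $s$ strictly inside $P$. A screening vertex satisfies $f(v)<f(s)$ (by definition $s$ is a vertex on the relevant subpath with $f(v)<f(s)$), which gives $s\neq v$; and since we assumed $u$ itself does not screen $v$, we get $s\neq u$. Hence $s$ is an internal vertex of $P$.

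The heart of the argument is then a one-line chain of inequalities. From $s$ screening $v$ I have $f(v)<f(s)$, and from the hypothesis I have $f(u)<f(v)$; because $<$ on left-closed, right-open intervals is transitive (the right endpoint of $f(u)$ is at most the left endpoint of $f(v)$, which in turn is at most the left endpoint of $f(s)$), these combine to $f(u)<f(s)$. But $s$ lies on the path connecting $v$ and $u$, and the visibility of $u$ from $v$ forbids exactly a vertex $x$ on that path with $f(u)<f(x)$. Taking $x=s$ yields the contradiction, so $u\in\visibilityN{v}$.

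The only real subtlety I anticipate, and the step I would be most careful about, is the exact reading of ``the path connecting $u$ and $v$'' in the definitions of visibility and of screening. The argument hinges on $s$ being an \emph{internal} vertex of $P$ so that it genuinely blocks visibility, and on the observer $v$ itself not counting as a blocker, since otherwise the two hypotheses ($u$ visible from $v$ and $f(u)<f(v)$) would already be contradictory and the statement vacuous. For this reason I would record explicitly that $s\neq v$ and $s\neq u$ before invoking visibility, guaranteeing that the contradiction is drawn from a vertex strictly between the two endpoints.
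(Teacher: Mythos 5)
Your proof is correct: the paper states this observation with no proof at all (treating it as immediate), and your contradiction argument --- a screening vertex $s$ on the $u$--$v$ path would satisfy $f(u)<f(v)<f(s)$ by transitivity of the interval order and hence block visibility --- is exactly the intended one-line justification. Your care about whether $v$ itself counts as a blocker is well placed and resolves consistently with the paper's Figure~\ref{fig:visibility}, where vertices with values below $f(v)$ are indeed visible from $v$.
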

We will several times use an important property that for any $u\in \visibilityN{v}\setminus\{v\}$, either $f(u)<f(v)$ or $f(v)<f(u)$.
This fact basically allows to simplify arguments by reducing the number of possible subcases in the following lemmas.
\begin{observation} \label{obs:vertices_in_visibilityN}
     For each $u\in \visibilityN{v}$, $u\neq v$, it holds $f(v)\cap f(u)=\emptyset$. 
\end{observation}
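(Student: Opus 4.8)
The plan is to argue by contradiction, combining the defining property of an extended strategy function (Definition~\ref{def:extended_strategy_function}) with the definition of the screening neighborhood $\visibilityN{v}$. Suppose that some $u\in\visibilityN{v}\setminus\{v\}$ satisfied $f(v)\cap f(u)\neq\emptyset$. Then, by Definition~\ref{def:extended_strategy_function}, the path $P$ connecting $v$ and $u$ would contain a vertex $w$ with $f(w)>f(v)\cup f(u)$, so in particular $f(v)<f(w)$ and $f(u)<f(w)$. Since a proper interval cannot be strictly greater than itself (recall $I>I$ would require its left endpoint to be at least its right endpoint), $w$ is distinct from both $v$ and $u$, and hence $w$ is an interior vertex of $P$ satisfying $f(v)<f(w)$.

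Next I would split according to the two ways in which $u$ can belong to $\visibilityN{v}$. If $u$ is screening $v$, then by definition $u$ is the only vertex on $P$ with $f(v)<f(\cdot)$; but $w$ is a second such vertex, a contradiction. If instead no vertex of $P$ screens $v$, I would exhibit a screening vertex explicitly: let $A$ be the set of vertices $x\in V(P)\setminus\{v\}$ with $f(v)<f(x)$, which is nonempty because $w\in A$, and let $s$ be the vertex of $A$ closest to $v$ along $P$. By the choice of $s$, every vertex strictly between $v$ and $s$ fails $f(v)<f(\cdot)$, so $s$ is the unique vertex on the subpath of $P$ from $v$ to $s$ exceeding $f(v)$. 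That is precisely the condition for $s$ to screen $v$, and since $s\in V(P)$ this contradicts the assumption that no vertex of $P$ screens $v$.

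Both cases being impossible, the supposition $f(v)\cap f(u)\neq\emptyset$ must fail, which establishes $f(v)\cap f(u)=\emptyset$ for every $u\in\visibilityN{v}\setminus\{v\}$.

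I expect the only genuine subtlety to lie in the second case. The definition of $\visibilityN{v}$ phrases membership negatively (``the path has no vertex screening $v$''), so the contradiction is not immediate; one must manufacture a screening vertex, namely the vertex closest to $v$ whose interval exceeds $f(v)$, and verify that it satisfies the screening condition. The remaining steps are routine once one records that $f(w)>f(v)\cup f(u)$ yields both $f(v)<f(w)$ and $f(u)<f(w)$, which in turn forces $w\notin\{v,u\}$.
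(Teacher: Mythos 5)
Your proof is correct and follows essentially the same route as the paper: assume an overlap, invoke the extended-strategy-function property to obtain a vertex on the $v$--$u$ path exceeding $f(v)$, and derive from it a vertex screening $v$ that contradicts $u\in\visibilityN{v}$. Your version is in fact slightly more careful than the paper's, since you take the vertex of $A$ \emph{closest} to $v$ (which matches the definition of screening directly) and make the two-case structure of membership in $\visibilityN{v}$ explicit.
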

\begin{proof}
    By contradiction, assume a $u\in \visibilityN{v}$ exists such that $f(u)\cap f(v) \neq \emptyset$. Since $f$ is an extended strategy function, there must be an $x$ on the path between $u$ and $v$, such that $f(x)>f(v)$.
    Among all such vertices $x$ consider one with the maximum value of $f(x)$.
    This choice implies that $f(x)>f(v)$ and there is no vertex $y$ on the path between $v$ and $x$ such that $f(y)>f(x)\cup f(v)$.
    This gives that $x$ is screening $v$ contradicting $u\in\visibilityN{v}$.
\end{proof}

\begin{lemma} \label{lem:independent_strategies}
    Let $f$ be an extended strategy function for $T=(V,E,\costFunc)$. Consider an arbitrary $v\in V(T)$. Let $s_{min}\in \visibilityN{v}$ be the vertex screening $v$ with the minimal value of $f$, among all vertices screening $v$.
    Consider an arbitrary function $f'\colon V(T) \to \{[a,b)\colon0\leq a<b\}$, $\rvert f'(v)\rvert \geq\costFunc(v)$, which satisfies the following conditions:
    \begin{enumerate}
        \item $f'(u)=f(u)$, for any $u\not\in\visibilityN{v}$,
        \item $f'(u)=f(u)$, if $u\in\visibilityN{v}$ is screening $v$,
        \item $f'(u) < f(s_{min})$, for any $u\in\visibilityN{v}$ that is not screening $v$. (This includes $v$.)
    \end{enumerate}
    We have the following: if $f'$ is an extended strategy function for $\visibilityT{v}$, then $f'$ is also an extended strategy function for the entire $T$.
\end{lemma}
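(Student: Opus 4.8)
The plan is to verify directly that $f'$ satisfies the defining property of an extended strategy function (Definition~\ref{def:extended_strategy_function}) on all of $T$: for every pair of distinct $x,y$ with $f'(x)\cap f'(y)\neq\emptyset$, I must exhibit a vertex $v_3$ on the path between them with $f'(v_3)>f'(x)\cup f'(y)$. (The size requirement $\rvert f'(u)\rvert\geq\costFunc(u)$ holds for every vertex: it is assumed on $\visibilityN{v}$ and $f'=f$ elsewhere.) The argument rests on structural facts I would establish first. Since $\visibilityT{v}=T[\visibilityN{v}]$ is connected and, by Observation~\ref{obs:leafs}, its leaves are exactly the screening vertices (any other leaf of $\visibilityT{v}$ being a leaf of $T$), deleting a screening vertex $s$ splits off the part $B_s$ of $T$ lying beyond $s$; every vertex outside $\visibilityN{v}$ lies in exactly one such $B_s$, and any path from such a vertex into $\visibilityN{v}$ or into a different part $B_{s'}$ must pass through $s$. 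Moreover, a non-screening vertex $u$ of $\visibilityN{v}$, which I call \emph{inner}, has no vertex of value exceeding $f(v)$ on its path to $v$, so by Observation~\ref{obs:vertices_in_visibilityN} we get $f(u)<f(v)$ in the original $f$ (for $u\neq v$); and since $s$ screens $v$ we have $f(v)<f(s)$, while minimality of $s_{min}$ together with Observation~\ref{obs:screening-vertices} gives $f(s_{min})\leq f(s)$. Combining this with condition~3, every modified interval satisfies $f'(u)<f(s_{min})\leq f(s)=f'(s)$ for every screening vertex $s$.

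With these facts in hand the case analysis is short. If both $x,y\in\visibilityN{v}$, their connecting path stays inside the connected subtree $\visibilityT{v}$, so the hypothesis that $f'$ is an extended strategy function for $\visibilityT{v}$ supplies the required $v_3$. If both $x,y$ lie outside $\visibilityN{v}$ and in the same $B_s$, their path stays in $B_s$, where $f'=f$, so the separator provided by $f$ works verbatim. If they lie outside but in distinct parts $B_s$ and $B_{s'}$, the $f$-separator $w$ satisfies $f(w)>f(x)\cup f(y)$; if $w$ is unmodified we are done, and if $w$ is inner then $f(w)\leq f(v)<f(s)$, so the screening vertex $s$ on the connecting path is itself unmodified with $f'(s)=f(s)>f(x)\cup f(y)$ and serves as the separator.

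The genuinely delicate case is the mixed one: $x$ outside, say $x\in B_s$, and $y$ inner (hence modified). Here the path runs $x\to\cdots\to s\to\cdots\to y$, and since $f'(x)=f(x)$ meets $f'(y)<f(s_{min})\leq f(s)$, the interval $f(x)$ must reach below the left endpoint of $f(s)$. I would then split on the position of $f(x)$ relative to $f(s)$: if $f(x)<f(s)$ then $s$ dominates both $f'(x)$ and $f'(y)$ and is the separator; otherwise $f(x)$ and $f(s)$ necessarily overlap, and applying the extended-strategy-function property of the original $f$ to the pair $x,s$ yields a vertex $w$ strictly between them, hence in $B_s$, unmodified, and on the path to $y$, with $f(w)>f(x)\cup f(s)>f'(x)\cup f'(y)$. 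I expect this overlap subcase to be the main obstacle, because here the screening vertex alone does not dominate the outer vertex's interval and one must fall back on the original function's separator while checking that it lies on the correct path and remains unmodified. Finally, the degenerate situation in which $v$ has no screening vertex makes $\visibilityN{v}=V(T)$, so the claim reduces immediately to the hypothesis.
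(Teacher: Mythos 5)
Your overall route is the same as the paper's: a direct verification of Definition~\ref{def:extended_strategy_function} by case analysis on whether each of $x,y$ lies inside $\visibilityN{v}$, is screening, or lies beyond a screening vertex, using the hypothesis for pairs inside $\visibilityT{v}$ and the separator supplied by $f$ otherwise. Most of it is correct, and in the mixed case your treatment is in fact more careful than the paper's own: when the outside vertex $x\in B_s$ has $f(x)$ overlapping $f(s)$ (which the hypotheses do not exclude, since $f(x)$ only needs to meet the modified interval $f'(y)<f(s_{min})\leq f(s)$ and may extend upward into $f(s)$), the screening vertex $s$ alone does not dominate $f'(x)$; your fallback to the $f$-separator of the pair $x,s$ --- which lies strictly between them, hence in $B_s$, is unmodified, and sits on the $x$--$y$ path --- is exactly what is needed there. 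The paper's corresponding step (its Case~3.2) declares $s$ to be the separator without checking that it dominates the interval of the outside vertex, so your extra subcase is an improvement rather than an overcomplication.

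However, your case enumeration misses one combination: $x\notin\visibilityN{v}$ and $y\in\visibilityN{v}$ with $y$ \emph{screening} $v$ (the paper's Case~3.1). Your ``mixed'' case treats only $y$ inner, and your ``both outside'' cases do not cover it. The case does require an argument: both intervals are unchanged, so $f$ supplies a separator $z$ on the $x$--$y$ path with $f(z)>f(x)\cup f(y)$, but one must still verify that $z$ is unmodified before concluding $f'(z)=f(z)$. This holds because $f(z)>f(y)=f(s')$ for the screening vertex $s'=y$, whereas any inner vertex $u\neq v$ satisfies $f(u)<f(v)<f(s')$ and $v$ itself satisfies $f(v)<f(s')$; hence $z$ cannot be inner, so it is either outside $\visibilityN{v}$ or screening, and in both situations $f'(z)=f(z)$. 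The gap is easily filled, but as written your proof is incomplete without this case.
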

\begin{proof}
The proof is by case analysis dictated by the definition of visibility function.
Consider two vertices $x,y\in V(T)$ such that $x\neq y$ and $f'(x)\cap f'(y)\neq \emptyset$. For each possible choice of $x$ and $y$, we show that on the path connecting $x$ and $y$, there is a vertex $z$ such that $f'(x)\cup f'(y)<f'(z)$.
Note that based on Observation~\ref{obs:screening-vertices}, $s_{min}$ is unique in $\visibilityN{v}$.
\begin{itemize}
    \item Let $x,y\in \visibilityN{v}$.
        By assumption, $f'$ is an extended strategy function for $\visibilityT{v}$. Thus, on the path between $x$ and $y$ there exists a required vertex $z$.
    \item Let $x,y\not\in\visibilityN{v}$.
        For all vertices outside of $\visibilityT{v}$, $f'$ is identical to $f$. Because $f$ is an extended strategy function, there exists $z\in V(T)$, such that $f(x)\cup f(y)<f(z)$.
        \begin{enumerate}
            \item If $z$ is outside of $\visibilityT{v}$, then $f'(z)=f(z)$ and consequently $f'(x)\cup f'(y)<f'(z)$.
            \item If $z\in \visibilityN{v}$ and $z$ is screening $v$, then also in this case $f'(z)=f(z)$ and consequently $f'(x)\cup f'(y)<f'(z)$.
            \item If $z\in\visibilityN{v}$ and $z$ is not screening $v$, let $P$ be the path connecting $x$ and $y$. (See Figure~\ref{fig:lemma11} for an illustration.)
            \begin{figure} \label {fig::visibility}
                \centering
                \includegraphics[scale=0.8]{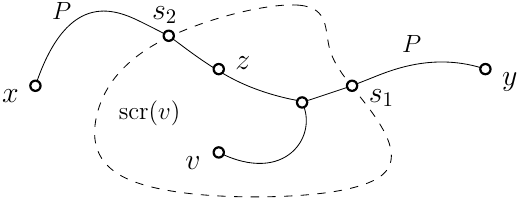}
                \caption{An illustration for the case when $x,y\not\in\visibilityN{v}$ in Lemma~\ref{lem:independent_strategies}.}
                \label{fig:lemma11}
            \end{figure}
            Because the endpoints $x,y\not\in\visibilityN{v}$ and $z\in \visibilityN{v}$ is not screening $v$, it follows from the definition of $\visibilityN{v}$ that $P$ contains vertices $s_1$ and $s_2$ screening $v$, such that the path between $s_1$ and $s_2$ contains $z$. Let $s=\textup{argmax}_{s\in\{s_1,s_2\}}f(s)$. Because $z$ is on a path either between $v$ and $s_1$ or $v$ and $s_2$, based on the definition of a screening vertex, we have that $f(z)<f(s)$.                        
            Subsequently, we have that $f'(x)=f(x)<f(z)<f(s)=f'(s)$. By similar arguments, $f'(y)<f'(s)$. Hence, on the path between $x$ and $y$, $s$ is the required vertex separating $x$ and $y$.
        \end{enumerate}
    \item Let $x\in \visibilityN{v}$ and $y\notin \visibilityN{v}$. Let $P$ be the path connecting $x$ and $y$. Based on the definition of $\visibilityN{v}$, $P$ contains at least one vertex screening $v$ (otherwise, for  $x\in \visibilityN{v}$, $y$ would have to belong to $\visibilityN{v}$ too). Let $s\in V(P)$ be the vertex screening $v$ that is closest to $y$. We note that $P$ contains two vertices screening $v$ only when $x$ is screening $v$.
    \begin{enumerate}
        \item Let $x$ be screening $v$. We have $f'(x)=f(x)$. Because $f$ is an extended strategy function for $T$, there is $z\in V(P)$, such that $f(x)\cup f(y)<f(z)$. We consider two subcases.
            \subitem Let $x=s$. Recall that $s\in V(P)$ is the vertex screening $v$ that is closest to $y$. Hence, $x$ is the only vertex in $P$ that belongs to $\visibilityN{v}$. Subsequently, $z\notin\visibilityN{v}$ and we have that $f'(z)=f(z)$. What follows, $f'(x)=f(x)<f(z)=f'(z)$.
            \subitem Let $x\neq s$. If $z\notin\visibilityN{v}$ or $z$ is screening $v$ (i.e., $z$=$s$), then $f'(z)=f(z)$ and $z$ is the required vertex separating $x$ and $y$. If $z\in\visibilityN{v}$ and $z$ is not screening $v$, we have that $f(z)<f(s)$. What follows, $f'(x)=f(x)<f(z)<f(s)=f'(s)$.
        \item If $x$ is not screening $v$, then by definition of $f'$ and $s_{min}$, $f'(x)<f'(s_{min})\leq f'(s)$. What follows, $s$ is the required vertex separating $x$ and $y$.
    \end{enumerate}
\end{itemize}
\end{proof}

The technical Lemma~\ref{lem:technical-property} shows a property of the screening neighborhood of specific unaligned vertices. This property is later used in the proof of Lemma~\ref{lem:aligned_strategy}.
\begin{lemma} \label{lem:technical-property}
    Consider an extended strategy function $f$ for $T=(V,E,\costFunc)$. Let $v\in V(T)$ be an unaligned vertex, such that $f(v)$ is minimal. Denote $f(v)=[x_v, y_v)$, $|\costFunc(v)|=2^{k}$. Take $a\in \mathbb{N}$ such that $a2^{k}<x_v<(a+1)2^{k}$. For each $u\in \visibilityN{v}$ such that $f(u)\cap [a2^{k}, x_v)\neq \emptyset$, it holds $f(u)\subset [a2^{k}, x_v)$.
\end{lemma}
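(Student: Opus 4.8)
The plan is to fix such a $u$ and show directly that both endpoints of $f(u)$ lie in $[a2^{k},x_v)$. Write $f(u)=[a_u,b_u)$ and $\costFunc(u)=2^{j}$ (a power of two, since the weights are rounded). Throughout I would use that we may take $f$ to satisfy $|f(w)|=\costFunc(w)$ for every vertex $w$ (Observation~\ref{obs:optimal-strategy-function}); this ``tightness'' will be the decisive ingredient.

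First I would pin down the position of $f(u)$ relative to $f(v)$. Since the half-open interval $f(v)=[x_v,y_v)$ is disjoint from $[a2^{k},x_v)$, the hypothesis $f(u)\cap[a2^{k},x_v)\neq\emptyset$ already forces $u\neq v$, and hence by Observation~\ref{obs:vertices_in_visibilityN} we have $f(u)\cap f(v)=\emptyset$. The overlap with the gap gives $a_u<x_v$; combined with disjointness from $[x_v,y_v)$ this yields $b_u\leq x_v$, i.e. $f(u)<f(v)$. So the right endpoint is already where we want it, and the whole task reduces to proving $a_u\geq a2^{k}$.

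Next I would invoke the minimality of $f(v)$. Because $a_u<x_v$ and $f(v)$ is the (left-)minimal interval among the \emph{unaligned} vertices, $u$ cannot itself be unaligned; hence $u$ is aligned, so $a_u$ and $b_u$ are multiples of $2^{j}$ and, by tightness, $b_u=a_u+2^{j}$. The overlap with the gap also gives $b_u>a2^{k}$, while $b_u\leq x_v<(a+1)2^{k}$; thus $b_u$ is a multiple of $2^{j}$ lying strictly between the consecutive multiples $a2^{k}$ and $(a+1)2^{k}$ of $2^{k}$. An open window of length $2^{k}$ between consecutive multiples of $2^{k}$ can contain a multiple of $2^{j}$ only when $2^{j}<2^{k}$, so necessarily $j<k$; in particular $a2^{k}$ is itself a multiple of $2^{j}$.

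Finally I would close the argument by contradiction. Suppose $a_u<a2^{k}$. Since $a_u$ and $a2^{k}$ are both multiples of $2^{j}$, the largest multiple of $2^{j}$ strictly below $a2^{k}$ is $a2^{k}-2^{j}$, so $a_u\leq a2^{k}-2^{j}$ and therefore $b_u=a_u+2^{j}\leq a2^{k}$, contradicting $b_u>a2^{k}$. Hence $a_u\geq a2^{k}$, and together with $b_u\leq x_v$ this gives $f(u)\subset[a2^{k},x_v)$. The step I expect to be the crux is precisely the reduction to aligned, tight intervals: without $|f(u)|=\costFunc(u)$ an aligned interval of some smaller cost $2^{j}$ could straddle the boundary $a2^{k}$ (starting at $a2^{k}-2^{j}$ and ending at $a2^{k}+2^{j}$) while still overlapping the gap, so the tightness supplied by Observation~\ref{obs:optimal-strategy-function} together with the power-of-two alignment is exactly what forbids this. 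The auxiliary case $\costFunc(u)>\costFunc(v)$ never arises, since no multiple of a larger power of two sits strictly inside the block $[a2^{k},(a+1)2^{k})$ that must contain $b_u$, which is why only vertices with $u\in\visibilityN{v}$ of strictly smaller cost are relevant.
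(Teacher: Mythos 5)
Your proof is correct and follows essentially the same route as the paper's: use Observation~\ref{obs:vertices_in_visibilityN} to get $f(u)<f(v)$, conclude from the minimality of $f(v)$ that $u$ is aligned, and then use the power-of-two arithmetic (with the tightness $|f(u)|=\costFunc(u)$ from Observation~\ref{obs:optimal-strategy-function}) to force $a_u\geq a2^{k}$. The only cosmetic difference is that you argue the last step by contradiction where the paper computes $y_u\geq a2^{k}+2^{k'}$ directly.
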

\begin{proof}
    We know that $a$ exists because $v$ is unaligned.
    By Observation~\ref{obs:optimal-strategy-function}, $|f(v)|=2^k$.
    It follows from Observation~\ref{obs:vertices_in_visibilityN} that $f(u)<f(v)$ for any $u$ as in the statement of the lemma, and consequently, that $u$ is aligned.
    
    Let $f(u)=[x_u,y_u)$. By assumption and by $f(u)<f(v)$, we know that $a2^k< y_u < x_v$.
    Since $u$ is aligned, we have that $x_u=b2^{k'}$ and $y_u=(b+1)2^{k'}$, for some integer $k'$.
        Note that $a2^k< y_u < x_v<(a+1)2^k$ implies $k'<k$.
        Thus, $y_u\geq a2^k+2^{k'}$, because $u$ is aligned.
        This gives $x_u=y_u-2^{k'}\geq a2^k$, as required in the lemma.
\end{proof}

We now prove the main result of this section:
\begin{lemma} \label{lem:aligned_strategy}
    For each instance with down-monotonic and rounded cost function, there exists an optimal extended search strategy function that is aligned.
\end{lemma}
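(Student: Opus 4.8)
The plan is to start from an optimal extended strategy function whose intervals have \emph{exact} lengths and then repeatedly repair the \emph{lowest} unaligned vertex without increasing the cost, arguing by induction on the number of unaligned vertices. Concretely, by Observation~\ref{obs:optimal-strategy-function} I fix an optimal $f$ with $\lvert f(v)\rvert=\costFunc(v)$ for every $v$; if $f$ is already aligned we are done, so assume it has an unaligned vertex and let $v$ be one with $f(v)$ minimal. Writing $\costFunc(v)=2^{k}$, $f(v)=[x_v,y_v)$ and choosing $a$ with $a2^{k}<x_v<(a+1)2^{k}$, I would define $f'$ by sliding $v$ down into the aligned slot $[a2^{k},(a+1)2^{k})$ and pushing every vertex $u$ with $f(u)\subseteq[a2^{k},x_v)$ upward by exactly $2^{k}$, leaving all other intervals untouched.

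Several facts make this repacking legal inside the screening neighbourhood $\visibilityN{v}$. By Lemma~\ref{lem:technical-property} the only vertices of $\visibilityN{v}$ meeting the gap $[a2^{k},x_v)$ are precisely the shifted ones, they are aligned, and each has length $2^{k'}$ with $k'<k$; since $\costFunc$ is rounded, $2^{k'}$ divides $2^{k}$, so shifting by $2^{k}$ preserves their alignment. By Observation~\ref{obs:vertices_in_visibilityN} no vertex of $\visibilityN{v}\setminus\{v\}$ meets $f(v)=[x_v,y_v)$, so after the move the new intervals of $v$ and of the shifted vertices are pairwise disjoint and all lie weakly below $y_v$. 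In particular $\sup\bigcup f'\le\sup\bigcup f$, the lengths are preserved, and exactly one unaligned vertex ($v$) has become aligned while no aligned vertex has been disturbed; hence, once $f'$ is shown to be a valid extended strategy function, the induction closes with $f'$ still optimal.

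It remains to certify $f'$, which is where the main difficulty lies. I would first verify that $f'$ is an extended strategy function on $\visibilityT{v}$ by checking that every overlapping pair keeps a dominating separator, and then invoke Lemma~\ref{lem:independent_strategies} to lift this to all of $T$ (its hypotheses hold because the only altered vertices are $v$ and the non-screening gap vertices, each of which, being below $f(v)$, receives an interval below $f(s_{min})$). The delicate case is a pair of shifted gap vertices whose original separator was $v$ itself, since $v$ has been moved below them; I would rule it out structurally. Because each shifted $u$ has $\costFunc(u)=2^{k'}<2^{k}=\costFunc(v)$ and $\costFunc$ is down-monotonic, $u$ cannot be a strict descendant of $v$, so any two gap vertices lie in the same component of $T-\{v\}$ and $v$ never lay on the path between them. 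The remaining overlaps are handled routinely: a separator of value above $f(v)$ is unchanged and still dominates, as the shifted intervals stay below $y_v\le\inf f(s_{min})$; a separator that is itself a gap vertex moves up by the same $2^{k}$, preserving strict domination; and when the separator is $v$ with both endpoints unchanged, Lemma~\ref{lem:technical-property} forces those endpoints below $a2^{k}$, so the down-shifted $v$ still dominates them. I expect the bookkeeping of these subcases — together with the non-descendant argument that excludes the one genuinely problematic configuration — to be the crux of the proof.
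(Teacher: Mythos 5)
Your proposal is correct and follows essentially the same route as the paper's proof: the same repair construction (slide the minimal unaligned $v$ down to $[a2^{k},(a+1)2^{k})$ and shift the gap vertices of $\visibilityN{v}$ up by $2^{k}$), the same reliance on Lemma~\ref{lem:technical-property}, Observation~\ref{obs:vertices_in_visibilityN} and Lemma~\ref{lem:independent_strategies}, and the same decisive use of down-monotonicity to show shifted gap vertices are not descendants of $v$, so $v$ never separated them. The only cosmetic difference is that you phrase the argument as induction on the number of unaligned vertices, whereas the paper takes an optimal $f$ minimizing that number and derives a contradiction.
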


\begin{proof}
    In the proof, we consider an optimal extended strategy function $f$, such that the number of unaligned vertices is minimal.
    Based on $f$, we construct another optimal extended strategy function according to which, there are fewer unaligned vertices. This shows, by contradiction, that there must exist an optimal extended strategy function such that all vertices are aligned.

\medskip
Let $v\in V(T)$ be an unaligned vertex, such that the interval $f(v)$ is minimal. Let $\costFunc(v)=2^{k}$, $f(v)=[x_v,y_v)$. Take $a\in \mathbb{N}$, such that $a2^{k}<x_v<(a+1)2^{k}$ and $(a+1)2^{k}<y_v<(a+2)2^{k}$.
\medskip

Based on the strategy function $f$, we will now construct a $f'$ with fewer unaligned vertices.

\begin{enumerate}
    \item For all $u\notin \visibilityN{v}$, we set $f'(u)=f(u)$.
    \item If $s\in \visibilityN{v}$ is screening $v$, we set $f'(s)=f(s)$.
    \item For all other vertices in $\visibilityN{v}$:
    \subitem For the $v$ itself, we set $f'(v)=[a2^{k}, (a+1)2^{k})$.
    \subitem For any vertex $u\in \visibilityN{v}$ such that $f(u)\cap [a2^{k}, x_v)\neq \emptyset$, we set $f'(u)=[x_u+2^k,y_u+2^k)$, where $f(u)=[x_u,y_u)$.
    \subitem For all other vertices $u\in\visibilityN{v}$ we set $f'(u)=f(u)$.
\end{enumerate}

    For brevity, we will refer to the vertices $u\in V(T)$ for which $f'(u)\neq f(u)$ as \emph{updated} vertices. We will refer to all other vertices as \emph{still} vertices.
    A useful fact about the updated vertices is the following:
    \begin{enumerate}[label={\normalfont{(*)}},leftmargin=*]
    \item\label{fact:updated-vertices} We have $f'(u)<y_v$, for every updated vertex $u\in V(T)$.
    \end{enumerate}
    By definition of $f'$, updated vertices are those whose value of $f$ intersects with $[a2^{k}, x_v)$ or the $v$ itself. The fact \ref{fact:updated-vertices} holds for $v$. Consider $u\in V(T)$, such that $f(u)\cap [a2^{k}, x_v)\neq\emptyset$. Denote $f(u)=[x_u, y_u)$, $f'(u)=[x_u', y_u')$. By Lemma~\ref{lem:technical-property}, we have that $y_u\leq x_v$. Subsequently, $y_u'\leq x_v+2^k=y_v$. This completes the proof of \ref{fact:updated-vertices}.
    
    Let us note that $v$ is aligned according to $f'$ and all vertices which were aligned according to $f$ are also aligned according to $f'$.
    To see the latter, consider an updated vertex $u\neq v$.
    If $u$ is aligned according to $f$, then by the definition of $f'$, $u\in\visibilityN{v}$ and $f(u)\cap[a2^k,x_v)$.
    But then, by Lemma~\ref{lem:technical-property}, $f(u)\subseteq[a2^k,x_v)$ which in particular means that $\costFunc(u)\leq 2^k$ and hence $u$ is indeed aligned according to $f'$.
    We also observe that $\sup{f'(v)}<\sup{f(v)}$, and for each $u$, such that $f(u)\cap [a2^{k}, x_v)\neq \emptyset$, we have $\sup{f'(u)}\leq\sup{f(v)}$.
    Hence, $f'$ is optimal due to the optimality of $f$.
    The above in particular implies that it remains to argue that $f'$ is a valid extended strategy function.

    We will now show that $f'$ is an extended strategy function for $T$. Based on Lemma~\ref{lem:independent_strategies}, it suffices to show that $f'$ is an extended strategy function for the subtree $\visibilityT{v}$. Consider vertices $\tempU,\tempV \in\visibilityN{v}$, such that $\tempU\neq\tempV$ and $f'(\tempU)\cap f'(\tempV)\neq \emptyset$. We analyze all possible choices of the pair $\tempU$, $\tempV$. Motivated by Observation~\ref{obs:vertices_in_visibilityN}, we categorize the choices for selecting $\tempU$ depending on how $f'(\tempU)$ relates to the interval $f(v)=[x_v,y_v)$.

\medskip\noindent
    Case 1: Let $\tempU$ be such that $y_v<f'(\tempU)$.
    We argue that
    \begin{equation} \label{eq:tempV}
    f(v) < f(\tempV)
    \end{equation}
    Suppose otherwise, which in view of Observation~\ref{obs:vertices_in_visibilityN} implies $f(\tempV) < f(v)$.
    Since $\tempU$ is still by \ref{fact:updated-vertices}, $f(\tempU)=f'(\tempU)$. Then, $y_v<f(\tempU)$ can be written as $f(v)<f(\tempU)$. But then $f(\tempV)<f(v)$ gives $f(\tempV)<f(\tempU)$, which contradicts $f(\tempU)\cap f(\tempV)\neq\emptyset$ and implies~\eqref{eq:tempV}. The latter and \ref{fact:updated-vertices} prove that $\tempV$ is still, thus $y_v<f'(\tempV)$.

    Because $f$ is an extended strategy function, there is a vertex $\tempX\in \visibilityN{v}$ on the path between $\tempU$ and $\tempV$, such that $f(\tempX)>f(\tempU)\cup f(\tempV)$. Because $f(\tempX)>f(\tempU)>f(v)$, we have by \ref{fact:updated-vertices} that $\tempX$ is still. Subsequently, $f'(\tempU)=f(\tempU)<f(\tempX)=f'(\tempX)$ and since $\tempV$ is still $f'(\tempV)=f(\tempV)<f(\tempX)=f'(\tempX)$, thus $f'(\tempX)>f'(\tempU)\cup f'(\tempV)$.
    
\medskip\noindent
    Case 2: Let $\tempU$ be such that $f'(\tempU)<a2^{k}$. By repeating the steps in Case 1, i.e. when $y_v<f'(\tempU)$, and using Lemma~\ref{lem:technical-property}, we obtain that $f'(\tempV)<a2^k$. Because $f$ is an extended strategy function, there is a vertex $\tempX\in \visibilityN{v}$ on the path between $\tempU$ and $\tempV$, such that $f(\tempX)>f(\tempU)\cup f(\tempV)$.
    If $\tempX$ is an updated vertex, we have that $f'(\tempX)>a2^{k}$, hence $f'(\tempU)<f'(\tempX)$. Otherwise, we have $f'(\tempU)=f(\tempU)$, and the claim follows from $f(\tempU)=f'(\tempU)$ and $f(\tempV)=f'(\tempV)$.

\medskip\noindent
        Case 3: Let $\tempU$ be such that $f'(\tempU)\cap [(a+1)2^{k}, y_v)\neq \emptyset$. Note that $\tempU\neq v$ because by definition of $f'$, $f'(v)=[a2^k,(a+1)2^k)$. Because $f'(\tempU)\cap f(v)\neq\emptyset$, it follows from Observation~\ref{obs:vertices_in_visibilityN} that $\tempU$ is updated. Subsequently, we have that $f(\tempU)\cap [a2^{k}, x_v)\neq \emptyset$. Based on Lemma~\ref{lem:technical-property}, it holds that $f(\tempU)\subset [a2^{k}, x_v)$ and subsequently that $f'(\tempU)\subset [(a+1)2^{k}, (a+2)2^k)$. By definition, $f'(\tempV)\cap f'(\tempU)\neq\emptyset$. Thus, $f'(\tempV)\cap [(a+1)2^{k}, y_v) \neq\emptyset$. Following the same steps as for $\tempU$, we obtain that $\tempV$ is updated and that $f'(\tempV)\subset [(a+1)2^{k}, (a+2)2^k)$.
        
        Because $f$ is an extended strategy function, there is $\tempX\in \visibilityN{v}$ on the path between $\tempU$ and $\tempV$, such that $f(\tempX)>f(\tempU)\cup f(\tempV)$.
        Because both $f(\tempU)$ and $f(\tempV)$ are subsets of $[a2^{k}, x_v)$, we have that $\costFunc(\tempU)< \costFunc(v)$ and $\costFunc(\tempV)<\costFunc(v)$. Because $\costFunc$ is down-monotonic, we have that neither $\tempU$ nor $\tempV$ belong to $T_v$. What follows, $\tempX\neq v$. If $\tempX$ is an updated vertex, then we have $f'(\tempX)>f'(\tempU)$. Otherwise, because $z\neq v$, $f'(\tempX)=f(\tempX)>y_v$, and we have that $f'(\tempU)<y_v<f'(\tempX)$. Following the same steps for $\tempV$, we obtain $f'(\tempX)>f'(\tempU)\cup f'(\tempV)$.

\medskip\noindent
        Case 4: Let $\tempU$ be such that $f'(\tempU)\cap[a2^k,(a+1)2^k)\neq\emptyset$. We will start by showing that this condition holds only when $u^*=v$.

        Consider the following ways we may choose $\tempU\in\visibilityN{v}$, with respect to the interval $f(\tempU)$:
        \begin{enumerate}
            \item $f(\tempU)<a2^k$.           
            By definition of $f'$, vertex $\tempU$ is still, thus $f'(\tempU)<a2^k$. What follows, such $u$ does not exist.
            
            \item $f(\tempU)\cap[a2^k,x_v)\neq\emptyset$.
            By Lemma~\ref{lem:technical-property}, we have that $f(\tempU)\subset [a2^k,x_v)$. By definition of $f'$, we have that $(a+1)2^k\leq f'(\tempU)$.

            \item $f(\tempU)\cap [x_v,y_v)\neq\emptyset$.
            Based on Observation~\ref{obs:vertices_in_visibilityN}, we have $\tempU= v$.
            
            \item $y_v\leq f(\tempU)$.            
            By definition of $f'$, $\tempU$ is still, thus $(a+1)2^k<y_v\leq f(\tempU)=f'(\tempU)$.
        \end{enumerate}

        Subsequently, because $\tempV\neq \tempU$, we have that $f'(\tempV)\cap[a2^k,(a+1)2^k)=\emptyset$. This shows that it is not possible to find a $\tempV\in \visibilityN{v}$, such that because $f'(\tempU)\cap f'(\tempV)\neq \emptyset$.
\end{proof}

\section{The algorithm for up-monotonic cost functions}
\label{sec:algorithm-up-monotonic}

In this section we propose an algorithm for trees with up-monotonic cost functions (cf. Algorithm~\ref{alg:up-monotonic}). The algorithm starts with a pre-processing pass, which transforms the input tree to its rounding and selects a root arbitrarily from the set of vertices with the highest cost o query (cf. definition of an up-monotonic cost function, Section~\ref{sec:our-results}). An extended strategy function $f$ is then calculated during a bottom-up traversal over the layer components of the tree by applying Algorithm \ref{alg:process-component}. What follows, the extended strategy function for the vertices of a layer component $L$ is calculated only after it is already defined for all vertices below the component.
Moreover, since we use a structured $f$, only its values assigned to the roots of the layer components below $L$ are important when calculating $f$ for the vertices of $L$.

\begin{figure}[htb]
\begin{center}
\begin{minipage}{.9\linewidth}
\begin{algorithm}[H]
\SetAlgoRefName{Up-Monotonic}
    \caption{(input: tree $T$ with up-monotonic cost function; output: an extended strategy function $f$ calculated for $T$)}
    \label{alg:up-monotonic}
    $T\gets$ a rooted rounding of $T$\;
    \For{each layer component $L$ in bottom-up fashion}{
		Apply Algorithm~\ref{alg:process-component} to $L$\;
    }
    \textbf{return} $f$
\end{algorithm}
\end{minipage}
\end{center}
\end{figure}

Given a layer component $L$, Algorithm \ref{alg:process-component} iterates over the vertices of $L$ in a depth-first, postorder fashion.
Suppose that a structured extended strategy function $f$ is defined on all vertices below $L$ and that $f$ is aligned to $\costFunc(L)$ at all roots of layer components directly below $L$. To calculate $f$ for the vertices of $L$, we first calculate the strategy function $f'$ for each vertex of $L$ and then obtain $f$ from $f'$ using the formula: $f(u)=[f'(u)\costFunc(L), (f'(u)+1)\costFunc(L))$, where $u\in V(L)$. (Recall that this is a conversion that takes us from the integer-valued strategy function $f'$ to the interval-valued extended strategy function.) To provide an applicable input for the vertex extension operator at the leaves of $L$, for each child $u\in V(L')$ of such leaf, we calculate the integer $f'(u)=\frac{b}{\costFunc(L')}$ that is derived from the extended strategy function at $u$, i.e. from $f(u)=[a,b)$. Note that since our decision trees are structured, the visibility sequence corresponding to each root of a layer component below $L$ (which includes $u$) consists of only one value, the one that equals the strategy function at the root. What follows, deriving the value of $f'$ only for the roots of the layer components directly below $L$ is sufficient to create a valid input for the vertex extension operator. In other words, we make these preparations to use the operator and the corresponding method from \cite{Onak}. Once each vertex $v\in V(L)$ obtains the corresponding extended strategy function $f(v)$, we apply for the root of $L$ the structuring operator followed by the cost scaling operator.
This will close the entire `cycle' of processing one layer component.

\begin{figure}[htb]
\begin{center}
\begin{minipage}{.9\linewidth}
\begin{algorithm}[H]
\SetAlgoRefName{Process-Component}
    \caption{(input: layer component $L$). Let $f$ be an extended strategy function defined for all descendants of vertices in $L$ that do not belong to $L$.}
    \label{alg:process-component}
    
    \For{each root $v$ of a layer component directly below $L$}
    {
        $f'(v)\gets \frac{b}{\costFunc(L)}$, where $b$ is the right endpoint of the interval $f(v)$\;
    }
    \For{each $u\in V(L)$ in a postorder fashion}
    {
        Obtain $f'(u)$ by applying the vertex extension operator to $v$\;
        $f(u)\gets$ the interval derived from $f'(u)$, see Section~\ref{sec:bottom-up-processing}\;
    }
    \If{the root $v$ of $L$ is not the root of $T$}{
        \label{alg:line:optimization}
        Update $f(v)$ by applying the structuring operator to $v$\;
        Update $f(v)$ by applying the cost scaling operator to $v$\;
    }
\end{algorithm}
\end{minipage}
\end{center}
\end{figure}

\subsection{Analysis} \label{sec:up-monotonic-analysis}

Lemma~\ref{lem:optimal-strategy} characterizes the interval of the extended strategy function that Algorithm \ref{alg:process-component} assigns to the root of a layer component.

\begin{lemma} \label{lem:optimal-strategy}
	Consider a layer component $L$ of $T=(V,E,\costFunc)$ and a structured extended strategy function $f$ defined on the vertices below $L$. Let the roots of the layer components directly below $L$ be aligned to $\costFunc(L)$.
	Suppose that the values of $f$ over $V(L)$ are obtained by a call to procedure $\textup{ProcessComponent(L,f)}$.
	We have that $f$ is a structured extended strategy function on $V(T_{\rootNode(L)})$, and the lowest possible interval $I$, such that $I>f(u)$, for each $u$ below $\rootNode(L)$, is assigned to $\rootNode(L)$.
\end{lemma}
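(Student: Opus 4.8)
The plan is to prove two things about the output of $\textup{ProcessComponent}(L,f)$: first, that the resulting $f$ is a \emph{structured} extended strategy function on $V(T_{\rootNode(L)})$, and second, that $\rootNode(L)$ receives the lexicographically/numerically minimal admissible interval, namely the lowest interval $I$ with $I > f(u)$ for all $u$ strictly below $\rootNode(L)$. The key idea is to transfer the known optimality and monotonicity of the vertex extension operator (Lemma~\ref{lem:vertex-extension-is-minimizing}, from \cite{Onak}) from the \emph{integer-valued} world of $f'$ to the \emph{interval-valued} world of $f$. Since all vertices of $L$ share the same cost $\costFunc(L)$ and the roots of the components directly below $L$ are aligned to $\costFunc(L)$, the conversion $f(u) = [f'(u)\costFunc(L), (f'(u)+1)\costFunc(L))$ is an order-preserving bijection between unit integer steps and the length-$\costFunc(L)$ interval slots available inside $L$. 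So the whole processing of $L$ is, up to this rescaling, exactly the unweighted bottom-up ranking procedure of \cite{Onak}.

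First I would verify that the integer inputs handed to the vertex extension operator at the leaves of $L$ are legitimate. For each root $v$ of a component directly below $L$, we set $f'(v) = b/\costFunc(L)$ where $f(v)=[a,b)$; because $v$ is aligned to $\costFunc(L)$, $b$ is a multiple of $\costFunc(L)$ and $f'(v)$ is a well-defined integer. Crucially, because $f$ is \emph{structured} below $L$, the extended visibility sequence $S(v)$ consists of a single interval (the one at $v$ itself, since $v$ dominates its whole subtree $T_v$), so the corresponding integer visibility sequence is the single value $f'(v)$. This is what makes the operator applicable: the operator consumes visibility sequences of its children, and here each relevant child contributes exactly one integer. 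I would then run induction over the postorder traversal of $V(L)$, invoking Lemma~\ref{lem:vertex-extension-is-minimizing} at each step to conclude that $f'$ (hence, after rescaling, $f$) is a valid strategy function on the vertices processed so far and that each assigned value is the minimal one consistent with the children's visibility sequences.

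Next I would establish that $f$ is an \emph{extended} strategy function on all of $V(T_{\rootNode(L)})$, not merely on $V(L)$. The conflict condition (Definition~\ref{def:extended_strategy_function}) for two vertices in $L$ follows from the integer-level correctness of $f'$ together with the order-preserving rescaling, and the length condition $|f(u)| = \costFunc(L) = \costFunc(u)$ holds by construction. For a pair where one or both vertices lie in a component strictly below $L$, I would use that $f$ was already a valid extended strategy function there, that each such component's root $v$ has $f(v)$ dominating $T_v$ (structuredness below $L$), and that any path leaving a lower component toward $L$ passes through such a root or through a vertex of $L$ whose interval exceeds everything below it. To get \emph{structuredness with respect to $L$}, I must check that after the structuring and cost-scaling operators are applied, $f(\rootNode(L)) > f(u)$ for every $u \in V(T_{\rootNode(L)}) \setminus \{\rootNode(L)\}$; this is essentially the definition of the structuring operator, and Observation~\ref{obs:cost-of-cost-scaling} controls the inflation from cost scaling.

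The main obstacle I anticipate is the minimality claim for $\rootNode(L)$ combined with the two post-processing operators. The vertex extension operator guarantees minimality of the \emph{raw} interval on $\rootNode(L)$ among strategy functions, but the statement asks for the lowest $I$ with $I > f(u)$ for all $u$ below $\rootNode(L)$ \emph{after} structuring and cost scaling. I would argue that applying the structuring operator to $v = \rootNode(L)$ already yields, by its definition, the minimal interval making $f$ structured on $T_v$ — i.e. the minimal interval strictly above every $f(u)$ for $u$ in $T_v \setminus \{v\}$. The subtlety is reconciling this with what the vertex extension operator produced: I would show that for the root of a layer component the operator's output and the structured requirement coincide up to the cost-scaling alignment, so that no interval strictly lower than the assigned $I$ can satisfy both $I > f(u)$ for all lower $u$ and the alignment of $v$ to $\costFunc(L)$. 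Verifying that this minimality survives cost scaling — that cost scaling only rounds up to the next multiple of $\costFunc(L)$ and cannot be avoided — will be where the careful bookkeeping lies, and I expect to lean on Observation~\ref{obs:cost-of-cost-scaling} to bound and justify that rounding.
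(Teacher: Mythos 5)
Your plan follows essentially the same route as the paper's proof: use the alignment of the roots of the components directly below $L$ to get well-defined integers $f'(v)=b/\costFunc(L)$, use structuredness of $f$ below $L$ to reduce each such root's visibility sequence to a single value, invoke Lemma~\ref{lem:vertex-extension-is-minimizing} to get lexicographic (hence max-value) minimality of the integer assignment over $V(L)$, and let the structuring operator supply the minimal interval above everything in $T_{\rootNode(L)}$ when the operator's raw output is not already structured. Your extra case analysis for pairs straddling the boundary between $L$ and a lower component is more explicit than the paper's one-line appeal to structuredness, but it is the same argument.

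The one place your plan would run into trouble is the final step: you try to make the minimality claim survive the \emph{cost scaling} operator, and Observation~\ref{obs:cost-of-cost-scaling} will not give you that. Cost scaling replaces $[a,b)$ by $[b'-\costFunc(L''),b')$ for the layer $L''$ directly above $L$, which is longer, generally starts at $a'\leq a$, and is therefore \emph{not} the lowest interval exceeding all of $f(u)$ for $u$ below $\rootNode(L)$ (its left endpoint may even dip below $\sup f(u)$). The paper's proof simply stops at the structuring operator --- the minimality claim is about the interval before cost scaling --- and the additive overhead $\costFunc(L'')-\costFunc(L)$ introduced by cost scaling is accounted for separately, via Observation~\ref{obs:cost-of-cost-scaling}, in the discussion following the lemma and in Lemma~\ref{lem:structured_tree_approximation}. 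You should adopt the same split rather than trying to prove the literal post-scaling statement, which is false as you seem to half-suspect.
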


\begin{proof}
	Due to the alignment of $f$ at the roots of the components directly below $L$, the strategy function $f'$ derived from $f$ for these roots is consistent with the query costs in $L$.
	That is, the construction of $f'$ done at the beginning of the procedure gives an integer-valued strategy function.
	Since $f$ is structured, among all vertices below $L$, the extension operator correctly needs to consider only the values of $f'$ at the roots of components directly below $L$.
	
	By Lemma~\ref{lem:vertex-extension-is-minimizing}, the visibility sequence calculated for the root of $L$ is (lexicographically) minimized among all possible valid assignments of $f'$.
	Hence in particular, the value of $\max\{f'(v) \;|\; v\in V(T_{\rootNode(L)})\}$ is minimized.
	It follows that if there exists such an optimal assignment of $f'$ that is structured, then this $f'$ is calculated by the procedure is structured and consequently the $f$ obtained from $f'$ is minimal and structured.
	If there exists no optimal assignments of $f'$ over $V(L)$ that is structured, the structuring operator modifies $f$ and assigns to $\rootNode(L)$ the lowest interval that is greater than $f(u)$ for any $u$ below $\rootNode(L)$.
\end{proof}

It follows from Lemma~\ref{lem:optimal-strategy} and Observation~\ref{obs:cost-of-cost-scaling} that given an extended strategy function $f$ fixed below a layer component $L$, Algorithm~\ref{alg:up-monotonic} extends $f$ to the vertices of $L$ in such a way that the interval assigned to $\rootNode(L)$ is not higher than the optimal interval (assuming that $f$ is fixed below $L$) incremented by an additive factor $\costFunc(L')-\costFunc(L)$, where $L'$ is a layer component directly above $L$. If $L$ is the top component, the structuring transform is not applied and the additional cost is not incurred.

We now formulate a technical Lemma~\ref{lem:generated_strategy_function} that helps analyze the strategy generated by Algorithm \ref{alg:up-monotonic}.

\begin{lemma} \label{lem:generated_strategy_function}
	Let $T=(V,E,\costFunc)$ be a tree with a up-monotonic and rounded cost function. We denote by $f_{opt}$ an optimal structured extended strategy function on $V(T)$. Let $L$ be a layer component of $T$ with $k$ layer components $L_{j}$, $1\leq j\leq k$, directly below $L$. Let $v_{j}=\rootNode(L_{j}), 1\leq j\leq k$.
	We define $f'$ as a function defined on $V(T)$ such that for all vertices below any $v_{j}$, $f'$ is equal to $f_{opt}$, while for the vertices $v_{j}$ and above, $f'$ is equal to $f_{opt}+\costFunc(L)$.
	It holds that $f'$ is a structured extended strategy function on $V(T)$.
\end{lemma}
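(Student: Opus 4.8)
The plan is to verify directly that $f'$ satisfies both Definition~\ref{def:extended_strategy_function} (it is an extended strategy function) and Definition~\ref{def:structured-strategy-function} (it is structured). First observe that $f'$ arises from $f_{opt}$ by translating, by the constant $\costFunc(L)$, exactly the intervals of the vertices in the \emph{upper part} $U=\{v_1,\ldots,v_k\}\cup V(L)\cup\{\text{ancestors of }\rootNode(L)\}$, while the \emph{lower part} $D=V(T)\setminus U$ (the proper descendants of the $v_j$) is left unchanged. Since translation preserves lengths, $\lvert f'(v)\rvert=\lvert f_{opt}(v)\rvert\geq\costFunc(v)$ for every $v$, so the length requirement is inherited from $f_{opt}$. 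I would then record two facts used throughout: (i) the path joining two vertices of $U$ stays inside $U$, because the only vertex linking the proper descendants of a given $v_j$ to the rest of $T$ is $v_j$ itself, so a simple path with both endpoints in $U$ cannot visit $D$; and (ii) since $f_{opt}$ is structured, $f_{opt}(v_j)>f_{opt}(w)$ for every proper descendant $w$ of $v_j$.

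The core is the separation condition of Definition~\ref{def:extended_strategy_function} for each pair $x\neq y$ with $f'(x)\cap f'(y)\neq\emptyset$, split by the location of $x,y$ relative to $U$ and $D$. When $x,y\in U$, both intervals are shifted by the same amount, so $f'(x)\cap f'(y)\neq\emptyset$ is equivalent to $f_{opt}(x)\cap f_{opt}(y)\neq\emptyset$; the $f_{opt}$-separator $z$ lies on the $x$--$y$ path, which by (i) is contained in $U$, hence $z$ is shifted identically and still separates. When $x,y\in D$ both intervals are unchanged; the $f_{opt}$-separator $z$ on the $x$--$y$ path is either in $D$ (unchanged) or in $U$ (only raised), and in either case $f'(z)\geq f_{opt}(z)>f'(x)\cup f'(y)$.

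The remaining and main difficulty is the cross case $x\in U$, $y\in D$, with $y$ a proper descendant of some $v_j$, where shifting $f'(x)$ upward could a priori create a new intersection with the unshifted $f'(y)$. Here I would use that $v_j$ lies on the $x$--$y$ path (as $x\notin T_{v_j}$ unless $x=v_j$, an immediate subcase in which $f'(x)=f_{opt}(v_j)+\costFunc(L)>f_{opt}(y)=f'(y)$ forces disjointness), together with the trichotomy for how $f_{opt}(x)$ relates to $f_{opt}(v_j)$. If $f_{opt}(x)>f_{opt}(v_j)$, then $f'(x)$ is pushed even higher while $f'(y)<f_{opt}(v_j)$ by (ii), so the intervals are disjoint and no separator is needed. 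If $f_{opt}(x)<f_{opt}(v_j)$, the uniform shift yields $f'(v_j)>f'(x)$, and with $f'(v_j)>f'(y)$ (again by (ii)) the vertex $v_j$ separates the pair. Finally, if $f_{opt}(x)\cap f_{opt}(v_j)\neq\emptyset$, the $f_{opt}$-separator between $x$ and $v_j$ is in $U$ by (i), so raising it by $\costFunc(L)$ keeps it above both $f'(x)$ and $f'(v_j)>f'(y)$. Thus in every subcase a separator exists on the $x$--$y$ path; I expect precisely this cross case -- guaranteeing a separator even when $x$ is a high ancestor of $v_j$ whose original interval happens to sit low -- to be the only place requiring care, and I note that optimality of $f_{opt}$ is not actually used, only its validity and structuredness.

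It then remains to confirm that $f'$ is structured, i.e. $f'(\rootNode(L''))>f'(w)$ for every layer component $L''$ and every $w\in V(T_{\rootNode(L'')})$. This follows from $f_{opt}(\rootNode(L''))>f_{opt}(w)$ by the same bookkeeping: if $\rootNode(L'')$ and $w$ are both in $U$ or both in $D$ the inequality is preserved verbatim; if $\rootNode(L'')\in U$ and $w\in D$ the left-hand side only increases; and the case $\rootNode(L'')\in D$, $w\in U$ is vacuous, since any $w\in V(T_{\rootNode(L'')})$ with $\rootNode(L'')\in D$ is a deeper proper descendant of the same $v_j$ and hence also in $D$.
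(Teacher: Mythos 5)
Your overall route is the same as the paper's: a direct case analysis of the separation condition according to where the two conflicting vertices sit relative to the shifted and unshifted parts of the tree, plus a check of structuredness. Your treatment of the cross case (via the trichotomy of $f_{opt}(x)$ against $f_{opt}(v_j)$) is correct and in fact more explicit than the paper's corresponding step, which simply asserts that $\rootNode(L_i)$ is the separator.

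There is, however, one concrete defect: your partition of $V(T)$ is wrong on trees that have vertices outside $T_{\rootNode(L)}$ other than the ancestors of $\rootNode(L)$ (e.g.\ subtrees hanging off those ancestors in other branches). You define $U=\{v_1,\ldots,v_k\}\cup V(L)\cup\{\text{ancestors of }\rootNode(L)\}$ and then assert that $D=V(T)\setminus U$ consists of the proper descendants of the $v_j$; for a general $T$ it does not --- it also contains every vertex in a branch disjoint from $T_{\rootNode(L)}$. Under your definition those vertices are left unshifted, and then the lemma you are proving is actually false: take $x$ an ancestor of $\rootNode(L)$ with $f_{opt}(x)=[0,1)$, a sibling-branch vertex $y$ with $f_{opt}(y)=[1,2)$ and no separator needed between them, and $\costFunc(L)=1$; then $f'(x)=[1,2)=f'(y)$ with no vertex $z$ on their path satisfying $f'(z)>[1,2)$. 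Your cross-case argument cannot rescue this pair because no $v_j$ lies on the $x$--$y$ path. The repair is to set $U=V(T)\setminus\bigcup_j\bigl(V(T_{v_j})\setminus\{v_j\}\bigr)$, i.e.\ to shift \emph{every} vertex that is not a proper descendant of some $v_j$ (this is the reading the paper's proof and its use in Lemma~\ref{lem:structured_tree_approximation} require). With that correction your fact (i) still holds --- a simple path between two vertices of the enlarged $U$ would have to enter and leave some $T_{v_j}\setminus\{v_j\}$ through $v_j$ twice --- the cross case then really does reduce to $y$ being a proper descendant of some $v_j$ on the $x$--$y$ path, and the rest of your argument, including the structuredness check, goes through unchanged.
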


\begin{proof}
	
	We first show that $f'$ is an extended strategy function. We need to show that for any pair $x,y \in V(T)$, $x\neq y$, if $f'(x)=f'(y)$, then there is a vertex $z\in V(T)$ separating $x,y$ such that $f'(z)>f'(x)$. (See Definition~\ref{def:extended_strategy_function}.)
	
	We analyze the following cases with respect to the locations of the vertices $x$ and $y$. In all cases we assume that $x\neq y$, $f'(x)=f'(y)$, and $z$ is the vertex separating $x$ and $y$ according to $f'$.
	\begin{itemize}
		\item $x\in V(L_i)$, $y\in V(L_j)$ and $i=j$.
		
		Let $v=\rootNode(L_i)$. Because $f_{opt}$ is structured, $f'$ is also structured within the subtree $T_v$, since it assigns $v$ a higher interval than $f_{opt}$ and $f'$ is otherwise identical to $f_{opt}$. What follows, because $f'$ is structured and we assumed $f'(x)=f'(y)$, we have that $x\neq v$ and $y\neq v$.
		
		Consider the $f_{opt}$ as defined on all vertices of $L_i$. Because $x,y\in V(L_i)$, we have $z\in V(L_i)$. If $z\neq v$, then $f'(z)=f_{opt}(z)>f_{opt}(x)=f'(x)$, where the last equation holds because $x\neq v$. If $z=v$, we also have $f'(z)>f'(x)$, because $f'$ is structured in $T_v$.

		\item $x\in V(L_i)$, $y\in V(L_j)$ and $i\neq j$.
		
		Because $f'$ is structured in $L_i$, we have $z=\rootNode(L_i)$.
		
		\item Both $x$ and $y$ belong to $V(L)$.
		
		Because $f_{opt}$ is an extended strategy function in $L$, there is a $z\in V(L)$ such that $f_{opt}(z)>f_{opt}(x)$. Then, by adding $\costFunc(L)$ to both sides of the inequality one obtains $f'(z)>f'(x)$.
		
		\item $x\in V(L_i)$ for some $i$ and $y\in V(L)$.

		If $x\neq \rootNode(L_i)$ then $z=\rootNode(L_i)$, because $f'$ is structured in $L_i$.

		Otherwise, both $x$ and $y$ are in the subtree $T[V(L)\cup\{x\}]$. Recall that in this subtree $f'$ is derived from $f_{opt}$ by adding a constant offset, which gives an extended strategy function.
	\end{itemize}

To show that $f'$ is structured in the remaining cases, we observe that for all vertices above $L$, $f'$ is derived from the structured $f_{opt}$ by shifting the assigned intervals by a positive offset, which does not change the structural property of the layer components' roots.
\end{proof}

\begin{lemma} \label{lem:structured_tree_approximation}
    The cost of the decision tree generated by Algorithm~\ref{alg:up-monotonic} is at most 2 times greater than the cost of an optimal structured decision tree.
\end{lemma}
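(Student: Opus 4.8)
The plan is to phrase the claim through extended strategy functions. Let $f_{alg}$ be the function returned by Algorithm~\ref{alg:up-monotonic} and let $f_{opt}$ be an optimal \emph{structured} extended strategy function for the (rounded) tree $T$; by the correspondence of Definition~\ref{def:structured-strategy-function} the latter has cost $\sup\bigcup_v f_{opt}(v)$ equal to that of an optimal structured decision tree, while the decision tree generated by the algorithm has cost at most $\sup\bigcup_v f_{alg}(v)$. Hence it suffices to prove $\sup\bigcup_v f_{alg}(v)\le 2\sup\bigcup_v f_{opt}(v)$. For a layer component $L$ I write $b_{opt}(L)=\sup f_{opt}(\rootNode(L))$; since $f_{opt}$ is structured, $\rootNode(T)=\rootNode(L_{\mathrm{top}})$ carries its maximal interval and $\sup\bigcup_v f_{opt}(v)=b_{opt}(L_{\mathrm{top}})$. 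I write $b_{alg}(L)=\sup f_{alg}(\rootNode(L))$, noting that the algorithm structures every \emph{non-top} component, so there $\rootNode(L)$ carries the maximal interval of $f_{alg}$ on $T_{\rootNode(L)}$; the top component is deliberately left unstructured (line~\ref{alg:line:optimization}), so $\sup\bigcup_v f_{alg}(v)=\max_{u\in V(L_{\mathrm{top}})}\sup f_{alg}(u)$ need not be attained at $\rootNode(T)$.

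The heart of the argument is a bottom-up induction over the non-top layer components establishing, for every such $L$ with $L^{\uparrow}$ the layer component directly above it, the invariant $b_{alg}(L)\le b_{opt}(L)+\costFunc(L^{\uparrow})$. For the step at an internal $L$ with children $L_1,\dots,L_k$ directly below (roots $v_j=\rootNode(L_j)$), the induction hypothesis applied to each $L_j$---whose parent is precisely $L$---gives $b_{alg}(v_j)\le b_{opt}(v_j)+\costFunc(L)$. I would then apply Lemma~\ref{lem:generated_strategy_function} to $L$, obtaining a \emph{valid} structured extended strategy function $f'$ equal to $f_{opt}$ below the $v_j$ and to $f_{opt}+\costFunc(L)$ at $v_j$ and above; in particular $\sup f'(v_j)=b_{opt}(v_j)+\costFunc(L)\ge b_{alg}(v_j)$, so the cost-scaled (hence $\costFunc(L)$-aligned) values that Algorithm~\ref{alg:process-component} feeds into $L$ are dominated by those of the valid reference $f'$. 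Because the vertex-extension operator is minimizing and monotone (Lemmas~\ref{lem:vertex-extension-is-minimizing} and~\ref{lem:optimal-strategy}) and $f'$ is structured, the interval the algorithm assigns to $\rootNode(L)$ after structuring but before cost scaling has right endpoint at most $\sup f'(\rootNode(L))=b_{opt}(L)+\costFunc(L)$. The final cost-scaling of $\rootNode(L)$ adds at most $\costFunc(L^{\uparrow})-\costFunc(L)$ by Observation~\ref{obs:cost-of-cost-scaling}, and the two contributions telescope locally to $\costFunc(L)+(\costFunc(L^{\uparrow})-\costFunc(L))=\costFunc(L^{\uparrow})$, closing the induction.

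The base case consists of the bottom components: there are no children, $f_{opt}$ restricted to the connected component $L$ is already a valid structured assignment for $L$, so minimality bounds the endpoint before scaling by $b_{opt}(L)$ and cost scaling again contributes at most $\costFunc(L^{\uparrow})-\costFunc(L)$. At the top, cost scaling is skipped and structuring is not enforced, so I would bound the overall cost directly: applying Lemma~\ref{lem:generated_strategy_function} at $L_{\mathrm{top}}$ and the minimizing property to the whole visibility sequence computed on $V(L_{\mathrm{top}})$ yields $\max_{u\in V(L_{\mathrm{top}})}\sup f_{alg}(u)\le \max_{u\in V(L_{\mathrm{top}})}\sup f_{opt}(u)+\costFunc(L_{\mathrm{top}})=\sup\bigcup_v f_{opt}(v)+\costFunc(L_{\mathrm{top}})$. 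To turn the additive $\costFunc(L_{\mathrm{top}})$ into a factor, observe that $\rootNode(T)$ lies in the top layer and is assigned by $f_{opt}$ an interval of length at least $\costFunc(\rootNode(T))=\costFunc(L_{\mathrm{top}})$ (Observation~\ref{obs:optimal-strategy-function}), so $\costFunc(L_{\mathrm{top}})\le b_{opt}(L_{\mathrm{top}})=\sup\bigcup_v f_{opt}(v)$. Combining gives $\sup\bigcup_v f_{alg}(v)\le 2\sup\bigcup_v f_{opt}(v)$, the claimed factor of two.

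I expect the main obstacle to be the inductive step, and within it the clean invocation of monotonicity and minimality. The reference $f'$ produced by Lemma~\ref{lem:generated_strategy_function} is valid but \emph{not} aligned to $\costFunc(L)$, whereas the algorithm operates on inputs that have been cost-scaled precisely so that each $v_j$ occupies a single $\costFunc(L)$-cell; thus the comparison ``algorithm's value at $\rootNode(L)$ $\le\sup f'(\rootNode(L))$'' must be read at the level of the integer visibility values consumed by the vertex-extension operator, where $b_{alg}(v_j)\le\sup f'(v_j)$ translates into domination of visibility values and Lemma~\ref{lem:vertex-extension-is-minimizing} applies, rather than at the level of raw intervals, where lowering an unaligned interval can superficially look like it breaks validity. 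The second delicate point is the bookkeeping showing that the $+\costFunc(L)$ shift from Lemma~\ref{lem:generated_strategy_function} cancels against the $-\costFunc(L)$ in Observation~\ref{obs:cost-of-cost-scaling} at every layer, so that the per-layer overheads do not accumulate and only the single additive term $\costFunc(L_{\mathrm{top}})$ survives to the root.
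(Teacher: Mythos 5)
Your proposal is correct and follows essentially the same route as the paper's proof: a bottom-up induction over layer components with the invariant $f_{alg}(\rootNode(L))\le f_{opt}(\rootNode(L))+\costFunc(L^{\uparrow})$, using the shifted reference function of Lemma~\ref{lem:generated_strategy_function}, the minimality of Lemma~\ref{lem:optimal-strategy}, and Observation~\ref{obs:cost-of-cost-scaling}, then converting the final additive $\costFunc(L_{\mathrm{top}})$ into the factor of $2$. The only difference is cosmetic bookkeeping (you telescope the $+\costFunc(L)$ shift against the cost-scaling overhead locally at each layer, whereas the paper bounds the accumulated increments globally via the powers-of-two sum $\sum_{i<l}c(i)<c(l)$), and your explicit handling of the unstructured top component is if anything slightly more careful than the paper's.
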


\begin{proof}
	The proof is by induction, bottom-up over the layer components of the input tree. Take a tree $T=(V,E,\costFunc)$ with an up-monotonic and rounded cost function.
	We denote by $c(i), 1\leq i\leq l$, the query cost to a vertex in the $i-th$ layer, where the layers are ordered according to their cost of query and $l$ is the number of layers in $T$. Let $f_{opt}$ be an optimal (structured) extended strategy function and $f_{alg}$ be the extended strategy function generated by Algorithm~\ref{alg:up-monotonic}. We denote by $z(v)$ the cost of querying the parent of $v$ or the cost of $v$ itself if there is no vertex above $v$, i.e. when $v$ is the root of $T$.
	
	We want to prove the following induction claim. For any layer component $L$ it holds $f_{alg}(v) \leq f_{opt}(v) + z(v)$, where $v=\rootNode(L)$.
	
	The cost of the decision tree generated by $f_{alg}$ is equal to the supremum of $f_{alg}(\rootNode(T))$. Since the cost of a single query to the top layer is not more than the cost of an optimal structured decision tree, the lemma follows from the induction claim applied to the top layer component.
	
	Let $L$ be a bottom layer component. Based on Lemma~\ref{lem:optimal-strategy} we know that by applying the vertex extension and structuring operators, Algorithm~\ref{alg:up-monotonic} generates an optimal structured assignment of the extended strategy function to the vertices of $L$. In a case where $L$ is also the top component (which is the trivial case of $T$ being a single layer component) we have $f_{alg}$ is optimal and the induction claim holds. Otherwise, the cost scaling operator extends the interval assigned to $\rootNode(L)$, increasing the cost of $L$ by (at most) an additive factor of $c(2)-c(1)$ over the optimal cost. Since the added factor is less than $z(\rootNode(L))$, the induction claim holds for the bottom layer component $L$.

	Let $L$ be a layer component from the $i$-th layer, with $k$ layer components $L_{j}$, $1\leq j\leq k$, directly below $L$. We denote the roots of the layer components $L_j$ by $v_j$. Consider a strategy function $f'$ which is defined as in Lemma~\ref{lem:generated_strategy_function}, that is for all vertices below any $v_{j}$, $f'$ is equal to $f_{opt}$, while for vertices $v_{j}$ and all vertices above, $f'$ is equal to $f_{opt}+\costFunc(L)$. Lemma~\ref{lem:generated_strategy_function} implies that $f'$ is a structured extended strategy function on $V(T)$.
	
	From the induction hypothesis we have that for any $v_j$, $f_{alg}(v_{j})\leq f_{opt}(v_{j}) + z(v_{j})=f_{opt}(v_j)+\costFunc(L)$. According to Lemma~\ref{lem:optimal-strategy} and Observation~\ref{obs:cost-of-cost-scaling}, Algorithm~\ref{alg:up-monotonic} assigns intervals to the vertices of $L$ such that the interval assigned to the root of $L$ is the lowest possible (when $L$ is the top component), or the lowest possible interval incremented by a positive offset of $\costFunc(L')-\costFunc(L)$, where $L'$ is the layer component directly above $L$. If we sum up the additive increments due to cost scaling in all layer components below $L$, we obtain $\sum_{i<l}c(i)<c(l)=z(v)$ (recall that $c(i)$'s are powers of $2$).
	Thus, $f_{alg}(v)\leq f_{opt}(v) + z(v)$.
\end{proof}

	By combining
	Lemmas \ref{lem:2rounded_dec+tree_cost},~\ref{lem:structuring}, and~\ref{lem:structured_tree_approximation}
	we obtain that Algorithm~\ref{alg:up-monotonic} generates a solution for the binary search problem with a cost at most $8$ times greater than that of an optimal solution. We note that Algorithm~\ref{alg:up-monotonic} maintains the main structure of the linear time algorithm from \cite{Onak} (a single bottom-up pass over vertices of the tree.) Our method extends the state-of-the-art algorithm by adding a fixed number of $\bigo{1}$ steps, computed when visiting vertices during the bottom-up traversal. See Algorithm \ref{alg:process-component}. What follows, Algorithm~\ref{alg:up-monotonic} also runs in linear time.
	This proves Theorem~\ref{thm:up-monotonic}.

\section{The algorithm for down-monotonic cost functions} \label{sec:algorithm-down-monotonic}
In this section, we describe and analyze the algorithm providing Theorem~\ref{thm:down-monotonic}.
To construct an interval-based extended strategy function $f$, our procedure adapts the greedy linear-time algorithm for searching in trees with uniform query costs~\cite{Onak}.
Let $T=(V,E,\costFunc)$ be a rooted tree, such that $\costFunc$ is rounded and down-monotonic. Recall that the root must be one of the vertices of minimal weight.

In our algorithm that takes $T$ as input, the intervals that $f$ assigns to the vertices are calculated in a bottom-up fashion.
Consider a vertex $v\in V(T)$ with $k>0$ children $v_{1},\ldots,v_{k}$ so that $f$ is defined on all descendants of $v$.
For each $v_i$, we denote the extended visibility sequence derived from $f$ restricted to $T_{v_i}$ by $S(v_i)$.
Let $\cI$ be the set of all intervals $I$ that belong to any $S(v_i)$, such that there exists an interval $I'\in S(v_j)$, $i\neq j$, where $I\cap I'\neq\emptyset$. If $\cI\neq\emptyset$, let $I_M$ be the maximal interval in $\cI$. If $\cI=\emptyset$, then let $I_M=[-1,0)$. The minimal interval $I$ with endpoints being multiples of $\costFunc(v)$, such that $I>I_M$, $|I|=\costFunc(v)$ and $I\cap I'=\emptyset$ for each $I'\in S(v_1)\cup\cdots\cup S(v_k)$ is called the \emph{greedy interval} for $v$.
In case when $v$ has no children the \emph{greedy interval} for $v$ is $[0,\costFunc(v))$.

The algorithm for a rounded down-monotonic cost function simply assigns the greedy interval for each vertex, once such intervals are constructed for the children.
The pseudo-code is shown as Algorithm \ref{alg:down-monotonic}.
\begin{figure}[htb]
\begin{center}
\begin{minipage}{.9\linewidth}
\begin{algorithm}[H]
\SetAlgoRefName{Down-Monotonic}
	\caption{(input: a tree with down-monotonic cost function).}
	\label{alg:down-monotonic}
    Compute a rooted rounding of the input tree.\;
    \For{each vertex $v$ in a bottom-up fashion}
    {
    Compute the greedy interval $I'$ for $v$ and set $f(v)\gets I'$.
    }
    \Return the search strategy that corresponds to $f$.
\end{algorithm}
\end{minipage}
\end{center}
\end{figure}

\subsection{Analysis} \label{sec:down-monotonic-analysis}

In order to prove the correctness and efficiency, we will now introduce concepts that allow to express the extended visibility sequences used in our methodology in terms of the vertex extension operator over integer-valued sequences that were developed for searching in unweighted trees \cite{Onak}.

\begin{definition} \label{def:granularity}
    Let $S(v)$ be an extended visibility sequence for some $v\in V(T)$. Take a $d\in \nat$ that divides $\costFunc(v)$. 
    We say that the visibility sequence $S^{d}(v)$ is \emph{corresponding to} $S(v)$ with \emph{granularity} $d$ if and only if it lists in decreasing order all values obtained in the following way. For every interval $[a,b)\in S(v)$, add to $S^{d}(v)$ the values $\frac{a}{d}+i$ for each $i\in\{0,\ldots,n-1\}$, where $n=\frac{b-a}{d}$.
\end{definition}

Our further analysis will leverage a key property of aligned extended strategy functions, expressed by Observation~\ref{obs:monotonic-granularity}, which is later used in Lemma~\ref{lem:optimal-sequence}.

\begin{observation} \label{obs:monotonic-granularity}
Let $S_1$ and $S_2$ be extended visibility sequences such that there exists a $d\in \nat$ that divides the endpoint of each interval in both $S_1$ and $S_2$. We have that $S_1 <_l S_2$ if and only if $S^d_1 <_l S^d_2$.
\qed
\end{observation}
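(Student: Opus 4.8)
The plan is to prove the biconditional in Observation~\ref{obs:monotonic-granularity} by unwinding the definition of the granularity-$d$ sequences and the definition of the lexicographic order $<_l$ on interval sequences, and relating them to the ordinary lexicographic order on integer sequences. First I would observe that, since $d$ divides the endpoint of every interval in both $S_1$ and $S_2$, the operation $S\mapsto S^d$ is nothing more than ``rescaling the number line by a factor of $1/d$ and then listing integer points.'' Concretely, for an interval $[a,b)\in S$ with $a,b$ divisible by $d$, the values added to $S^d$ are exactly the integers $\tfrac{a}{d},\tfrac{a}{d}+1,\ldots,\tfrac{b}{d}-1$, i.e.\ the integer points of the interval $[\tfrac{a}{d},\tfrac{b}{d})$. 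Thus $\bigcup S^d$ (viewed as a set of integers) is precisely the image of $\bigcup S$ under the bijection $x\mapsto x/d$ restricted to multiples of $d$. The key bookkeeping fact I would isolate is that an integer $j$ lies in $\bigcup S^d$ if and only if the half-open unit interval $[jd,(j+1)d)$ is contained in $\bigcup S$; this is where divisibility of $d$ into every endpoint is essential, because it guarantees that $\bigcup S$ is a union of blocks of length $d$ aligned to multiples of $d$, so membership is ``all-or-nothing'' on each such block.

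With that correspondence in hand, I would directly translate the definition of $S_1 <_l S_2$. Recall that $S_1<_l S_2$ means there is an integer $i$ with $i+1\notin\bigcup S_1$, $i+1\in\bigcup S_2$, and $[0,i)\cap\bigcup S_1=[0,i)\cap\bigcup S_2$. Because every endpoint is a multiple of $d$, the first index of disagreement between $\bigcup S_1$ and $\bigcup S_2$ must occur at a multiple of $d$, say $i=jd$; that is, the two sets agree on all aligned length-$d$ blocks below the $j$-th block and first differ on the $j$-th block. Using the all-or-nothing block fact, this is equivalent to saying that $j$ is the least index on which the integer sets $\bigcup S_1^d$ and $\bigcup S_2^d$ differ, with $j\notin\bigcup S_1^d$ and $j\in\bigcup S_2^d$, which is exactly the statement $S_1^d<_l S_2^d$ under the ordinary lexicographic order. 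Spelling out both directions of this equivalence (``if'' and ``only if'') gives the observation; the $S_1=S_2 \iff S_1^d=S_2^d$ degenerate case follows from the bijectivity of the rescaling, so the claim extends to $\le_l$ as well if needed.

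The main obstacle, and the step I would take the most care over, is justifying that the first point of divergence between $\bigcup S_1$ and $\bigcup S_2$ is forced to sit at a multiple of $d$, since the raw definition of $<_l$ ranges over \emph{all} integers $i$, not just multiples of $d$. Here I would argue that if the two sets first disagreed at some $i$ that is \emph{not} a multiple of $d$, then writing $i=jd+s$ with $0<s<d$, the common prefix $[0,i)$ would contain a proper nonempty part of the $j$-th block in exactly one of the two sets while the aligned block structure forces each set's intersection with $[jd,(j+1)d)$ to be either empty or all of it; this contradicts the sets agreeing on $[0,i)$ yet differing at $i$ within the same block. This alignment argument is short but is the crux that makes the translation between interval-lexicographic order and integer-lexicographic order valid, and it is precisely where the divisibility hypothesis on $d$ cannot be dropped.
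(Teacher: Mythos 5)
Your proof is sound, and it supplies detail that the paper itself omits entirely: Observation~\ref{obs:monotonic-granularity} is stated with a \qed{} and no argument, so there is no ``paper proof'' to diverge from. The crux you isolate is exactly the right one: divisibility of every endpoint by $d$ forces $\bigcup S$ to be a disjoint union of aligned blocks $[jd,(j+1)d)$, membership is all-or-nothing per block, hence $j\in\bigcup S^d$ iff the $j$-th block lies in $\bigcup S$, and the first point of divergence between $\bigcup S_1$ and $\bigcup S_2$ must sit at a block boundary. That makes $x\mapsto x/d$ an order-isomorphism, which is the whole content of the observation. The one step I would ask you to tighten is the final identification of ``the least index at which $\bigcup S_1^d$ and $\bigcup S_2^d$ differ belongs to $\bigcup S_2^d$'' with ``the ordinary lexicographic order'' on integer sequences. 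Visibility sequences are listed in decreasing order, and their classic lexicographic comparison (as in the vertex-extension literature the paper cites) is governed by the \emph{largest} element of the symmetric difference, not the smallest; the two relations genuinely differ (compare $\{1\}$ with $\{2\}$). The paper only ever defines $<_l$ for interval sequences, via a witness $i$ with agreement on $[0,i)$ and $i+1\in\bigcup S_2\setminus\bigcup S_1$, so the clean way to finish is to view $S^d$ as the sequence of blocks rescaled to unit length and check that your block bijection carries a witness for $S_1<_l S_2$ to a witness for $S_1^d<_l S_2^d$ and conversely, rather than appealing to an unstated order on integers. This is a presentational repair, not a gap in the idea, and the residual awkwardness (the definition quantifies over \emph{some} witness $i$ rather than the first point of disagreement) originates in the paper's definition of $<_l$, not in your argument.
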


An extended visibility sequence $S(v)$ of a vertex $v$ is \emph{lexicographically minimal} if:
\begin{itemize}
    \item there exists an extended strategy function $f$ defined on $T_v$ such that $S(v)$ is derived from $f$, and
    \item there exists no extended strategy function $f'$ defined on $T_v$, such that $S'(v)<_l S(v)$, where $S'(v)$ is the extended visibility sequence for $v$ derived from $f'$.
\end{itemize}

The following Lemma~\ref{lem:optimal-sequence} serves as an induction step, which when applied bottom-up over the input tree, shows that Algorithm \ref{alg:down-monotonic} indeed returns an optimal strategy.
\begin{lemma} \label{lem:optimal-sequence}
    Consider a tree $T=(V,E,\costFunc)$, with rounded and down-monotonic $\costFunc$. If the extended visibility sequence of each child of some vertex $v$ is lexicographically minimal, then the greedy interval for $v$ is also lexicographically minimal.
\end{lemma}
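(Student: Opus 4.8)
The plan is to translate this interval-based claim into the integer-valued setting where the vertex extension operator of \cite{Onak} already guarantees lexicographic minimality, and then to transfer minimality back. The key device is the granularity operator from Definition~\ref{def:granularity}: I would choose a common divisor $d$ of $\costFunc(v)$ and of the endpoints of all relevant intervals (for a rounded down-monotonic cost function one may take $d=\min_i\costFunc(v_i)$, or even $d=1$ since all endpoints are integers, but keeping $d$ as the largest admissible value keeps the sequences short). By Observation~\ref{obs:monotonic-granularity}, comparing extended visibility sequences under $<_l$ is equivalent to comparing their granularity-$d$ images $S^d(\cdot)$, so lexicographic minimality is preserved in both directions under this correspondence.

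First I would set up the correspondence: given the children $v_1,\ldots,v_k$ of $v$ with their extended visibility sequences $S(v_i)$, form the integer sequences $S^d(v_i)$. The hypothesis that each $S(v_i)$ is lexicographically minimal, combined with Observation~\ref{obs:monotonic-granularity}, gives that each $S^d(v_i)$ is lexicographically minimal among all achievable integer visibility sequences for the (uniform-cost, scaled) subtree rooted at $v_i$. Next I would argue that the greedy interval constructed by Algorithm~\ref{alg:down-monotonic} corresponds exactly, under the granularity map, to the value assigned by the vertex extension operator of Definition~\ref{def:vertex_extension} applied to the $S^d(v_i)$. Concretely: the set $\cI$ of mutually overlapping intervals maps to the set $M$ of integers belonging to at least two distinct $S^d(v_i)$; the maximal interval $I_M$ maps to $m=\max(M\cup\{-1\})$; and the minimal interval $I>I_M$ of length $\costFunc(v)$ avoiding all $S(v_i)$ maps, after division by $d$ and because $I$'s endpoints are multiples of $\costFunc(v)$, to the block of $\costFunc(v)/d$ consecutive integers starting at the lowest integer exceeding $m$ that avoids every $S^d(v_i)$. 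This is precisely the vertex extension output (generalized to a vertex occupying $\costFunc(v)/d$ units, which is the standard way one encodes a weighted vertex as a path of unit-cost vertices).

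Then I would invoke Lemma~\ref{lem:vertex-extension-is-minimizing}: since each child's $S^d(v_i)$ is lexicographically minimal and the vertex extension operator is both minimizing and monotone, the sequence it produces for $v$ is lexicographically minimal among all valid assignments. Translating back via Observation~\ref{obs:monotonic-granularity}, the extended visibility sequence derived from the greedy interval for $v$ is lexicographically minimal, which is the claim. I would also need to check that the alignment condition (endpoints of the greedy interval are multiples of $\costFunc(v)$) makes $v$ itself occupy a clean block under granularity $d$, so that the integer encoding is exact and no rounding slack is introduced beyond what the greedy rule already accounts for; here the down-monotonicity guarantees $\costFunc(v)\leq\costFunc(v_i)$, so $d\mid\costFunc(v)$ can be arranged and the encoding of $v$ is a single aligned block.

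The main obstacle I anticipate is making the correspondence between the greedy interval and the vertex extension operator fully rigorous in the presence of differing weights: the operator of Definition~\ref{def:vertex_extension} was stated for unit-cost vertices producing a single integer, whereas here $v$ occupies $\costFunc(v)/d>1$ consecutive positions, and one must verify both that a block of overlapping values (rather than a single collision) is handled correctly by the ``at least two distinct sequences'' clause and that monotonicity of the operator still transfers minimality when a child sequence is replaced by a larger one. I expect the bulk of the careful work to lie in showing that minimality of the integer image is \emph{not} lost when converting the length-$\costFunc(v)$ greedy choice into the operator's output, i.e.\ that the greedy rule never leaves an exploitable gap below $m$ that a cleverer assignment could use; this is exactly where I would lean on the minimizing property of Lemma~\ref{lem:vertex-extension-is-minimizing} rather than re-proving optimality from scratch.
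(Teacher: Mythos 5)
Your proposal follows essentially the same route as the paper: pass to integer visibility sequences via the granularity correspondence (Observation~\ref{obs:monotonic-granularity}), identify the greedy interval with the output of the vertex extension operator, and conclude from the minimizing and monotone properties in Lemma~\ref{lem:vertex-extension-is-minimizing}. The one adjustment worth noting is that the paper works at granularity $d=\costFunc(v)$ (which, for a rounded down-monotonic function, divides every $\costFunc(v_i)$ and every aligned endpoint), so $v$ occupies a single unit and the block-encoding obstacle you anticipate does not arise; your parenthetical choice $d=\min_i\costFunc(v_i)$ has the divisibility the wrong way around, since here the children's weights are multiples of $\costFunc(v)$, not divisors of it.
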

\begin{proof}
    Let $v_{1},\ldots,v_{k}$ be the children of $v$.
    Let $f$, defined for the subtrees $T_{v_1},\ldots,T_{v_k}$, be such that for each $i\in\{1,\ldots,k\}$ the lexicographically minimal sequence $S(v_i)$ is derived from $f$. Take extended visibility sequences $\tilde{S}(v_i)$ for each $i\in\{1,\ldots,k\}$ that are derived from an arbitrary extended strategy function.
    By assumption, for each child $v_i$, we have that $S(v_i)\leq _l \tilde{S}(v_i)$. Based on Observation~\ref{obs:monotonic-granularity}, we also have that 
    \begin{equation} \label{eq:Sd}
    S^{d_i}(v_i)\leq_l\tilde{S}^{d_i}(v_i),
    \end{equation}
    where $d_i=\costFunc(v_i)$ for each $i$.
    Note that the $d_i$ may be different for different children $v_i$.
    
    We apply the vertex extension operator (see Definition~\ref{def:vertex_extension}) for $v$ using the sequences $\tilde{S}^{d_i}(v_i)$. Since the vertex extension operator is minimizing \cite{Onak}, it assigns to $v$ the lexicographically minimal sequence $\tilde{S}^d(v)$ available for $v$, out of all valid visibility sequences, given the sequences assigned to the children.
    Note that since $\costFunc$ is rounded and down-monotonic, we have that $d$ divides $d_i$ for each $i\in\{1,\ldots,k\}$.
    For each $i$, we have that \eqref{eq:Sd} implies $S^d(v_i)\leq_l\tilde{S}^d(v_i)$.
    Intuitively, we may switch from the granularities $d_i$ to $d$ preserving the ordering of sequences.
    On the other hand, the vertex extension operator is monotone \cite{Onak} and when applied on the sequences $S^{d_i}(v_i)$, the resulting sequence $S^d(v)$ satisfies $S^d(v)\leq_l \tilde{S}^d(v)$.
    Consider extending $f$ to $T_v$ by assigning to $v$ the greedy interval to $v$.
    By the definition, the extended visibility sequence $S(v)$ derived from such $f$ is such that $S^d(v)$ corresponds to $S(v)$.
    This proves that $S(v)$ is lexicographically minimal as required.
\end{proof}

By induction over the subtrees of $T$, with the induction claim formulated by Lemma~\ref{lem:optimal-sequence}, one obtains that the search strategy returned by Algorithm \ref{alg:down-monotonic} is optimal for the rounding $T$ of the input tree $T'$. 
On the other hand, we have $\OPT{T}\leq 2\cdot\OPT{T'}$.
This implies that the algorithm is $2$-approximate for an arbitrary input tree.
Since the tree has $n$ vertices, the lexicographically minimal sequence for each vertex $v$ has $\bigo(n)$ intervals.
Hence, the computation of the greedy interval takes $\bigo(n)$ time, as well as any other list manipulation task, proving that the algorithm has polynomial running time.
This completes the proof of Theorem~\ref{thm:down-monotonic}.

\section{The algorithm for $k$-monotonic cost functions} \label{sec:algorithm-general-monotonic}

Consider a partitioning of a rounded tree $\inputTree{T}{\costFunc}$ into subtrees $T_1,\ldots,T_l$ as in the definition of a $k$-monotonic cost function (cf. Section~\ref{sec:our-results}).
The subtree $T_i$ that contains the root of $T$ is called the \emph{root subtree}.
In the case of Algorithm \ref{alg:down-monotonic}, our methodology made it easier to frame it as a procedure that is computing an extended strategy function.
This then allows to get the corresponding search algorithm as explained in Section~\ref{sec:preliminaries}.
However, with the tools we have it is more natural to present a procedure (cf. Algorithm \ref{alg:main}) that gives Theorem~\ref{thm:k-monotonic} as an adaptive process that instead of outputting a search strategy, actually performs the adaptive search and returns the target.

\begin{figure}[htb]
\begin{center}
\begin{minipage}{.9\linewidth}
\begin{algorithm}[H]
\SetAlgoRefName{General-Monotonic}
	\caption{(input: a tree $T$ with $k$-monotonic cost function; output: the target).}
	\label{alg:main}
    Let $T'$ be the root subtree of the rounding of the input tree.\;
    Compute a search strategy $\mathcal{A}$ for $T'$.\label{line:cost-strategy}\;
    Execute $\mathcal{A}$. \label{line:executeA}\;
    \If{the target $t$ is in $T'$}
    {
    \Return $t$.
    }
    \Else{
    Let $v$ be the child of a leaf of $T'$ such that $T_v$ contains the target.\label{line:v}\;
    Call recursively Algorithm~\ref{alg:main} with input $T_v$.
    }
\end{algorithm}
\end{minipage}
\end{center}
\end{figure}

To compute the search strategy in line~\ref{line:cost-strategy}, if $T'$ is down-monotonic, use the $2$-approximation Algorithm \ref{alg:down-monotonic}, otherwise use the $8$-approximation Algorithm \ref{alg:up-monotonic}.
If the target $t$ is not in $T'$, then let $v$ be the first vertex outside of $T'$ on the path from the root of $T$ to $t$.
Note that all queries performed within $T'$ are then pointing towards $v$ in their replies.
The same holds during a query to the parent of $v$, which is a leaf of $T'$.
Thus, the vertex $v$ can be correctly selected in line~\ref{line:v}.
The depth of the recursion is at most $k$ and each strategy execution in line~\ref{line:executeA} introduces a cost of at most $8\cdot\OPT{T'}\leq 8\cdot\OPT{T}$.
The recursion correctly ends when $T'$ is such that no vertex of $T'$ has a child that does not belong to $T'$ (i.e., all leaves in $T'$ are also leaves in $T$).
This gives Theorem~\ref{thm:k-monotonic}.

\section{The NP-completeness} \label{sec:npc}

It has been shown in \cite{tree-weighted-CicaleseKLPV16} that the edge variant of binary search is NP-complete already for specific small-diameter trees. We will use this fact to argue that vertex search with $k$-monotonic cost functions is not FPT with respect to $k$.

The edge search problem considers as an input an edge-weighted tree $T'=(V,E,c)$, where $c$ attributes each edge with the cost of the query. For an arbitrary search target $t\in V(T')$, a query to an edge $e\in E(T')$ incurs the cost of $c(e)$ and learns which one of the two components in $T'-e$ contains $t$. For a given input tree, similarly as in the vertex search, the goal of the algorithm is to obtain a search strategy that has minimal total cost in the worst case.
Note that this problem can be equivalently defined as performing vertex search in the line graph of $T'$.

Following the description provided in \cite{DereniowskiKUZ17}, we formulate Lemma~\ref{lem:edge-to-node}, which shows a connection between the two problems.
First, for each tree $T'$ we define its \emph{subdivision} $\inputTree{T}{\costFunc}$ as follows.
Let $M=1+\sum_{e\in E(T')}c(e)$ exceed the weight of all edges in $T'$.
For each $e\in E(T')$ introduce a vertex $v(e)$ with weight $\costFunc(v(e))=c(e)$.
Then, $V(T)=V(T')\cup\{v(e)\colon e\in E(T')\}$, $E(T)=\{\{x,v(e)\},\{v(e),y\}\colon e=\{x,y\}\in E(T')\}$, and $\costFunc(x)=M$ for each $x\in V(T')$.
In other words, we obtain $T$ by putting the weight $M$ on each vertex of $T'$ and subdividing each edge $e$ so that the new vertex $v(e)$ that comes from the subdivision of $e$ `inherits' its weight from $e$. 
Informally, the idea behind this construction is that when performing a vertex search on $T$, one should never query any vertex $x$ in $V(T')$ due to the large weights.
Instead, in order to learn whether $x$ is the target in $T$, it is favorable to query all of its neighbors.
\begin{lemma}[\cite{DereniowskiKUZ17}] \label{lem:edge-to-node}
    For each tree $T'$, the edge search cost of $T'$ equals $\OPT{T}$.
\end{lemma}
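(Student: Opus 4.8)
The plan is to prove the two inequalities $\OPT{T}\le c^*$ and $c^*\le\OPT{T}$ separately, where $c^*$ denotes the edge search cost of $T'$, by exhibiting cost-preserving translations between edge search strategies for $T'$ and vertex search strategies for $T$. The conceptual backbone is the observation that the weights force any reasonable vertex strategy to query only the subdivision vertices $v(e)$, and that a query to $v(e)$ in $T$ reveals exactly the same information as a query to the edge $e$ in $T'$: since $e$ is the unique edge joining its two endpoints, every other edge (and hence every other subdivision vertex) lies entirely on one side of $e$, so the neighbor of $v(e)$ returned as the answer identifies precisely the component of $T'-e$ containing the target.

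First I would establish that an optimal vertex strategy for $T$ never queries an original vertex $x\in V(T')$. To see this, note that querying all subdivision vertices in any order is a valid vertex strategy: if none of them is the target, then the collected directions pin down a unique original vertex, exactly as querying all edges of $T'$ identifies any vertex of $T'$. This strategy has cost at most $\sum_{e\in E(T')}c(e)<M$, so $\OPT{T}<M$. Consequently no root-to-leaf path in an optimal decision tree for $T$ can contain a vertex of weight $M$, and we may assume the optimal vertex strategy queries subdivision vertices exclusively.

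For $\OPT{T}\le c^*$, I would take an optimal edge strategy for $T'$ and replace each query to an edge $e$ by a query to $v(e)$. Along any branch whose target is an original vertex $x$, the answers match those of the edge strategy verbatim, so the branch terminates at $x$ with identical cost. The only new targets are the subdivision vertices: for a target $v(e)$ with $e=\{x,y\}$, the answers the converted strategy receives are consistent with the edge target lying ``on $e$'', hence consistent with both $x$ and $y$; since the edge strategy must query $e$ in order to ever separate the adjacent candidates $x$ and $y$, the converted strategy queries $v(e)$ along this branch, receives the reply that $v(e)$ is the target, and halts at cost no larger than that of the corresponding branch for $x$. Thus every target of $T$ is found within cost $c^*$.

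For $c^*\le\OPT{T}$, I would start from the optimal vertex strategy, which by the first step queries only subdivision vertices, and replace each query to $v(e)$ by a query to $e$. A branch of the vertex strategy ending at an original vertex $x$ must, to separate $x$ from each of its adjacent subdivision vertices $v(e_i)$, query every $v(e_i)$ for $e_i$ incident to $x$ with answers pointing to $x$; under the translation these become queries to all edges incident to $x$ that isolate $x$, so the edge strategy identifies $x$ at the same cost. Taking the maximum over the targets $t\in V(T')$, which form a subset of the targets of $T$, gives $c^*\le\OPT{T}$. I expect the main obstacle to be the bookkeeping around the mismatched target sets --- the subdivision vertices are extra targets in $T$ with no counterpart in the edge problem --- and in particular the argument that along every relevant branch the distinguished edge (respectively vertex) is actually queried; this is where the tree structure does the real work, since two adjacent vertices of a tree can be separated only by querying the connecting edge in $T'$, respectively one of the two endpoints in $T$.
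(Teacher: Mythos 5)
Your proposal is correct, and its first half is essentially the paper's own argument: the paper, too, proves $\OPT{T}\le c^*$ by simulating the optimal edge strategy with a vertex strategy that queries only the subdivision vertices $v(e)$, handling separately the targets in $V(T')$ (found by elimination) and the targets $v(e)$ (found because the edge strategy must eventually query $e$ to separate its endpoints). Where you genuinely go beyond the paper is the converse inequality $c^*\le\OPT{T}$: the paper's proof sketch stops after the one simulation and never argues that a vertex strategy for $T$ cannot beat the edge search cost, implicitly deferring to \cite{DereniowskiKUZ17}. Your weight-threshold argument --- querying all subdivision vertices costs $\sum_{e}c(e)=M-1<M$, so $\OPT{T}<M$ and an optimal decision tree can never afford a query to an original vertex, after which the strategy translates back edge-by-edge --- is exactly the missing half, and your observation that eliminating each neighbor $v(e_i)$ of a target $x$ forces a query to $v(e_i)$ itself (so the translated edge strategy isolates $x$) is the right justification. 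One caveat, which is a defect of the paper's formalism rather than of your proof: the displayed formula for $\COST{\decTree{T}}$ sums weights over \emph{all} vertices of a root-to-leaf path including the leaf, under which reading every original vertex would contribute its weight $M$ as a leaf and the lemma would be false; your argument (and the paper's) tacitly uses the intended convention, consistent with the definition of $\strategy(T,t)$ for $|V(T)|=1$ and with Figure~1, that a singleton candidate set requires no query.
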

\begin{proof}
Consider algorithm $\mathcal{A}$ for vertex search in $T$ based on the edge search algorithm on $T'$.
$\mathcal{A}$ queries only the \emph{old} vertices. If the queried vertex is the target \emph{A} stops. Otherwise, $\mathcal{A}$ continues in the subgraph that can still contain the target, until either the target vertex is queried or the subtree is reduced to a single vertex.
\begin{itemize}
    \item If target is one of the old/original vertices, it will be found by reducing $T$ to a single vertex - the search target. The cost in worst case is OPT(T').
    \item If target vertex is one of the new vertices, $\mathcal{A}$ will continue the search until it queries the search target. Note that every edge of T' is queried when searching for some target vertex in T'. The cost of searching such target in worst case is OPT(T').
\end{itemize}
\end{proof}

It has been shown in \cite{tree-weighted-CicaleseKLPV16} that the edge search is (strongly) NP-complete in the class of spiders of diameter at most $6$.
For each such spider $T'$, the diameter of its subdivision $T$ equals $12$, and the corresponding cost function $\costFunc$ is $4$-monotonic.
This proves Theorem~\ref{thm:npc}.

\section{Conclusions} \label{sec:conclusions}

Our first remark is on a possible complexity gap with respect to $k$. It is not known whether the problem for $1$-monotonic cost functions is NP-complete (cf.~\cite{mfcs-DereniowskiW22}). Hence an open problem remains whether the value of $k=4$ in Theorem~\ref{thm:npc} is in fact tight. We conjecture that the problem is NP-complete for a $k<4$. Our conjecture is supported by the fact that we obtained the constant $k=4$ basically via a `black-box' transformation from the edge search~\cite{tree-weighted-CicaleseKLPV16}. Thus, a more refined reduction might provide the right threshold for $k$.

In a wider context, our work is en route to a search for a constant factor approximation algorithm for an arbitrary cost function in trees. Our choice of looking at $k$-monotonic cost functions may be seen as the problem parametrization with respect to $k$. However, it is interesting if other parametrization methods are also natural and algorithmically promising.

We recall a related tree search problem, called \emph{edge search}, in which one performs queries on edges: each reply provides information which endpoint of the queried edge is closer to the target \cite{Ben-Asher}.
One can similarly define for example an up-monotonic cost function $\omega\colon E(T)\rightarrow\mathbb{R_+}$ for the edge search by requiring that there exists a choice for the root $r$ so that for any two edges $\{x,y\}$ and $\{y,z\}$, if $x$ is closer to $r$ than $y$, then $\omega(\{x,y\})\geq\omega(\{y,z\})$. We have outlined some evidence that for unweighted trees, the edge search has been more challenging to analyze than the vertex search, both when it comes to finding good upper bounds (see \cite{node-rank-linear-Schaffer89} vs \cite{application-db-queries-DereniowskiK06,graphs-Emamjomeh-Zadeh16,MakinoUI01}) and optimal algorithms (see \cite{node-rank-linear-Schaffer89} vs \cite{Onak,edge-rank-linear-LamY98}).
In view of the relation between the two problems in their weighted variants (see Section~\ref{sec:npc}) it may be possible that the above does not carry out to this case.
Hence, a fruitful path to get algorithmic progress for the weighted vertex search problem is to first develop new techniques for the weighted edge search.

Also, we repeat the previously mentioned interesting and challenging open question whether a constant-factor approximation is possible for the vertex search in trees with general weight functions.

\bibliography{references}

\begin{thebibliography}{10}

\bibitem{app-monotonic-HMM}
Alok Aggarwal, Bowen Alpern, Ashok~K. Chandra, and Marc Snir.
\newblock A model for hierarchical memory.
\newblock In {\em {STOC} 1987}, pages 305--314, 1987.
\newblock \href {https://doi.org/10.1145/28395.28428}
  {\path{doi:10.1145/28395.28428}}.

\bibitem{tree-open-question-Angelidakis}
Haris Angelidakis.
\newblock Shortest path queries, graph partitioning and covering problems in
  worst and beyond worst case settings.
\newblock {\em CoRR}, abs/1807.09389, 2018.
\newblock \href {http://arxiv.org/abs/1807.09389} {\path{arXiv:1807.09389}}.

\bibitem{Ben-Asher}
Yosi Ben{-}Asher, Eitan Farchi, and Ilan Newman.
\newblock Optimal search in trees.
\newblock {\em {SIAM} J. Comput.}, 28(6):2090--2102, 1999.
\newblock \href {https://doi.org/10.1137/S009753979731858X}
  {\path{doi:10.1137/S009753979731858X}}.

\bibitem{BerendsohnK22}
Benjamin~Aram Berendsohn and L{\'{a}}szl{\'{o}} Kozma.
\newblock Splay trees on trees.
\newblock In {\em {SODA} 2022}, pages 1875--1900, 2022.
\newblock \href {https://doi.org/10.1137/1.9781611977073.75}
  {\path{doi:10.1137/1.9781611977073.75}}.

\bibitem{BlellochD23}
Guy~E. Blelloch and Magdalen Dobson.
\newblock The geometry of tree-based sorting.
\newblock In {\em {ICALP} 2023}, pages 26:1--26:19, 2023.
\newblock \href {https://doi.org/10.4230/LIPICS.ICALP.2023.26}
  {\path{doi:10.4230/LIPICS.ICALP.2023.26}}.

\bibitem{BoczkowskiFKR21}
Lucas Boczkowski, Uriel Feige, Amos Korman, and Yoav Rodeh.
\newblock Navigating in trees with permanently noisy advice.
\newblock {\em {ACM} Trans. Algorithms}, 17(2):15:1--15:27, 2021.
\newblock \href {https://doi.org/10.1145/3448305} {\path{doi:10.1145/3448305}}.

\bibitem{BodlaenderGKH91}
Hans~L. Bodlaender, John~R. Gilbert, Ton Kloks, and Hj{\'{a}}lmtyr
  Hafsteinsson.
\newblock Approximating treewidth, pathwidth, and minimum elimination tree
  height.
\newblock In {\em {WG} 1991}, pages 1--12, 1991.
\newblock \href {https://doi.org/10.1007/3-540-55121-2\_1}
  {\path{doi:10.1007/3-540-55121-2\_1}}.

\bibitem{tree-bicriteria-BorowieckiDO21}
Piotr Borowiecki, Dariusz Dereniowski, and Dorota Osula.
\newblock The complexity of bicriteria tree-depth.
\newblock In {\em {FCT} 2021}, pages 100--113, 2021.
\newblock \href {https://doi.org/10.1007/978-3-030-86593-1\_7}
  {\path{doi:10.1007/978-3-030-86593-1\_7}}.

\bibitem{tree-binary_indentification-CicaleseJLV11}
Ferdinando Cicalese, Tobias Jacobs, Eduardo~Sany Laber, and Caio~Dias Valentim.
\newblock Binary identification problems for weighted trees.
\newblock In {\em {WADS} 2011}, pages 255--266, 2011.
\newblock \href {https://doi.org/10.1007/978-3-642-22300-6\_22}
  {\path{doi:10.1007/978-3-642-22300-6\_22}}.

\bibitem{tree-weighted-CicaleseKLPV16}
Ferdinando Cicalese, Bal{\'{a}}zs Keszegh, Bernard Lidick{\'{y}},
  D{\"{o}}m{\"{o}}t{\"{o}}r P{\'{a}}lv{\"{o}}lgyi, and Tom{\'{a}}s Valla.
\newblock On the tree search problem with non-uniform costs.
\newblock {\em Theor. Comput. Sci.}, 647:22--32, 2016.
\newblock \href {https://doi.org/10.1016/j.tcs.2016.07.019}
  {\path{doi:10.1016/j.tcs.2016.07.019}}.

\bibitem{DeligkasMS19}
Argyrios Deligkas, George~B. Mertzios, and Paul~G. Spirakis.
\newblock Binary search in graphs revisited.
\newblock {\em Algorithmica}, 81(5):1757--1780, 2019.
\newblock \href {https://doi.org/10.1007/S00453-018-0501-Y}
  {\path{doi:10.1007/S00453-018-0501-Y}}.

\bibitem{tree-edge-weighted-hard-Dereniowski06}
Dariusz Dereniowski.
\newblock Edge ranking of weighted trees.
\newblock {\em Discret. Appl. Math.}, 154(8):1198--1209, 2006.
\newblock \href {https://doi.org/10.1016/j.dam.2005.11.005}
  {\path{doi:10.1016/j.dam.2005.11.005}}.

\bibitem{Dereniowski08}
Dariusz Dereniowski.
\newblock Edge ranking and searching in partial orders.
\newblock {\em Discret. Appl. Math.}, 156(13):2493--2500, 2008.
\newblock \href {https://doi.org/10.1016/j.dam.2008.03.007}
  {\path{doi:10.1016/j.dam.2008.03.007}}.

\bibitem{DereniowskiKUZ17}
Dariusz Dereniowski, Adrian Kosowski, Przemyslaw Uznanski, and Mengchuan Zou.
\newblock Approximation strategies for generalized binary search in weighted
  trees.
\newblock In {\em {ICALP} 2017}, pages 84:1--84:14, 2017.
\newblock \href {https://doi.org/10.4230/LIPIcs.ICALP.2017.84}
  {\path{doi:10.4230/LIPIcs.ICALP.2017.84}}.

\bibitem{application-cholesky-DereniowskiK03}
Dariusz Dereniowski and Marek Kubale.
\newblock Cholesky factorization of matrices in parallel and ranking of graphs.
\newblock In {\em {PPAM} 2003}, pages 985--992, 2003.
\newblock \href {https://doi.org/10.1007/978-3-540-24669-5\_127}
  {\path{doi:10.1007/978-3-540-24669-5\_127}}.

\bibitem{application-db-queries-DereniowskiK06}
Dariusz Dereniowski and Marek Kubale.
\newblock Efficient parallel query processing by graph ranking.
\newblock {\em Fundam. Informaticae}, 69(3):273--285, 2006.

\bibitem{graph-median-DereniowskiLU21}
Dariusz Dereniowski, Aleksander Lukasiewicz, and Przemyslaw Uznanski.
\newblock An efficient noisy binary search in graphs via median approximation.
\newblock In {\em {IWOCA} 2021}, pages 265--281, 2021.
\newblock \href {https://doi.org/10.1007/978-3-030-79987-8\_19}
  {\path{doi:10.1007/978-3-030-79987-8\_19}}.

\bibitem{trees-node-weighted-hard-DereniowskiN06}
Dariusz Dereniowski and Adam Nadolski.
\newblock Vertex rankings of chordal graphs and weighted trees.
\newblock {\em Inf. Process. Lett.}, 98(3):96--100, 2006.
\newblock \href {https://doi.org/10.1016/j.ipl.2005.12.006}
  {\path{doi:10.1016/j.ipl.2005.12.006}}.

\bibitem{graphs-DereniowskiTUW19}
Dariusz Dereniowski, Stefan Tiegel, Przemyslaw Uznanski, and Daniel
  Wolleb{-}Graf.
\newblock A framework for searching in graphs in the presence of errors.
\newblock In {\em {SOSA@SODA} 2019}, pages 4:1--4:17, 2019.
\newblock \href {https://doi.org/10.4230/OASIcs.SOSA.2019.4}
  {\path{doi:10.4230/OASIcs.SOSA.2019.4}}.

\bibitem{mfcs-DereniowskiW22}
Dariusz Dereniowski and Izajasz~P. Wrosz.
\newblock Constant-factor approximation algorithm for binary search in trees
  with monotonic query times.
\newblock In {\em {MFCS} 2022}, pages 42:1--42:15, 2022.
\newblock \href {https://doi.org/10.4230/LIPIcs.MFCS.2022.42}
  {\path{doi:10.4230/LIPIcs.MFCS.2022.42}}.

\bibitem{Emamjomeh-ZadehK17}
Ehsan Emamjomeh{-}Zadeh and David Kempe.
\newblock A general framework for robust interactive learning.
\newblock In {\em {NIPS} 2017}, pages 7082--7091, 2017.

\bibitem{application-learning-Emamjomeh-Zadeh20}
Ehsan Emamjomeh{-}Zadeh, David Kempe, Mohammad Mahdian, and Robert~E. Schapire.
\newblock Interactive learning of a dynamic structure.
\newblock In {\em {ALT} 2020}, pages 277--296, 2020.

\bibitem{graphs-Emamjomeh-Zadeh16}
Ehsan Emamjomeh{-}Zadeh, David Kempe, and Vikrant Singhal.
\newblock Deterministic and probabilistic binary search in graphs.
\newblock In {\em {STOC} 2016}, pages 519--532, 2016.
\newblock \href {https://doi.org/10.1145/2897518.2897656}
  {\path{doi:10.1145/2897518.2897656}}.

\bibitem{GiannopoulouHT12}
Archontia~C. Giannopoulou, Paul Hunter, and Dimitrios~M. Thilikos.
\newblock Lifo-search: {A} min-max theorem and a searching game for cycle-rank
  and tree-depth.
\newblock {\em Discrete Applied Mathematics}, 160(15):2089--2097, 2012.
\newblock \href {https://doi.org/10.1016/j.dam.2012.03.015}
  {\path{doi:10.1016/j.dam.2012.03.015}}.

\bibitem{Hohzaki2016}
Ryusuke Hohzaki.
\newblock Search games: Literature and survey.
\newblock {\em Journal of the Operations Research Society of Japan},
  59(1):1--34, 2016.
\newblock \href {https://doi.org/10.15807/jorsj.59.1}
  {\path{doi:10.15807/jorsj.59.1}}.

\bibitem{node-ranking-IyerRV88}
Ananth~V. Iyer, H.~Donald Ratliff, and Gopalakrishnan Vijayan.
\newblock Optimal node ranking of trees.
\newblock {\em Inf. Process. Lett.}, 28(5):225--229, 1988.
\newblock \href {https://doi.org/10.1016/0020-0190(88)90194-9}
  {\path{doi:10.1016/0020-0190(88)90194-9}}.

\bibitem{edge-ranking-IyerRV91}
Ananth~V. Iyer, H.~Donald Ratliff, and Gopalakrishnan Vijayan.
\newblock On an edge ranking problem of trees and graphs.
\newblock {\em Discret. Appl. Math.}, 30(1):43--52, 1991.
\newblock \href {https://doi.org/10.1016/0166-218X(91)90012-L}
  {\path{doi:10.1016/0166-218X(91)90012-L}}.

\bibitem{KatchalskiMS95}
Meir Katchalski, William McCuaig, and Suzanne~M. Seager.
\newblock Ordered colourings.
\newblock {\em Discret. Math.}, 142(1-3):141--154, 1995.
\newblock \href {https://doi.org/10.1016/0012-365X(93)E0216-Q}
  {\path{doi:10.1016/0012-365X(93)E0216-Q}}.

\bibitem{Knuth}
Donald~E. Knuth.
\newblock {\em The Art of Computer Programming, Volume {III:} Sorting and
  Searching}.
\newblock Addison-Wesley, 1973.

\bibitem{LaberMP02}
Eduardo~Sany Laber, Ruy~Luiz Milidi{\'{u}}, and Artur~Alves Pessoa.
\newblock On binary searching with nonuniform costs.
\newblock {\em {SIAM} J. Comput.}, 31(4):1022--1047, 2002.
\newblock \href {https://doi.org/10.1137/S0097539700381991}
  {\path{doi:10.1137/S0097539700381991}}.

\bibitem{LaberM11}
Eduardo~Sany Laber and Marco Molinaro.
\newblock An approximation algorithm for binary searching in trees.
\newblock {\em Algorithmica}, 59(4):601--620, 2011.
\newblock \href {https://doi.org/10.1007/s00453-009-9325-0}
  {\path{doi:10.1007/s00453-009-9325-0}}.

\bibitem{edge-rank-linear-LamY98}
Tak~Wah Lam and Fung~Ling Yue.
\newblock Optimal edge ranking of trees in linear time.
\newblock In {\em {SODA} 1998}, pages 436--445, 1998.

\bibitem{MakinoUI01}
Kazuhisa Makino, Yushi Uno, and Toshihide Ibaraki.
\newblock On minimum edge ranking spanning trees.
\newblock {\em J. Algorithms}, 38(2):411--437, 2001.
\newblock \href {https://doi.org/10.1006/jagm.2000.1143}
  {\path{doi:10.1006/jagm.2000.1143}}.

\bibitem{MiyataMNZ06}
Keizo Miyata, Shigeru Masuyama, Shin{-}ichi Nakayama, and Liang Zhao.
\newblock Np-hardness proof and an approximation algorithm for the minimum
  vertex ranking spanning tree problem.
\newblock {\em Discret. Appl. Math.}, 154(16):2402--2410, 2006.
\newblock \href {https://doi.org/10.1016/j.dam.2006.04.016}
  {\path{doi:10.1016/j.dam.2006.04.016}}.

\bibitem{trees-MozesOnakW08O}
Shay Mozes, Krzysztof Onak, and Oren Weimann.
\newblock Finding an optimal tree searching strategy in linear time.
\newblock In {\em {SODA} 2008}, pages 1096--1105, 2008.

\bibitem{paths-NavarroBBZC00}
Gonzalo Navarro, Ricardo~A. Baeza{-}Yates, Eduardo~F. Barbosa, Nivio Ziviani,
  and Walter Cunto.
\newblock Binary searching with nonuniform costs and its application to text
  retrieval.
\newblock {\em Algorithmica}, 27(2):145--169, 2000.
\newblock \href {https://doi.org/10.1007/s004530010010}
  {\path{doi:10.1007/s004530010010}}.

\bibitem{NesetrilM06}
Jaroslav Ne\v{s}et\v{r}il and Patrice~Ossona de~Mendez.
\newblock Tree-depth, subgraph coloring and homomorphism bounds.
\newblock {\em Eur. J. Comb.}, 27(6):1022--1041, 2006.
\newblock \href {https://doi.org/10.1016/j.ejc.2005.01.010}
  {\path{doi:10.1016/j.ejc.2005.01.010}}.

\bibitem{Onak}
Krzysztof Onak and Pawel Parys.
\newblock Generalization of binary search: Searching in trees and forest-like
  partial orders.
\newblock In {\em {FOCS} 2006}, pages 379--388, 2006.
\newblock \href {https://doi.org/10.1109/FOCS.2006.32}
  {\path{doi:10.1109/FOCS.2006.32}}.

\bibitem{Pelc02}
Andrzej Pelc.
\newblock Searching games with errors --- fifty years of coping with liars.
\newblock {\em Theor. Comput. Sci.}, 270(1-2):71--109, 2002.
\newblock \href {https://doi.org/10.1016/S0304-3975(01)00303-6}
  {\path{doi:10.1016/S0304-3975(01)00303-6}}.

\bibitem{AIModernApproach}
Stuart~J. Russell and Peter Norvig.
\newblock {\em Artificial Intelligence - {A} Modern Approach, Third
  International Edition}.
\newblock Pearson Education, 2010.

\bibitem{future-big-graphs}
Sherif Sakr, Angela Bonifati, Hannes Voigt, Alexandru Iosup, Khaled Ammar,
  Renzo Angles, Walid~G. Aref, Marcelo Arenas, Maciej Besta, Peter~A. Boncz,
  Khuzaima Daudjee, Emanuele~Della Valle, Stefania Dumbrava, Olaf Hartig,
  Bernhard Haslhofer, Tim Hegeman, Jan Hidders, Katja Hose, Adriana Iamnitchi,
  Vasiliki Kalavri, Hugo Kapp, Wim Martens, M.~Tamer {\"{O}}zsu, Eric Peukert,
  Stefan Plantikow, Mohamed Ragab, Matei Ripeanu, Semih Salihoglu, Christian
  Schulz, Petra Selmer, Juan~F. Sequeda, Joshua Shinavier, G{\'{a}}bor
  Sz{\'{a}}rnyas, Riccardo Tommasini, Antonino Tumeo, Alexandru Uta, Ana~Lucia
  Varbanescu, Hsiang{-}Yun Wu, Nikolay Yakovets, Da~Yan, and Eiko Yoneki.
\newblock The future is big graphs: a community view on graph processing
  systems.
\newblock {\em Commun. {ACM}}, 64(9):62--71, 2021.
\newblock \href {https://doi.org/10.1145/3434642} {\path{doi:10.1145/3434642}}.

\bibitem{node-rank-linear-Schaffer89}
Alejandro~A. Sch{\"{a}}ffer.
\newblock Optimal node ranking of trees in linear time.
\newblock {\em Inf. Process. Lett.}, 33(2):91--96, 1989.
\newblock \href {https://doi.org/10.1016/0020-0190(89)90161-0}
  {\path{doi:10.1016/0020-0190(89)90161-0}}.

\bibitem{application-VLSI-SenDG92}
Arunabha Sen, Haiyong Deng, and Sumanta Guha.
\newblock On a graph partition problem with application to {VLSI} layout.
\newblock {\em Inf. Process. Lett.}, 43(2):87--94, 1992.
\newblock \href {https://doi.org/10.1016/0020-0190(92)90017-P}
  {\path{doi:10.1016/0020-0190(92)90017-P}}.

\bibitem{ZhouKN96}
Xiao Zhou, Mohammod~Abul Kashem, and Takao Nishizeki.
\newblock Generalized edge-ranking of trees (extended abstract).
\newblock In {\em {WG} 1996}, pages 390--404, 1996.
\newblock \href {https://doi.org/10.1007/3-540-62559-3\_31}
  {\path{doi:10.1007/3-540-62559-3\_31}}.

\bibitem{ZhouNN95}
Xiao Zhou, Nobuaki Nagai, and Takao Nishizeki.
\newblock Generalized vertex-rankings of trees.
\newblock {\em Inf. Process. Lett.}, 56(6):321--328, 1995.
\newblock \href {https://doi.org/10.1016/0020-0190(95)00172-7}
  {\path{doi:10.1016/0020-0190(95)00172-7}}.

\end{thebibliography}

\end{document}